
\documentclass[preprint,12pt]{elsarticle}




\usepackage{enumitem}
\usepackage{amssymb}
\usepackage{amsthm}
\usepackage{amsmath,amsfonts,amssymb}
\usepackage{mathtools}
\usepackage{mathrsfs}
\usepackage{hyperref}
\usepackage{relsize}
\usepackage{xcolor}

\theoremstyle{plain}
\newtheorem{theorem}{Theorem}[section]
\newtheorem{lemma}[theorem]{Lemma}

\newtheorem{proposition}[theorem]{Proposition}

\newtheorem{definition}[theorem]{Definition}

\newtheorem{assumption}[theorem]{Assumption}

\theoremstyle{remark}
\newtheorem{remark}{Remark}

\usepackage{times}

\usepackage{lineno}

\journal{European Journal of Control}

\begin{document}

\begin{frontmatter}


\title{Characterization of full-scale denial-of-service\tnoteref{thanks}}
\tnotetext[thanks]{This work was partially supported by the SERB project MTR/2021/000590 under MATRICS scheme.}
\author[label1]{Anindya Basu\corref{cor1}}
\affiliation[label1]{organization={Department of Electronics and Electrical Engineering, Indian Institute of Technology Guwahati},
            addressline={North Guwahati},
            postcode={781039},
            state={Assam},
            country={India.}}
\ead{anindya18@iitg.ac.in}
\cortext[cor1]{Corresponding Author}
\author[label1]{Indrani Kar}
\ead{indranik@iitg.ac.in}



\begin{abstract}
This article investigates the resilient control problem for Cyber-Physical Systems (CPSs) with multiple sensors, where both sides of the communication channels are affected by Denial-of-Service (DoS) attacks. While previous work focused on characterizing Multi-Channel DoS (MCDoS), this study emphasizes the characterization of Full-Scale DoS (FSDoS). First, a partial observer technique is proposed to address the MCDoS condition. Then, an event-triggered control strategy is designed to handle FSDoS. Finally, the frequency and duration of FSDoS are analyzed to ensure the Input-to-State Stability (ISS) of the closed-loop system.
\end{abstract}



\begin{keyword}
Cyber-physical systems\sep denial-of-service\sep event-triggered control\sep switched systems.
\end{keyword}

\end{frontmatter}


\section{Introduction}\label{intro}
CPSs are the physical and digital components present in an integrated system that is controlled or monitored through computer-based algorithms. Typically, embedded computers with networks monitor and control physical processes through a feedback loop. In the feedback loop, computations are affected by the physical processes and vice versa. These systems offer more economic and societal reliability over the previously established state-of-art systems, due to which these systems gained significant interest and investment to advance the technology \cite{wang2016recent}. The CPSs are used to develop networks for the large integrated area like telecommunication \cite{petriu2000architecture}, water distribution \cite{taormina2017characterizing}, transportation \cite{zhao2021resilient}, power generation \cite{pasqualetti2013attack} etc. Many system failures have been reported over the last two decades, e.g., the Maroochy Shire sewage system failure in $2000$ \cite{slay2007lessons}, the SQL Slammer worm attack on the Davis$-$Besse nuclear plant in January $2003$ \cite{kuvshinkova2003sql}, the Indian grid failure in $2012$ \cite{halder2013comprehensive}, to name a few. These failures show the need to modify the security system effectively, which brings the CPSs security in the scenario. The use of inexpensive sensors and communication equipment may cause a huge amount of cyber threats, which frequently compromise high-level security. Cyber-physical attacks may come from a national or international level system, which could affect the environment, the economy and even human lives. These attacks must be detected, and an appropriate control mechanism should be adopted to counter the risks for the safe and secure operation of CPSs. Numerous applications of CPSs are reported in the literature, which includes power system networks \cite{rinaldi2021adaptive}, event-based control \cite{lunze2010state}, controller design for smart grid systems \cite{farraj2016cyber}, remote state estimation \cite{he2018event, dahal2024guaranteed}, event-triggered control \cite{tabuada2007event, borgers2014event, al2021event}, etc. 

The deception and DoS attack against control in CPSs were introduced, and their attack techniques were detailed in \cite{amin2009safe}. Switching attacks in power system networks were examined in \cite{liu2013framework}. Malicious attacks \cite{d2016resilient} come in a variety of forms, including stealth attack \cite{sui2020vulnerability}, replay attack \cite{naha2022sequential}, covert attack \cite{lin2021synthesis}, false data injection attack \cite{kumar2024adaptive} to mention a few.

De Persis and Tesi in \cite{de2015input} offered an attack model where the frequency and duration of DoS attacks are limited in response to the basic difficulty of simulating DoS. Additionally, an ISS investigation was conducted with DoS present. They extended the concept for non-linear systems as well \cite{de2014resilient}. Resilient control \cite{feng2017resilient}, \cite{ramasubramanian2022resilience}, event-triggered control \cite{coutinho2023switching}, \cite{chatterjee2023fixed}, observer-based event-triggered control \cite{hu2019observer}, \cite{lu2019observer}, hybrid event-based control for switched system \cite{ren2023hybrid}, networked control \cite{de2016networked}, adaptive event-triggered control for network system \cite{zhao2021event}, observer-based guaranteed cost control \cite{wang2019observer}, optimal control \cite{lu2023jointly}, co-located architectural design \cite{feng2017networked}, consensus control for multi-agent systems \cite{feng2022dynamic}, \cite{guo2023secured}, \cite{gong2024observer}, and event-triggered control for network switched system \cite{xie2024event} are some other results based on this paradigm. These methods can help to stabilise CPSs with a single sensor under DoS, however they are unable to expand the idea to multiple transmission channels under DoS. Lu and Yang provided a unique control approach in \cite{lu2017input}, which looks at the stability of systems with multiple sensors under DoS and removes old data information. In \cite{lu2019resilient}, they extended their previous work, where a set of partial observers was designed for several sensors. Additional controls were developed for multiple transmission channels under DoS, including a sliding mode control \cite{li2022resilient}, T-S fuzzy control \cite{wu2021iss}, security control for two-time-scale CPSs \cite{ma2021security}, and secure control for non-linear system \cite{tahoun2022secure}. 


Early studies on multiple transmission channels under DoS attacks often assumed that only the sensor-to-controller channel was affected. However, since both sensor and controller data are transmitted over the network, DoS attacks can also occur between the controller and the actuator. This article addresses the presence of DoS attacks between the controller and the actuator in systems with multiple sensors under the DoS scheme. In \cite{lu2017input}, Lu and Yang characterized the DoS of each transmission channel individually. Subsequent works, including \cite{lu2019resilient}, \cite{wu2021iss}, \cite{li2022resilient}, \cite{tahoun2022secure}, \cite{ma2021security}, and others, adopted the same characterization approach proposed in \cite{lu2017input}. However, due to the presence of DoS between the controller and the actuator, the characterization protocol outlined in \cite{lu2017input} is insufficient. Consequently, alternative characterization procedures for Multi-Channel DoS (MCDoS) and Full-Scale DoS (FSDoS) are required. Building on the authors' previous work on MCDoS characterization \cite{basu2024characterization}, this article develops a novel framework for FSDoS characterization to address these challenges.


The notable contribution of this article is the development of an observer-based optimal dynamic event-triggered control to maintain the ISS in the presence of multiple sensors even when both sides of the controller are affected by DoS. At the same time, the MCDoS is characterized, and the result is compared with that of the static counterpart. The term MCDoS was introduced by the authors in \cite{basu2022observer}. However, proper characterization was not provided in \cite{basu2022observer}. The proper characterization of MCDoS is provided in the article \cite{basu2024characterization}. We find that the observer gain and DETM parameters can be codesigned. We also obtain optimal values for DETM parameters and observer gains to allow maximum MCDoS changing frequency.

Notation$:$ This article uses the notations $\mathbb{R}$, $\mathbb{R}^n$, and $\mathbb{R}^{n\times m}$ to represent the set of real numbers,  the Euclidean space of $n$ dimensional real vectors, and the space of $n \times m$ real matrices, respectively. The set of natural numbers is indicated by $\mathbb{N}$. The symbol $||\cdot||$ represents a vector or matrix's Euclidean norm. $\mathbb{R}_{> p}$ ($\mathbb{R}_{\geq p}$) indicates the set of real numbers greater than (greater than equal to) $p$ given $p\in \mathbb{R}$. $I$ and $0$ represent the identity matrix and null matrix, respectively, with the suitable dimension. When two sets $M$ and $N$ are given, the relative complement of $M$ in $N$ is $N \setminus M$; that is, the set of all elements in $N$ that are not in $M$. Given a matrix $H$,  $H> (\geq) 0$ implies that $H$ is a positive definite (positive semi-definite) matrix. The smallest and largest eigenvalues of matrix $H$ are denoted by $\lambda_{\min}(H)$ and $\lambda_{\max}(H)$, respectively. $\text{He}(H)=H+H^\top,$ where the transpose of the $H$ matrix is denoted by $H^\top$. A symmetric matrix is represented in $\begin{bmatrix}
    A & B \\ * & C
\end{bmatrix}$ form. $\text{diag} \left \{A,B \right \} \coloneqq \begin{bmatrix}
    A & 0 \\ 0 & B
\end{bmatrix},$ where $A,B$ and $0$'s are matrices or elements with proper dimensions. $\{p,q\}>r$ means $p>r$ and $q>r$. $|F|$ is the length of the interval $F$. Given a measurable time function $g:\mathbb{R}_{\geq 0} \rightarrow \mathbb{R}^{n_s}$ and a time interval $[0,t)$ we denote the $\mathcal{L}_\infty$ norm of $g(\cdot)$ on $[0,t)$ by ${\|g_t\|}_\infty \coloneqq \text{ess}\sup_{s\in [0,t)}\|g(s)\|$.
\section{Preliminaries}\label{sec:framework}
\subsection{System description}
Consider the following linear time-invariant system
\begin{align}
		\dot{x}_p(t)&=Ax_p(t)+Bu(t)+w(t), \notag\\
		y_i(t)&=C_ix_p(t), \label{eq:x_dot}
\end{align}
where $x_p(t)\in \mathbb{R}^{n_p}$ is the system state, $u(t)\in \mathbb{R}^{n_u}$ is the input vector, $w(t)\in \mathbb{R}^{n_p}$ is the unknown input disturbance, and $y_i(t)\in \mathbb{R}^{n_{yi}}$ is the sensor output of $i$\textsuperscript{th} transmission channel $(i \in \{1,2,\cdots,n_s\})$, where $n_s$ is the total number of transmission channels. $A$, $B$, and $C_i$ are the matrices with proper dimensions. Both $(A, B)$ and $(A,[C^\top_1, \cdots, C^\top_{n_s}])$ are taken to be controllable and observable, respectively. In this article, we have assumed that all the states are observable in \eqref{eq:x_dot}. The set of time instants at which the control action should be updated is denoted by the symbol $\{t_k\}_{k\in \mathbb{N}_0}$, in which $t_0 \coloneqq 0$. The control law is 
\begin{equation}
u_\text{ideal}(t)=-Kx_p(t_k)
\end{equation}
for all $t \in [t_k, t_{k+1})$ in the ideal case when data can be delivered and received at any chosen moment.
\subsection{Multi-Channel DoS and Full-Scale DoS}
When a number of transmission channels from the sensors to the controller are experiencing DoS yet may still be able to maintain closed-loop stability, this is referred to as MCDoS. FSDoS happens when the closed-loop stability of the system is compromised by either DoS affecting the transmission line between the controller and plant actuator or MCDoS being extended for all the channels. The total number of transmission channels is denoted by $n_s$. For $i$\textsuperscript{th} transmission channel, let $\{d_{i_n}\}_{n\in\mathbb{N}}$ be the series of DoS off/on transitions, where $d_{i_1}\geq 0$ indicates the moments when DoS shows a shift from zero (when the transmission is possible) to one (when the transmission is not possible). Thus
\begin{equation}\label{eq:M_n}
	D_{i_n}\coloneqq\{d_{i_n}\}\cup[d_{i_n},d_{i_n}+\tau_{i_n})
\end{equation}
shows the $n$\textsuperscript{th} DoS time-interval in the $i$\textsuperscript{th} transmission channel, where communication is prohibited and has a duration of $\tau_{i_n} \in \mathbb{R}_{\geq 0}$. If $\tau_{i_n} = 0,$ then at time $d_{i_n}$, the $n$\textsuperscript{th} DoS in $i$\textsuperscript{th} transmission channel takes the form of a single impulse. Given $t, \tau\in \mathbb{R}_{\geq0}$ with $t\geq \tau$, let
\begin{align}	\Omega_i(\tau,t)\coloneqq\bigcup_{n\in\mathbb{N}}D_{i_n}\bigcap[\tau,t]. \label{eq:Omega_i} 
\end{align}
To put it in another way, $\Omega_i(\tau,t)$ denotes the time interval sets for each interval $[\tau,t]$ when the transmission is refused in $i$\textsuperscript{th} transmission channel. Hence, for given $t, \tau\in \mathbb{R}_{\geq0}$ with $t\geq \tau$
\begin{align}
    \Omega(\tau,t) & \coloneqq \bigcap_{i \in \{1, \cdots, n_s\} } \Omega_i(\tau,t) \label{eq:Omega}
    \\ \Upsilon(\tau,t) & \coloneqq[\tau,t]\setminus \Omega(\tau,t) \label{eq:Upsilon}
\end{align}
for each period in $[\tau,t]$, $\Omega(\tau,t)$ and $\Upsilon(\tau,t)$ denote the times when FSDoS is present and absent, respectively. It should be noted that MCDoS is feasible in $\Upsilon(\tau,t)$.
\section{Stabilizing control update policy} \label{sec:control_update_policy}
\subsection{Control architecture}\label{sec:partial_observer}
A switched Luenberger-based observer inspired by \cite{basu2024characterization}, enforced to handle MCDoS, is given by
\begin{equation} \label{eq:observer}
    \dot{x}_e(t)=Ax_e(t)+Bu(t)+ L_\sigma (\hat{y}_\sigma(t)-C_\sigma x_e(t)),
\end{equation}
where 
\begin{align}
    & \sigma \coloneqq
	\begin{cases}
		\min \{i : \|\hat{y}_i(t)\| >v\}, \! &\text{if}\;\|\hat{y}_{\sigma^-}(t)\| \leq v \wedge t_k \notin \Omega(0,t) \\
		\sigma^-,   \! &\text{otherwise,}
	\end{cases} \label{eq:sigma} \\
    & \hat{y}_i(t) \coloneqq y_i(t_{k(t)}),\quad \forall t \in [t_k,t_{k+1}), \label{eq:y_i} \\
    & k(t) \coloneqq
	\begin{cases}
		0, \! &\text{if } \Upsilon(0,t)=\emptyset \\
		\text{sup}\{k\in\mathbb{N}:t_k\in\Upsilon(0,t)\},  &\text{otherwise.}
	\end{cases} \label{eq:k(t)}
\end{align}
in $[t_{k-1},t_k)$ time period, $\sigma^-$ denotes the given channel index, and $y_{\sigma^-}(t)$ denotes the $\sigma^-$ channel output. $\sigma=0$ at $t_k=0$ as $0 \in \Omega(0,t)$.
$v$ is a small positive number. The details about $v$ are discussed in \cite{basu2024characterization}.
As a result, the controlled input given to the process can be described by
\begin{align}
	u(t)= \begin{cases}
	    0, & \text{if } \Upsilon(0,t)= \emptyset \\
            -K \hat{x}_{e}(t), & \text{otherwise,}
	\end{cases} \label{eq:u(t)}
\end{align}
$\forall t\in\mathbb{R}_{\geq 0}$, where
\begin{align}
	\hat{x}_{e}(t)=x_{e}(t_{k(t)}),\quad \forall t \in [t_k,t_{k+1}). \label{eq:x_o_hat}
\end{align}
We define the state error of the process output that is accessible to the observer as well as the observer's state error as
\begin{align}
        \xi_\sigma (t) & \coloneqq \hat{y}_{\sigma}(t)-y_\sigma (t),  \label{eq:xi_sigma_1} \\
	\xi_e(t) & \coloneqq \hat{x}_e(t)-x_e(t), \label{eq:xi_e_1} 	
\end{align} 
respectively.
The resultant switched observer-based controller with the system is represented by
\begin{align}
	\dot{x}_p(t) &= (A-BK)x_p(t)+BK\tilde{x}(t)- BK \xi_e (t)+w(t), \notag\\
	\dot{\tilde{x}}(t) &= (A- L_\sigma C_\sigma )\tilde{x}(t) - L_\sigma \xi_\sigma(t)+w(t), \label{eq:system}
\end{align}
where 
\begin{align}
    \tilde{x}(t) \coloneqq x_p(t)-x_e(t). \label{eq:x_tilde}
\end{align}
The ETM is defined as
\begin{align}
    \xi^\top_\sigma(t) \xi_\sigma(t) + \xi^\top_e(t) \xi_e(t) \leq \psi_1 \left(y^\top_\sigma (t) y_\sigma (t) +\|w_t\|^2_\infty \right ) + \psi_2 x^\top_e (t) x_e(t), \label{eq:update_rule_1} 
\end{align}
where $\psi_1,\psi_2\in \mathbb{R}_{>0}$. Since
  \eqref{eq:update_rule_1} depends on the supremum norm of the
  disturbance $w$, it will not be directly used as a control update
  rule. Instead, alternate update rules ensure that
  \eqref{eq:update_rule_1} is satisfied in all cases, will be
  implemented. A guide to determining suitable $\psi_1$ and $\psi_2$
  such that the closed-loop system \eqref{eq:system} is ISS under the
  SETM \eqref{eq:update_rule_1} is provided by the following
  result. For the stability analysis, we take the standard Lyapunov
  argument. We choose
\begin{align}
     V_\sigma(x_p(t),\tilde{x}(t)) = x_p^\top(t) P_{p_\sigma} x_p(t) + \tilde{x}^\top(t)P_{e_\sigma}\tilde{x}(t) \label{eq:V_1}
\end{align}
as Lyapunov candidate where $P_{p_\sigma}$ and $P_{e_\sigma}$ are positive definite matrices with appropriate dimensions.

\subsection{Switching mode analysis}
The idea here is to partition the time axis into intervals according to the observer-switching technique, as explained in Section \ref{sec:partial_observer}. The characterisation of these intervals is indispensable to the Lyapunov-based analysis.
The interval $[\tau,t]$ is the disjoint union of $\mathcal{O}_{\sigma}(\tau,t),_{\sigma \in \{1,\cdots,n_s\}}$ for any $\tau, t \in \mathbb{R}_{\geq 0}$, with $\tau \leq t$. The union of sub-interval of $[\tau,t]$ occurs when the system is operated in the $\sigma$\textsuperscript{th} switching mode. In particular, there are two non-negative and positive real number sequences $\{m_{\sigma_f}\}_{\sigma \in \{1,\cdots,n_s\}, f \in \mathbb{N}} \in \Bar{\mathbb{S}}_\sigma$, $\{v_{\sigma_f} \}_{\sigma \in \{1,\cdots,n_s\}, f \in \mathbb{N}}$ such that
\begin{equation}\label{eq:Omicron_sigma}
   \mathcal{O}_{\sigma}(\tau,t) \coloneqq \bigcup_{f\in \mathbb{N}} \mathcal{M}_{{\sigma}_f} \bigcap [\tau,t],
\end{equation}
where
\begin{align}
   \mathcal{M}_{\sigma_f} \coloneqq \{m_{\sigma_f}\} \cup [m_{\sigma_f}, m_{\sigma_f}+v_{\sigma_f} ),
\end{align}
$m_{0_0} \coloneqq 0$, and
$m_{{\sigma}_f}+v_{{\sigma}_f} \in \Bar{\mathbb{S}}_{\sigma}$. The
 times at which switching occurs are defined by the following $:$
\begin{align}
   \iota_g \coloneqq \begin{cases}
         \text{Undefined}, \! &\text{if}\; \Bar{\mathbb{S}}_{\sigma} \setminus [0,\iota_{g-1}] =\emptyset, \\
     \inf \left \{ \Bar{\mathbb{S}}_{\sigma} \setminus [0,\iota_{g-1}] \right \}, \! &\text{otherwise,} \end{cases}
\end{align}
for all $g \in \mathbb{N},$ where $\iota_0 \coloneqq 0$.
From Definition \ref{def:2} (which is discussed later in Subsection \ref{sec:control_objectives}), we can write,
\begin{align}
\iota_{g+1}-\iota_g \geq \Delta_k \geq \underline{\Delta}. \label{eq:zeno_1}
\end{align}

\subsection{Assumption: time constrained MCDoS}
A phenomenon known as MCDoS occurs when DoS can occur in any number of transmission channels while maintaining the system's closed-loop stability. A DoS that abruptly arises or vanishes in a single transmission channel without impairing the stability of the closed-loop system is referred to as a ``changing of MCDoS."
It is necessary to restrict the MCDoS's fluctuating frequency. It seems to be obvious that in order to ensure stability, the frequency of MCDoS changes needs to be sufficiently low as compared with the minimum sampling rate. Details about this can be found in \cite{basu2024characterization}. Given $t,\tau \in \mathbb{R}_{\geq 0}$ and $t\geq \tau$, let $l(\tau,t)$ represent the number of MCDoS changing transitions that occur on the interval $[\tau,t)$.
\begin{assumption} \label{as1}
    \cite{basu2024characterization} There exist constants $\varkappa \in \mathbb{R}_{\geq 0}$ and $\tau_D \in \mathbb{R}_{>\underline{\Delta}}$ such that 
    \begin{align}
        l(\tau,t) \leq \varkappa +\frac{t-\tau}{\tau_D} \label{eq:l}
    \end{align}
    $\forall \tau, t \in \mathbb{R}_{\geq 0}$ with $t \geq \tau$.
\end{assumption}
\subsection{Control Objectives} \label{sec:control_objectives}
The primary objective is to develop a sampling logic with a finite sampling rate to achieve robustness against MCDoS and FSDoS while MCDoS is characterized. The following definitions are aligned with the goals that have been presented.
\begin{definition} \label{def:1}
    \cite{sontag2008input} Let \eqref{eq:system} be the overall system that results from (\ref{eq:x_dot}) and \eqref{eq:observer} with a control input \eqref{eq:u(t)}. If a $K_\infty$ function $\gamma$ and a $KL$ function $\beta$ exist for each $w\in \mathbb{R}^{n_p}$ and $x(t_0)\in \mathbb{R}^{2n_p}$, such that
    \begin{align}
        \|x(t)\|\leq \beta(\|x(t_0)\|,t)+\gamma(\|w_t\|_\infty),\label{eq:ISS}
    \end{align}
    where
    \begin{align}
		x(t)\coloneqq \begin{bmatrix}
			x^\top_p(t) & \tilde{x}^\top(t)
		\end{bmatrix}^\top,\label{eq:x(t)}
   \end{align}
    $\forall t\in\mathbb{R}_{\geq t_0}$, then the system (\ref{eq:ISS}) is said to be ISS. Throughout this article, $t_0$ is considered to be zero.
\end{definition}
\begin{definition}\label{def:2}
    \cite{tabuada2007event} The sampling interval $\{t_k\}_{k\in \mathbb{N}}$, fulfills
    \begin{align}
        \Delta_k\coloneqq t_{k+1}-t_k\geq \underline{\Delta} \label{eq:Delta_k}
    \end{align}
    $\forall k\in \mathbb{N}$. $\underline{\Delta}$ is the lower bound of the sampling rate. From this point on, the possibility is that the network can transmit data at the sampling rate affected by $\underline{\Delta}$. 
\end{definition}
\begin{lemma}\label{lm:trajectory}
    \cite{basu2024characterization} If $\frac{d}{dt}V_\sigma(x(t)) \leq -x^\top(t)\Gamma_{1_\sigma} x(t)+(\varepsilon_{1_\sigma}+ \varepsilon_{2_\sigma} + \psi_1)f^2(t)$ exists where 
    \begin{align}
        f(t)\coloneqq \sup\{\|w(t)\|,\|w_t\|_\infty\}, \label{eq:f}
    \end{align} 
    $\Gamma_{1_\sigma}>0, \forall \sigma \in \{1, \cdots, n_s\},$ and $\{\varepsilon_{1_\sigma}, \varepsilon_{2_\sigma}, \psi_1\} \in \mathbb{R}_{>0}$, then $\forall t\in \mathbb{R}_{\geq 0}$ we have
    \begin{align}
         &\|x(t)\|^2 \leq \prod_{\substack{p \in \mathbb{N}_0;\\ {0 \leq \iota_{g(t)-p} \leq t}}} \left ( \frac{\max\{\lambda_{\max} (P_{p_{\sigma(\iota_{g(t)-p}^+)}}),\lambda_{\max}(P_{e_{\sigma(\iota_{g(t)-p}^+)}})\}}{\min\{\lambda_{\min}(P_{p_{\sigma(\iota_{g(t)-p}^+)}}),\lambda_{\min}(P_{e_{\sigma(\iota_{g(t)-p}^+)}})\}} \right ) \notag \\ &e^{-\left (\sum_{i=1}^{n_s} \omega_{1_i} | \mathcal{O}_i (\iota_{g(t)-p},t)|\right )} \|x(\iota_0)\|^2 + \Biggl [ \frac{\varepsilon_{1_{\sigma(\iota^+_{g(t)})}}+ \varepsilon_{2_{\sigma(\iota^+_{g(t)})}}+ \psi_1}{\min\{\lambda_{\min}(P_{p_{\sigma(\iota_{g(t)}^+)}}),\lambda_{\min}(P_{e_{\sigma(\iota_{g(t)}^+)}})\}\lambda_{\min}(\Gamma_{1_{\sigma(\iota_{g(t)}^+)}})} \notag \\ 
         &+ \mathlarger{\mathlarger{\sum}}_{\substack{l \in \mathbb{N};\\ 0 \leq \iota_{g(t)-l} \leq t}} \Biggl \{ \frac{\varepsilon_{1_{\sigma(\iota^+_{g(t)-l})}}+ \varepsilon_{2_{\sigma(\iota^+_{g(t)-l})}}+ \psi_1} { \min\{\lambda_{\min}(P_{p_{\sigma(\iota_{g(t)-l}^+)}}),\lambda_{\min}(P_{e_{\sigma(\iota_{g(t)-l}^+)}})\} \lambda_{\min}(\Gamma_{1_{\sigma(\iota_{g(t)-l}^+)}})} \notag\\ 
        & \prod_{\substack{j \in \mathbb{N}_0; \\ \iota_{g(t)-l}< \iota_{g(t)-j} \leq t}} 
        \left ( \frac{\max\{\lambda_{\max} (P_{p_{\sigma(\iota^+_{g(t)-j})}}),\lambda_{\max}(P_{e_{\sigma(\iota^+_{g(t)-j})}})\}} {\min\{\lambda_{\min} (P_{p_{\sigma(\iota^+_{g(t)-(j+1)})}}),\lambda_{\min}(P_{e_{\sigma(\iota^+_{g(t)-(j+1)})}})\}} \right ) \notag \\
        & e^{-\left ( \sum_{i=1}^{n_s} \omega_{1_i} |\mathcal{O}_i(\iota_{g(t)-(l-1)},t)| \right) }  \Biggr \} \Biggr]  {\|w_t\|}^2_{\infty}, \label{eq:trajectory_1}
	\end{align}
     where $\iota_0=0,$
    \begin{align}
        \omega_{1_\sigma} \coloneqq \frac{\lambda_{\min}(\Gamma_{1_\sigma})} { \max\{\lambda_{\max} (P_{p_\sigma}),\lambda_{\max}(P_{e_\sigma})\} }, \label{eq:omega_1}
    \end{align}
    and
    \begin{align}
        g(t) \coloneqq \begin{cases}
        0, \; &\text{if} \; \Bar{\mathbb{S}}_\sigma \cap [0,t]=\emptyset, \\
        \sup \left \{ g \in \mathbb{N} : \iota_g \in \Bar{\mathbb{S}}_\sigma\cap [0,t] \right \}, \; & \text{otherwise.}
        \end{cases} \label{eq:g(t)}
    \end{align}
\end{lemma}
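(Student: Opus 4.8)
\emph{Proof sketch.} I would establish \eqref{eq:trajectory_1} by a multiple-Lyapunov-function argument on the mode partition $\{\mathcal{O}_i\}_{i\in\{1,\dots,n_s\}}$, carried out in three stages: a decay estimate for $V_\sigma$ on a single interval of constant mode; conversion of that estimate into a bound on $\|x\|^2$ that accounts for the jump of the Lyapunov function at a switching instant $\iota_g$; and iteration over the finitely many switches in $[0,t]$ with the accumulated exponents regrouped by mode.

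\textit{Single interval.} Fix $g$ and let $\sigma=\sigma(\iota_{g-1}^+)$ be the mode active on $[\iota_{g-1},\iota_g)$. From $V_\sigma(x)\leq\max\{\lambda_{\max}(P_{p_\sigma}),\lambda_{\max}(P_{e_\sigma})\}\|x\|^2$, $x^\top\Gamma_{1_\sigma}x\geq\lambda_{\min}(\Gamma_{1_\sigma})\|x\|^2$, and the definition \eqref{eq:omega_1} of $\omega_{1_\sigma}$, the hypothesis of the lemma gives $\tfrac{d}{dt}V_\sigma\leq-\omega_{1_\sigma}V_\sigma+(\varepsilon_{1_\sigma}+\varepsilon_{2_\sigma}+\psi_1)f^2(t)$. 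Since $f(s)\leq\|w_t\|_\infty$ for a.e.\ $s\in[0,t)$, the comparison lemma yields, for $t\in[\iota_{g-1},\iota_g)$,
\[
V_\sigma(t)\leq e^{-\omega_{1_\sigma}(t-\iota_{g-1})}V_\sigma(\iota_{g-1}^+)+\frac{\varepsilon_{1_\sigma}+\varepsilon_{2_\sigma}+\psi_1}{\omega_{1_\sigma}}\|w_t\|_\infty^2 .
\]
Dividing through by $\min\{\lambda_{\min}(P_{p_\sigma}),\lambda_{\min}(P_{e_\sigma})\}$ and bounding $V_\sigma(\iota_{g-1}^+)\leq\max\{\lambda_{\max}(P_{p_\sigma}),\lambda_{\max}(P_{e_\sigma})\}\|x(\iota_{g-1})\|^2$ — $x$ being continuous at $\iota_{g-1}$, with only $V$ jumping there — turns this into a one-interval bound $\|x(t)\|^2\leq\kappa_\sigma\,e^{-\omega_{1_\sigma}(t-\iota_{g-1})}\|x(\iota_{g-1})\|^2+b_\sigma\|w_t\|_\infty^2$, with $\kappa_\sigma$ the condition-number ratio that appears in the products in \eqref{eq:trajectory_1} and $b_\sigma$ the corresponding disturbance coefficient.

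\textit{Iteration.} I would then chain this estimate from $\iota_0=0$ up to the last switch $\iota_{g(t)}$ before $t$, with $g(t)$ as in \eqref{eq:g(t)}; the number of switches in $[0,t]$ is finite by the dwell-time lower bound \eqref{eq:zeno_1}. Telescoping the one-interval bounds produces a leading term of the form $\|x(0)\|^2$ times a product of the condition-number ratios $\kappa_{\sigma(\iota_{g(t)-j}^+)}$ over the switches and a product of the exponentials $e^{-\omega_{1_{\sigma_m}}\ell_m}$ over the constant-mode subintervals of lengths $\ell_m$ and modes $\sigma_m$, plus a finite sum of disturbance contributions, one per subinterval, each propagated forward through the later condition-number jumps and exponential decays (the one from the current interval carrying no product prefactor, which gives the first disturbance term of \eqref{eq:trajectory_1}). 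The remaining step is the regrouping identity $\prod_m e^{-\omega_{1_{\sigma_m}}\ell_m}=\exp\!\bigl(-\sum_{i=1}^{n_s}\omega_{1_i}\sum_{m:\,\sigma_m=i}\ell_m\bigr)=\exp\!\bigl(-\sum_{i=1}^{n_s}\omega_{1_i}|\mathcal{O}_i(\cdot,t)|\bigr)$, which holds by the definition \eqref{eq:Omicron_sigma} of $\mathcal{O}_i$; applying it inside each product yields \eqref{eq:trajectory_1}.

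\textit{Main obstacle.} The analytic content — the comparison lemma and the Rayleigh bounds — is routine; the real work is the index bookkeeping in the iteration: matching each condition-number ratio and each decay exponent to the term it multiplies, aligning the product ranges $\{j:\iota_{g(t)-l}<\iota_{g(t)-j}\leq t\}$ with the subintervals occurring after the $l$-th-from-last switch, and verifying that the mode-regrouping of exponentials is valid on each subproduct. Additional care is needed at the two endpoints — the first subinterval starting at $\iota_0=0$ and the final, possibly incomplete, subinterval containing $t$ — and in tracking how $\omega_{1_\sigma}$ and the Rayleigh constants combine so that the disturbance denominators come out as in \eqref{eq:trajectory_1}; note finally that $\Gamma_{1_\sigma}>0$ makes every $\omega_{1_\sigma}>0$, so each exponential genuinely contributes decay.
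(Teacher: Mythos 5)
Your proposal follows essentially the same route the paper takes: the paper defers the proof of this lemma to \cite{basu2024characterization}, but the identical machinery appears in its Appendix (the dissipation inequality \eqref{eq:V_dot_5}, the comparison-lemma step \eqref{eq:V_3} with $\nu_{1_\sigma}$, the Rayleigh-quotient conversion of $V_\sigma$ to $\|x\|^2$ at each switching instant producing the condition-number ratios, and the iteration/regrouping of exponentials into $e^{-\sum_i\omega_{1_i}|\mathcal{O}_i(\cdot,t)|}$ as in \eqref{eq:trajectory_2} and the FSDoS analogues \eqref{eq:trajectory_fsdos_1}--\eqref{eq:trajectory_fsdos_13}). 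Your sketch is correct in approach and matches the paper's argument.
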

\begin{lemma}\label{lm:convergence}
    \cite{basu2024characterization} Under the Assumption \ref{as1} if $\frac{d}{dt}V_\sigma(x(t)) \leq -x^\top(t)\Gamma_{1_\sigma} x(t)+(\varepsilon_{1_\sigma}+ \varepsilon_{2_\sigma} +\psi_1) f^2(t)$ exists where $\Gamma_{1_\sigma}>0,$ $\forall \sigma \in \{1, \cdots, n_s\},$ $\{\varepsilon_{1_\sigma}, \varepsilon_{2_\sigma}\} \in \mathbb{R}_{>0}$ the sequences
    \begin{align}
        \prod_{\substack{p \in \mathbb{N}_0;\\ {0 \leq \iota_{g(t)-p} \leq t}}} \left ( \frac{\max\{\lambda_{\max} (P_{p_{\sigma(\iota_{g(t)-p}^+)}}),\lambda_{\max}(P_{e_{\sigma(\iota_{g(t)-p}^+)}})\}}{\min\{\lambda_{\min}(P_{p_{\sigma(\iota_{g(t)-p}^+)}}),\lambda_{\min}(P_{e_{\sigma(\iota_{g(t)-p}^+)}})\}} \right ) e^{-\left (\sum_{i=1}^{n_s} \omega_{1_i} | \mathcal{O}_i (\iota_{g(t)-p},t)|\right )} \label{eq:product} 
    \end{align}
    and
    \begin{align}
        &\mathlarger{\mathlarger{\sum}}_{\substack{l \in \mathbb{N};\\ 0 \leq \iota_{g(t)-l} \leq t}} \Biggl \{ \frac{\varepsilon_{1_{\sigma(\iota^+_{g(t)-l})}}+ \varepsilon_{2_{\sigma(\iota^+_{g(t)-l})}} + \psi_1} { \min\{\lambda_{\min}(P_{p_{\sigma(\iota_{g(t)-l}^+)}}),\lambda_{\min}(P_{e_{\sigma(\iota_{g(t)-l}^+)}})\} \lambda_{\min}(\Gamma_{1_{\sigma(\iota_{g(t)-l}^+)}})} \notag \\
        & \prod_{\substack{j \in \mathbb{N}_0; \\ \iota_{g(t)-l}< \iota_{g(t)-j} \leq t}} \left ( \frac{\max\{\lambda_{\max} (P_{p_{\sigma(\iota^+_{g(t)-j})}}),\lambda_{\max}(P_{e_{\sigma(\iota^+_{g(t)-j})}})\}} {\min\{\lambda_{\min} (P_{p_{\sigma(\iota^+_{g(t)-(j+1)})}}),\lambda_{\min}(P_{e_{\sigma(\iota^+_{g(t)-(j+1)})}})\}} \right ) \notag \\ 
        & e^{-\left ( \sum_{i=1}^{n_s} \omega_{1_i} |\mathcal{O}_i(\iota_{g(t)-(l-1)},t)| \right) }  \Biggr \} \label{eq:sum}
    \end{align}
	are convergent in nature $\forall \omega_{1_\sigma}\in \mathbb{R}_{>0}$ if and only if 
    \begin{align}
        \tau_D>\max\Biggl\{\underline{\Delta}, \max_{\forall \sigma \in \{1,\cdots,n_s\}}& \frac{\max\{\lambda_{\max}(P_{p_\sigma}),\lambda_{\max}(P_{e_\sigma})\}}{\lambda_{\min}(\Gamma_{1_\sigma})} \notag \\ &\ln \left \{ \frac{\max\{\lambda_{\max} (P_{p_\sigma}),\lambda_{\max}(P_{e_\sigma})\}} {\min\{\lambda_{\min}(P_{p_\sigma}),\lambda_{\min}(P_{e_\sigma})\}} \right\} \Biggr\} \label{eq:convergance_1}
    \end{align}
    and
    \begin{align}
        \varkappa \leq 1-\frac{\underline{\Delta}}{\tau_D}, \label{eq:varkappa}
    \end{align}
    where $\underline{\Delta}$ is the lower bound of the sampling rate and $f(t)$ and $g(t)$ are mentioned in \eqref{eq:f} and \eqref{eq:g(t)} respectively.
\end{lemma}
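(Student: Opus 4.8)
The plan is to read both \eqref{eq:product} and \eqref{eq:sum} as expressions assembled from two kinds of elementary factors: at every mode-switching instant $\iota_g$ a ``jump'' factor $\mu_\sigma\coloneqq\frac{\max\{\lambda_{\max}(P_{p_\sigma}),\lambda_{\max}(P_{e_\sigma})\}}{\min\{\lambda_{\min}(P_{p_\sigma}),\lambda_{\min}(P_{e_\sigma})\}}\ge 1$, and, over every sub-interval on which the observer dwells in mode $\sigma$, a contraction $e^{-\omega_{1_\sigma}\tau}$ with $\tau$ the length of that sub-interval and $\omega_{1_\sigma}$ as in \eqref{eq:omega_1}. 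The idea is to show that a generic partial product telescopes into a product of such factors over consecutive modes, to convert this --- using the minimum inter-sample spacing \eqref{eq:zeno_1} together with Assumption \ref{as1} --- into an estimate of geometric type with ratio $\rho\coloneqq\max_{\sigma}\exp(\ln\mu_\sigma-\omega_{1_\sigma}\tau_D)$, and to note that ``$\rho<1$'' is, after substituting \eqref{eq:omega_1}, exactly the second branch of \eqref{eq:convergance_1}. Sufficiency and necessity are argued separately.

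For the ``if'' direction, fix $t$, enumerate the switching instants in $[0,t]$ as $\iota_0=0<\iota_1<\dots<\iota_{g(t)}$ with $g(t)$ as in \eqref{eq:g(t)}, and first bound, using the selection rule \eqref{eq:sigma} and the re-indexing \eqref{eq:k(t)}, the number of $\iota_g$ inside any window $[\tau,t]$ by $l(\tau,t)+c_0$ for a fixed $c_0$ (every change of $\sigma$ being produced by an MCDoS change or absorbed into the count of \eqref{eq:l}). Combining this with \eqref{eq:zeno_1} and with Assumption \ref{as1} in the contrapositive form ``$N$ switching sub-intervals in $[\tau,t]$ $\Rightarrow$ $t-\tau\ge(N-\varkappa-c_0)\tau_D$ and $t-\tau\ge N\underline{\Delta}$'', the $l$-th summand of \eqref{eq:sum} --- a shifted telescoping product over $l$ consecutive modes (numerators at $\iota_{g(t)-j}^+$, denominators at $\iota_{g(t)-(j+1)}^+$) times the uniformly bounded coefficient $\frac{\varepsilon_{1_\sigma}+\varepsilon_{2_\sigma}+\psi_1}{\min\{\lambda_{\min}(P_{p_\sigma}),\lambda_{\min}(P_{e_\sigma})\}\,\lambda_{\min}(\Gamma_{1_\sigma})}$ --- is dominated by $\bar C\,\rho^{\,l}$ once \eqref{eq:convergance_1} holds, while the finitely many head terms with $l\le\varkappa+c_0+1$ are controlled through the dwell-time estimate $t-\tau\ge N\underline{\Delta}$; hence \eqref{eq:sum} converges. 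The product \eqref{eq:product} is a single telescoping product of the same kind, and its cumulative contraction exponent $\sum_i\omega_{1_i}|\mathcal{O}_i(\iota_{g(t)-p},t)|\ge(\min_i\omega_{1_i})(t-\iota_{g(t)-p})$ grows without bound as $t\to\infty$, so it tends to $0$ and supplies the decaying term in the ISS reading of \eqref{eq:trajectory_1}. Condition \eqref{eq:varkappa}, $\varkappa\le 1-\underline{\Delta}/\tau_D$, enters here precisely to keep Assumption \ref{as1} compatible with \eqref{eq:zeno_1} --- at most one MCDoS change per $\underline{\Delta}$-window --- so that the interval decomposition underlying \eqref{eq:trajectory_1} is well defined and the two interval estimates above can be applied together.

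For the ``only if'' direction I would argue by contraposition. If the second branch of \eqref{eq:convergance_1} fails, fix $\sigma^\star$ attaining $\ln\mu_{\sigma^\star}\ge\omega_{1_{\sigma^\star}}\tau_D$ and construct an FSDoS / observer-switching scenario meeting Assumption \ref{as1} with asymptotic equality --- MCDoS changes at the maximal admissible rate of essentially one per $\tau_D$, with the observer kept in mode $\sigma^\star$ between consecutive changes (a neighbouring cheap mode being visited only for a legal, negligible fraction of time) --- so that the generic term of \eqref{eq:sum} behaves like $(\mu_{\sigma^\star}e^{-\omega_{1_{\sigma^\star}}\tau_D})^{l}$ with ratio $\ge 1$, whence that series diverges for this admissible data; the first branch $\tau_D>\underline{\Delta}$ is already forced by Assumption \ref{as1}. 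If \eqref{eq:varkappa} fails, Assumption \ref{as1} already permits MCDoS-change patterns that the sampled observer --- switching only at instants spaced by at least $\underline{\Delta}$ --- cannot accommodate, so the switching-mode interval decomposition, and with it the bound \eqref{eq:trajectory_1} from which \eqref{eq:product}--\eqref{eq:sum} are read off, breaks down; one exhibits such a pattern. Together this yields the stated equivalence.

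The step I expect to be the main obstacle is not the geometric comparison but the bookkeeping beneath it: fixing the precise relation between the switching instants $\iota_g$ --- which react to MCDoS changes and, through \eqref{eq:sigma}, also to $\|\hat{y}_\sigma\|\le v$ and to the $k(t)$ re-indexing under FSDoS --- and the counter $l(\tau,t)$, so that Assumption \ref{as1} may legitimately be transferred into the switch-count versus elapsed-time trade-off; collapsing the nested, shifted telescoping of \eqref{eq:trajectory_1}--\eqref{eq:sum} to the clean ``$\mu_\sigma$-jump $\times$ $e^{-\omega_{1_\sigma}\tau}$-contraction'' picture without losing the per-mode sharpness that makes \eqref{eq:convergance_1} tight; and, on the necessity side, checking that the extremal DoS patterns are genuinely producible by the rule \eqref{eq:sigma} while respecting both \eqref{eq:zeno_1} and \eqref{eq:l}.
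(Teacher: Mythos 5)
Your proposal follows essentially the same route the authors use for results of this type: extract a dwell-time lower bound from Assumption \ref{as1} (pulling the $\varkappa$-offset out as a single constant factor of the form $e^{\omega_{1_\sigma}\tau_D\varkappa}$), then run a geometric/ratio-test comparison in which the per-switch ratio $\mu_\sigma e^{-\omega_{1_\sigma}\tau_D}<1$ (with $\mu_\sigma$ your jump factor) is, via \eqref{eq:omega_1}, exactly the second branch of \eqref{eq:convergance_1}. Be aware that this paper imports Lemma \ref{lm:convergence} from \cite{basu2024characterization} without reproving it, but the FSDoS generalizations it does prove in the appendix (Lemmas \ref{lm:upper_bound_sum} and \ref{lm:convergent_fsdos}) use precisely this mechanism, so your sketch is consistent with the authors' method.
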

\section{Stability analysis of observer-based ETM under MCDoS} \label{sec:stability_ETM}
We describe in this section a ETM control update strategy that guarantees ISS in the absence of FSDoS. The conclusions will form the basis for the developments in the following section. Through terms $\xi_\sigma(t)$ and $\xi_e(t)$, which enter the dynamics as disturbances, the feedback process now relies on the control update technique. Therefore, it follows intuitively that stability would not be compromised by adopting control update mechanisms that keep $\xi_\sigma(t),\xi_e(t)$ small in a realistic sense. 
The ensuing result demonstrates that if $\psi_1$ and $\psi_2$ are carefully chosen, any control update approach \eqref{eq:update_rule_1}, that confines $\xi_\sigma(t)$ and $\xi_e(t)$ in a small neighbourhood, will satisfy the stability criterion. We use standard Lyapunov arguments to illustrate the proposed control architecture. Next, using the resultant system with controller and observer in \eqref{eq:system}, we compute $V_\sigma$ from \eqref{eq:V_1} to get
\begin{align}
    &\frac{d}{dt}V_\sigma(x_p(t),\tilde{x}(t),\xi_\sigma(t),\xi_e(t))=x^\top_p(t)\text{He}(P_{p_\sigma}(A-BK))x_p(t) \notag \\
    &+2x^\top_p(t)P_{p_\sigma}BK\tilde{x}(t) +\tilde{x}^\top(t)\text{He}(P_{e_\sigma}(A-L_\sigma C_\sigma))\tilde{x}(t)-2x^\top_p(t)P_{p_\sigma}BK \xi_e(t) \notag \\
    &-2\tilde{x}^\top(t)P_{e_\sigma} L_\sigma \xi_\sigma(t)+2x^\top_s(t)P_{p_\sigma} w(t)+2 \tilde{x}^\top(t) P_{e_\sigma} w(t).   \label{eq:V_dot_1}
\end{align}
Applying Young's inequality, we obtain
\begin{align}
    &\frac{d}{dt}V_\sigma(x_p(t),\tilde{x}(t),\xi_\sigma(t),\xi_e(t)) \leq x^\top_p(t)[\text{He}(P_{p_\sigma}(A-BK))+ \notag \\&
    P_{p_\sigma} BK K^\top B^\top P_{p_\sigma}]x_p(t)
    +2x^\top_p(t) P_{p_\sigma}BK\tilde{x}(t)+ \tilde{x}^\top(t)[\text{He}(P_{e_\sigma}(A-L_\sigma C_\sigma))\notag \\ &
    +P_{e_\sigma}L_\sigma L^\top_\sigma P_{e_\sigma}]\tilde{x}(t) +\xi^\top_\sigma (t)\xi_\sigma (t)+ \xi^\top_e(t) \xi_e(t) + \frac{1}{\varepsilon_{1_\sigma}} x^\top_p(t) P^2_{p_\sigma} x_p(t) \notag \\ & 
    + \frac{1}{\varepsilon_{2_\sigma}} \tilde{x}^\top (t) P^2_{e_{\sigma}} \tilde{x}(t)+ (\varepsilon_{1_\sigma}+\varepsilon_{2_\sigma}) w^\top(t) w(t), \label{eq:V_dot_6}
\end{align}
where $\varepsilon_{1_\sigma}$ and $\varepsilon_{2_\sigma}>0$. From the SETM in \eqref{eq:update_rule_1}, we can write
\begin{align}
    \xi^\top_\sigma(t) \xi_\sigma (t)+ \xi^\top_e(t) \xi_e(t) \leq x^\top_p(t) (\psi_1 C^\top_\sigma C_\sigma & + \psi_2 I) x_p(t) - 2\psi_2 x^\top_p(t) \tilde{x}(t)  \notag \\ &+\psi_2\tilde{x}^\top(t) \tilde{x}(t)+ \psi_1 \|w_t\|^2_\infty. \label{eq:static_etm}
\end{align}
By using \eqref{eq:static_etm} in \eqref{eq:V_dot_6}, we have
\begin{align}
    \frac{d}{dt}V_\sigma(x(t)) \leq -x^\top(t) \Gamma_{1_\sigma} x(t)+(\varepsilon_{1_\sigma}+ \varepsilon_{2_\sigma} + \psi_1) f^2(t), \label{eq:V_dot_2}
\end{align}
where
\begin{align}
    \Gamma_{1_\sigma} \coloneqq \begin{bmatrix}
        \Theta_{11_\sigma} & \Theta_{12_\sigma} \\
        * & \Theta_{22_\sigma} 
    \end{bmatrix}, \label{eq:Gamma_1}
\end{align}
$\Theta_{11_\sigma} \coloneqq
        -\text{He}(P_{p_\sigma}(A-BK))-P_{p_\sigma}BK K^\top B^\top P_{p_\sigma}-\psi_1 C^\top_\sigma C_\sigma -\psi_2I-\frac{1}{\varepsilon_{1_\sigma}} P^2_{p_\sigma},$ $\Theta_{12_\sigma} \coloneqq -P_{p_\sigma}BK +\psi_2I,$ and $\Theta_{22_\sigma}\coloneqq -\text{He}(P_{e_\sigma}(A-L_\sigma C_\sigma)) -P_{e_\sigma}L_\sigma L^\top_\sigma P_{e_\sigma}-\psi_2I-\frac{1}{\varepsilon_{2_\sigma}}P^2_{e_\sigma}$.

\subsection{ISS for ETM under MCDoS} \label{iss_under_mcdos}
\begin{theorem}\label{th1}
   In addition to the control law \eqref{eq:u(t)}, consider the control system \eqref{eq:system} with the process dynamics (\ref{eq:x_dot}), where $K$ is such that $A - BK$ is Hurwitz. Consider restricted sampling rate with $\psi_1$ and $\psi_2$, which satisfies SETM \eqref{eq:update_rule_1} and switched observer \eqref{eq:observer} with observer gain $L_\sigma$, which persuade the following Linear Matrix Inequalities (LMIs)
\begin{align}
     \Gamma_{2_{\sigma}} \coloneqq \begin{bmatrix}		\Theta_{1_\sigma} & P_{p_\sigma} & P_{p_\sigma}BK & \Theta_{12_\sigma} & 0 & 0 \\
     * & \varepsilon_{1_\sigma}I & 0 & 0 & 0 & 0 \\
     * & * & I & 0 & 0 & 0 \\
     * & * & * & \Theta_{2_\sigma} & P_{e_\sigma} & N_\sigma \\
     * & * & * & * & \varepsilon_{2_\sigma}I & 0 \\
     * & * & * & * & * & I
	\end{bmatrix}  >0,\label{eq:Gamma_2}
\end{align}
for $\{P_{p_\sigma},P_{e_\sigma}\}>0,$ $\{\psi_1,\psi_2, \varepsilon_{1_\sigma}, \varepsilon_{2_\sigma}\} \in \mathbb{R}_{>0},$ $N_\sigma \in \mathbb{R}^{n_p \times n_{y_\sigma}},$
where 
\begin{align}
  \Theta_{1_{\sigma}} &\coloneqq -\text{He}(P_{p_\sigma}(A-BK))-\psi_1 C^\top_\sigma C_\sigma- \psi_2I, \label{eq:Theta_1} \\ 
  \Theta_{2_\sigma} &\coloneqq -\text{He}(P_{e_\sigma}A) + N_\sigma C_\sigma +C^\top_\sigma N^\top_\sigma -\psi_2I, \label{eq:Theta_2}
\end{align}
$N_\sigma \coloneqq P_{e_\sigma}L_\sigma, \forall \sigma \in \{1, \cdots, n_s\},$ and any MCDoS frequency satisfying Assumption \ref{as1}, with arbitrary $\varkappa,$ $\tau_D$, such as $\varkappa$ and $\tau_D$ obeys equation \eqref{eq:varkappa} and \eqref{eq:convergance_1}, respectively, then the system \eqref{eq:system} is ISS.
\end{theorem}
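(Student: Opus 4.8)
The plan is to prove the theorem in three moves: (i) recognise the LMI \eqref{eq:Gamma_2} as a Schur-complement lift of the inequality $\Gamma_{1_\sigma}>0$; (ii) feed the resulting dissipation inequality \eqref{eq:V_dot_2} into Lemma \ref{lm:trajectory}; and (iii) use Assumption \ref{as1} together with \eqref{eq:varkappa}--\eqref{eq:convergance_1} and Lemma \ref{lm:convergence} to read off the $KL$ and $K_\infty$ estimates required by Definition \ref{def:1}.

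For move (i) I would permute the block rows and columns of $\Gamma_{2_\sigma}$ by one and the same permutation, so that the two blocks attached to $x_p(t)$ and $\tilde x(t)$ (rows/columns $1$ and $4$ of \eqref{eq:Gamma_2}) are collected first and the four auxiliary blocks last; this puts $\Gamma_{2_\sigma}$ into the form $\left[\begin{smallmatrix}M_{11}&M_{12}\\ M_{12}^\top&M_{22}\end{smallmatrix}\right]$ with $M_{11}=\left[\begin{smallmatrix}\Theta_{1_\sigma}&\Theta_{12_\sigma}\\ {*}&\Theta_{2_\sigma}\end{smallmatrix}\right]$, $M_{22}=\text{diag}\{\varepsilon_{1_\sigma}I,\,I,\,\varepsilon_{2_\sigma}I,\,I\}$ and $M_{12}=\left[\begin{smallmatrix}P_{p_\sigma}&P_{p_\sigma}BK&0&0\\ 0&0&P_{e_\sigma}&N_\sigma\end{smallmatrix}\right]$. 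Since $\varepsilon_{1_\sigma},\varepsilon_{2_\sigma}>0$ we have $M_{22}>0$, hence $\Gamma_{2_\sigma}>0$ is equivalent to $M_{11}-M_{12}M_{22}^{-1}M_{12}^\top>0$. A direct multiplication gives $M_{12}M_{22}^{-1}M_{12}^\top=\text{diag}\{\varepsilon_{1_\sigma}^{-1}P_{p_\sigma}^2+P_{p_\sigma}BKK^\top B^\top P_{p_\sigma},\ \varepsilon_{2_\sigma}^{-1}P_{e_\sigma}^2+N_\sigma N_\sigma^\top\}$; substituting $N_\sigma=P_{e_\sigma}L_\sigma$ (so that $N_\sigma N_\sigma^\top=P_{e_\sigma}L_\sigma L_\sigma^\top P_{e_\sigma}$ and $N_\sigma C_\sigma+C_\sigma^\top N_\sigma^\top=\text{He}(P_{e_\sigma}L_\sigma C_\sigma)$) and using \eqref{eq:Theta_1}--\eqref{eq:Theta_2}, this Schur complement reduces exactly to $\Gamma_{1_\sigma}$ of \eqref{eq:Gamma_1}. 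Hence \eqref{eq:Gamma_2} holds if and only if $\Gamma_{1_\sigma}>0$.

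With $\Gamma_{1_\sigma}>0$ established, the chain \eqref{eq:V_dot_1}$\to$\eqref{eq:V_dot_6}$\to$\eqref{eq:static_etm}$\to$\eqref{eq:V_dot_2} already supplies $\frac{d}{dt}V_\sigma(x(t))\le-x^\top(t)\Gamma_{1_\sigma}x(t)+(\varepsilon_{1_\sigma}+\varepsilon_{2_\sigma}+\psi_1)f^2(t)$ with $\Gamma_{1_\sigma}>0$, which is precisely the hypothesis of Lemma \ref{lm:trajectory}; invoking it gives the explicit bound \eqref{eq:trajectory_1} for $\|x(t)\|^2$. Since Assumption \ref{as1} is in force and $\varkappa,\tau_D$ satisfy \eqref{eq:varkappa} and \eqref{eq:convergance_1}, Lemma \ref{lm:convergence} guarantees that the product factor \eqref{eq:product} (which combines the per-switch overshoot ratios with the inter-switch exponential decay) and the series \eqref{eq:sum} converge; I would also argue that the strictness in \eqref{eq:convergance_1} makes the decay, at rate $\omega_{1_\sigma}$ from \eqref{eq:omega_1}, outweigh the overshoot ratio over every dwell window, so that the product factor in \eqref{eq:trajectory_1} tends to $0$ as $t\to\infty$ while the whole bracket multiplying $\|w_t\|_\infty^2$ stays uniformly bounded by some $\bar b<\infty$.

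To conclude, I would write \eqref{eq:trajectory_1} as $\|x(t)\|^2\le a(t)\|x(0)\|^2+b(t)\|w_t\|_\infty^2$ with $a(t)\to0$ and $0\le b(t)\le\bar b$, take square roots, and use $\sqrt{p+q}\le\sqrt{p}+\sqrt{q}$ to obtain $\|x(t)\|\le\sqrt{a(t)}\,\|x(0)\|+\sqrt{\bar b}\,\|w_t\|_\infty$; then $\beta(r,t):=\sqrt{a(t)}\,r$ (after the routine monotone majorization that makes it $KL$) and $\gamma(r):=\sqrt{\bar b}\,r\in K_\infty$ give \eqref{eq:ISS}, so the closed loop \eqref{eq:system} is ISS in the sense of Definition \ref{def:1}. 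The only two steps that are not purely mechanical are the block permutation in move (i), chosen so that $M_{12}M_{22}^{-1}M_{12}^\top$ collapses exactly onto the quadratic-in-$P$ corrections separating $\Gamma_{1_\sigma}$ from $\left[\begin{smallmatrix}\Theta_{1_\sigma}&\Theta_{12_\sigma}\\ {*}&\Theta_{2_\sigma}\end{smallmatrix}\right]$, and the passage from ``\eqref{eq:product} converges'' to ``\eqref{eq:product} converges to $0$'', which is what makes $\beta$ a genuine $KL$ bound; I expect the latter to be the main obstacle, and it is settled precisely by the strict inequality \eqref{eq:convergance_1}.
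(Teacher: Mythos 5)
Your proposal is correct and follows essentially the same route as the paper's proof: the Schur complement equivalence $\Gamma_{2_\sigma}>0 \Leftrightarrow \Gamma_{1_\sigma}>0$, the dissipation inequality \eqref{eq:V_dot_2} fed into Lemma \ref{lm:trajectory}, and Lemma \ref{lm:convergence} under \eqref{eq:varkappa}--\eqref{eq:convergance_1} to bound the bracket and obtain the $KL$/$K_\infty$ estimates of Definition \ref{def:1}. Your write-up is in fact somewhat more explicit than the paper's on the block permutation behind the Schur complement and on the need for the product factor to tend to zero (not merely converge) for the $\beta$ term to be genuinely $KL$, but these are elaborations of the same argument rather than a different approach.
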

\begin{proof}
        See the \href{\ref{appendiz}}{Appendix}.
\end{proof}
\subsection{Minimum inter-execution time}
Avoiding Zeno behaviour is an important phenomenon in the ETM mechanism. To avoid it, first we have to set up a minimum inter-execution time.
\begin{lemma} \label{lm:inter-execution}
    Let $\psi_1, \psi_2 \in \mathbb{R}_{\geq 0}$. Then for any initial condition $x(0)\in \mathbb{R}^{2n_p},$ and $\forall t \in \Upsilon(0,t),$ the sequence $(t_k)_{k \in \mathbb{I}}$ defined by ETM \eqref{eq:update_rule_1} satisfies \eqref{eq:Delta_k} where 
    \begin{align}
        \underline{\Delta} = \min_{\forall \sigma \in \{1, \cdots, n_s\}} \underline{\Delta}_\sigma \label{eq:underline_Delta}
    \end{align}
    and $\underline{\Delta}_\sigma >0$ is given by the following
        \begin{align}
            \underline{\Delta}_\sigma = \int_{0}^{1} {\frac{1}{\mathcal{I}_\sigma (s)}}\;ds. \label{eq:static_interexecution}
        \end{align}
    where 
    \begin{align}
        &\mathcal{I}_\sigma(s) \coloneqq \frac{1}{\sqrt{\psi}} \left(\|\mathcal{G}_\sigma\|+\|C_\sigma\| \right )+ \left (\|\mathcal{G}_\sigma\| + \|\mathcal{H}_\sigma\|+ \|C_\sigma\| \right)s +\sqrt{\psi} \|\mathcal{H}_\sigma\| s^2, \\
        &\psi \coloneqq \min\{ \psi_1, \psi_2\}, \label{eq:super_psi} \\
        &\mathcal{G}_\sigma \coloneqq \begin{bmatrix}
        C_\sigma A C^\dagger_\sigma & -C_\sigma BK \\
        L_\sigma & A-BK-L_\sigma C_\sigma \end{bmatrix}, \label{eq:G_sigma}\\
        &\mathcal{H}_\sigma \coloneqq \begin{bmatrix}
            0 & -C_\sigma BK \\ L_\sigma & -BK
        \end{bmatrix},
    \end{align}
    and $C^\dagger_\sigma$ is the right pseudo inverse of $C_\sigma$. 
\end{lemma}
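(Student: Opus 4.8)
The plan is to run the classical event-triggering inter-execution argument in the spirit of \cite{tabuada2007event}, but carried out for the \emph{joint} error $\xi(t)\coloneqq[\xi_\sigma^\top(t)\ \xi_e^\top(t)]^\top$ measured against the reference vector $\zeta(t)\coloneqq[y_\sigma^\top(t)\ x_e^\top(t)]^\top$ built from the two quantities that appear on the right-hand side of the SETM \eqref{eq:update_rule_1}. Fix a sampling interval $[t_k,t_{k+1})$ lying inside a window with $\Upsilon(0,t)\neq\emptyset$, so that $k(t)=k$; then $\hat y_\sigma(t)=y_\sigma(t_k)$ and $\hat x_e(t)=x_e(t_k)$ are frozen on this interval, whence $\xi(t_k)=0$, and the mode $\sigma$ is constant on $(t_k,t_{k+1})$ (by \eqref{eq:sigma}, $\sigma$ updates only at sampling instants), so every matrix below carries this one index.

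The first step is to obtain a closed dynamics for $\zeta$. Using the observer \eqref{eq:observer}, the control law \eqref{eq:u(t)}, the closed loop \eqref{eq:system}, the identities $\hat y_\sigma=\xi_\sigma+y_\sigma$, $\hat x_e=\xi_e+x_e$, $u=-K(x_e+\xi_e)$, and the substitution $x_p=C_\sigma^{\dagger}y_\sigma$ in the term $C_\sigma A x_p$, one gets
\[
  \dot\zeta(t)=\mathcal{G}_\sigma\,\zeta(t)+\mathcal{H}_\sigma\,\xi(t)+\begin{bmatrix}C_\sigma w(t)\\ 0\end{bmatrix},
\]
with $\mathcal{G}_\sigma,\mathcal{H}_\sigma$ exactly as in \eqref{eq:G_sigma} and in the definition of $\mathcal{H}_\sigma$. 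Since $\hat y_\sigma,\hat x_e$ are constant, $\dot\xi(t)=-\dot\zeta(t)$, so $\|\dot\xi(t)\|\leq\|\mathcal{G}_\sigma\|\,\|\zeta(t)\|+\|\mathcal{H}_\sigma\|\,\|\xi(t)\|+\|C_\sigma\|\,\|w(t)\|$; the disturbance term is deliberately kept because \eqref{eq:update_rule_1} also carries $\psi_1\|w_t\|_\infty^2$ in its threshold.

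Next I would normalise. Put $N(t)\coloneqq(\|\zeta(t)\|^2+\|w_t\|_\infty^2)^{1/2}$, $\psi=\min\{\psi_1,\psi_2\}$ as in \eqref{eq:super_psi}, and $\varrho(t)\coloneqq\|\xi(t)\|/(\sqrt{\psi}\,N(t))$ (the degenerate case $N\equiv0$, which forces $\xi\equiv0$ and hence no triggering, is disposed of separately). Since the threshold in \eqref{eq:update_rule_1} is $\geq\psi_1(\|y_\sigma\|^2+\|w_t\|_\infty^2)+\psi_2\|x_e\|^2\geq\psi N^2$, the rule cannot be violated while $\varrho\leq1$, so $t_{k+1}-t_k$ is no smaller than the time $\varrho$ needs to rise from $0$ to $1$. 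Differentiating, using $\|\zeta\|\leq N$, $\|w\|\leq\|w_t\|_\infty\leq N$, $\|\xi\|=\sqrt{\psi}\,N\varrho$, the estimate on $\|\dot\xi\|$, and $\dot N/N\geq-\|\dot\zeta\|/N\geq-(\|\mathcal{G}_\sigma\|+\|C_\sigma\|+\sqrt{\psi}\,\|\mathcal{H}_\sigma\|\varrho)$ (here monotonicity of $\|w_t\|_\infty$, i.e.\ $\tfrac{d}{dt}\|w_t\|_\infty^2\geq0$, only helps), one is led to the scalar comparison inequality $\dot\varrho(t)\leq\mathcal{I}_\sigma(\varrho(t))$ with $\varrho(t_k)=0$ and $\mathcal{I}_\sigma$ exactly the quadratic in the statement. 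Since $\mathcal{I}_\sigma>0$ on $[0,1]$, the comparison ODE $\dot\varphi=\mathcal{I}_\sigma(\varphi)$, $\varphi(0)=0$ reaches the value $1$ precisely at time $\int_0^1 ds/\mathcal{I}_\sigma(s)$, so $t_{k+1}-t_k\geq\underline\Delta_\sigma$; taking the minimum over the finitely many modes yields \eqref{eq:underline_Delta} and hence \eqref{eq:Delta_k}.

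The step I expect to be the real obstacle is the closed dynamics for $\zeta$: the term $C_\sigma A x_p$ is not literally a function of $y_\sigma=C_\sigma x_p$ unless $\ker C_\sigma$ is $A$-invariant (or $C_\sigma$ is invertible), and the replacement $x_p=C_\sigma^{\dagger}y_\sigma$ — which is exactly what produces the block $C_\sigma A C_\sigma^{\dagger}$ in $\mathcal{G}_\sigma$ — has to be justified structurally or inherited from \cite{basu2024characterization}; everything after it is routine polynomial bookkeeping. A minor point worth stating explicitly is that $\underline\Delta_\sigma>0$ genuinely needs $\psi=\min\{\psi_1,\psi_2\}>0$, since with $\psi=0$ the constant term of $\mathcal{I}_\sigma$ blows up and the integral collapses to $0$, so the hypothesis $\psi_1,\psi_2\in\mathbb{R}_{\geq0}$ is really used with $\psi>0$.
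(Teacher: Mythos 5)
Your proposal follows essentially the same route as the paper's own proof: the paper stacks $x_\sigma=[y_\sigma^\top\ x_e^\top]^\top$ and $e_\sigma=[\xi_\sigma^\top\ \xi_e^\top]^\top$, derives $\dot x_\sigma=\mathcal{G}_\sigma x_\sigma+\mathcal{H}_\sigma e_\sigma+[C_\sigma^\top\ 0]^\top w$, uses $\dot e_\sigma=-\dot x_\sigma$, bounds the growth of $\phi_\sigma=\|e_\sigma\|/\sqrt{\|x_\sigma\|^2+\|w_t\|_\infty^2}$ from $0$ to $\sqrt{\psi}$ by the same quadratic comparison inequality, and only performs your normalisation $s=\phi_\sigma/\sqrt{\psi}$ at the very end, so the two arguments coincide up to where the change of variable is made. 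The one caveat you flag — that writing the block $C_\sigma A C_\sigma^\dagger$ requires $C_\sigma A x_p=C_\sigma A C_\sigma^\dagger y_\sigma$, i.e.\ effectively that $\ker C_\sigma$ be handled structurally — is equally present and left implicit in the paper's proof, so it is a fair observation rather than a defect of your argument relative to theirs.
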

\begin{proof}
        See the \href{\ref{appendiz}}{Appendix}.
\end{proof}
\section{Input-to-State Stability under Full Scale Denial-of-Service}\label{sec_4}
Section \ref{sec:stability_ETM}'s analysis is based on the possibility of satisfying conditions \eqref{eq:update_rule_1} for every $t\in \mathbb{R}_{\geq 0}$. According to Lemma \ref{lm:inter-execution}, this is always feasible in the absence of FSDoS. When there is an FSDoS, the study gets more complicated since specific control update endeavours do not have to be successful, regardless of how we sample. If $\xi_\sigma(t)$ and $\xi_e(t)$ are not reset, \eqref{eq:update_rule_1} may be violated, and stability may be lost since \eqref{eq:V_dot_2} no longer has to fulfil a dissipation-like inequality. As a result, it's logical to wonder how Theorem \ref{th1}'s results might be expanded to account for the presence of FSDoS. This question will be addressed in the remainder of this section. The type of FSDoS signals under study is introduced and discussed in Section \ref{sec_4.1}. Section \ref{sec_4.2} contains the major result.
\subsection{Assumptions}\label{sec_4.1}
The first question to be addressed is establishing the maximum quantity of FSDoS that a system can withstand before becoming unstable. In this regard, it is clear that such a figure is not arbitrary and that appropriate limits on FSDoS frequency and duration must be enforced. De Persis and Tesi restricted the DoS frequency and duration for the single sensor in \cite{de2015input} by using the concept of dwell time \cite{hespanha1999stability}. We use those assumptions of \cite{de2015input} to restrict FSDoS frequency and duration.

\begin{enumerate}[wide,labelindent=0pt,label=(\alph*)]
	\item $FSDoS \; Frequency\;:$ First, consider the frequency at which FSDoS can appear and let $h_{n+1}-h_n,n\in \mathbb{N}$ indicate the time between any two consecutive FSDoS activating. It is evident that if $h_{n+1}-h_n \leq\underline{\Delta}$ for every $n\in N$ (FSDoS can appear at the same pace as the minimum feasible sampling rate $\underline{\Delta}$), stability can be lost despite the control update method used. The idea of average dwell-time suggested by \cite{hespanha1999stability} is a logical method to describe this need. Let $n(\tau,t)$ indicate the number of FSDoS off/on transitions happening on the interval $[\tau,t)$ given, $t,\tau \in R_{\geq 0}$ with $t\geq \tau$.
	\begin{assumption}\label{as2}
            \cite{de2015input} There exist $\eta \in \mathbb{R}_{\geq 0}$ and $\tau_F\in \mathbb{R}_{>\underline{\Delta}}$ such that
		\begin{align}
			n(\tau,t)\leq \eta+\frac{t-\tau}{\tau_F} \label{eq33}
		\end{align}
		$\forall t,\tau\in \mathbb{R}_{\geq 0}$ with $t \geq \tau$.
	\end{assumption}
	\item $FSDoS\; Duration:$ Along with the FSDoS frequency, the FSDoS duration, or the length of the intervals during which communication is disrupted, must be limited. Consider an FSDoS sequence with the singleton $\{h_1\}$ to illustrate what it is meant. With $\eta \geq 1$, Assumption \ref{as2} is obviously met. However, regardless of the control update policy used, stability is lost if $H_1=\mathbb{R}_{>0}$ (communication is never feasible). Remembering the definition of $\Omega(\tau,t)$ in (\ref{eq:Omega}), the assumption that follows is a natural complement to Assumption \ref{as2} in terms of FSDoS duration.
	\begin{assumption}\label{as3}
            \cite{de2015input} There exist a $T\in \mathbb{R}_{>1}$ and $\zeta\in \mathbb{R}_{\geq0}$ such that
		\begin{align}
			|\Omega(\tau,t)|\leq\zeta+\frac{t-\tau}{T} \label{eq34}
		\end{align}
		$\forall t,\tau\in\mathbb{R}_{\geq0}$ with $t\geq\tau$.
	\end{assumption}
\end{enumerate}
\begin{remark}
    To characterize the frequency and duration of FSDoS, we rely on assumptions about the frequency and duration of DoS for a single sensor as proposed in \cite{de2015input}. Although the bounds for $\tau_F$ and $T$ are largely consistent with those in the existing literature, determining the appropriate bounds for $\tau_F$ and $T$ becomes significantly more challenging in the presence of MCDoS (characterized by $\tau_D$ and $\varkappa$). This complexity is addressed in the next subsection.
\end{remark}
\subsection{ISS under FSDoS}\label{sec_4.2}
The critical deduction of this section can now be drawn. Any control update policy that satisfies the requirements of Lemma \ref{lm:inter-execution} maintains ISS for any FSDoS signal with a suitable MCDoS changing frequency and meets the assumptions \ref{as1}, \ref{as2}, and \ref{as3} with a sufficiently large $\tau_D$, $\tau_F$ and $T$. Even though the proof of this result is quite complicated, the fundamental technique is effortless. We divide the time axis between intervals where \eqref{eq:update_rule_1} may be satisfied and periods where (\ref{eq:update_rule_1}) does not have to hold owing to the occurrence of FSDoS. The feedback dynamics are then analyzed as a system alternating between stable and unstable modes, and $\tau_D$, $\tau_F$ and $T$ parameters are determined that favour stable behaviour over unstable behaviour.

Consider a sampling time sequence of $\{t_k\}_{k\in \mathbb{N}}$ and a FSDoS sequence of $\{h_n\}_{n\in \mathbb{N}}$. Let
\begin{equation}\label{eq:scr_K}
	\mathcal{K} \coloneqq \left \{k\in \mathbb{N} : t_k \in \bigcup_{n\in \mathbb{N}}H_n \right \}
\end{equation}
stand for a set of integers associated with a control update attempt.
In practice, the control update periods are sequenced at a limited sampling rate, impacting the FSDoS's overall length. In reality, if a sensor tries to broadcast and receives no acknowledgement due to FSDoS, the sensor will try again until the communication is thriving. Because of the restricted transmission rate, even when the communication is feasible, there will be a delay between the end of the FSDoS and the start of the transmission. This delay lengthens the FSDoS interval, which impacts the preceding section's stability outcome.
\begin{lemma}\label{lm:actuator_delay}
	\cite{de2014resilient} The interval $[\tau,t]$ is the disjoint union of $\overline{\Omega}(\tau,t)$ and $\overline{\Upsilon}(\tau,t)$, where $\overline{\Omega}(\tau,t)$ (respectively, $\overline{\Upsilon}(\tau,t)$) is the union of sub-intervals of $[\tau,t]$ over which (\ref{eq:update_rule_1}) need not hold (respectively, holds) for any, $t,\tau \in \mathbb{R}_{\geq 0}$, with $0\leq \tau \leq t$. There are two sequences of positive and non-negative real integers $\{w_m\}_{m\in \mathbb{N}}$, $\{v_m\}_{m\in \mathbb{N}}$ such that
	\begin{align}
		& \overline{\Omega}(\tau,t)\coloneqq \bigcup_{m\in \mathbb{N}}W_m\cap [\tau,t] \label{eq:Bar_Omega} \\
		& \overline{\Upsilon}(\tau,t)\coloneqq \bigcup_{m\in \mathbb{N}}Y_{m-1}\cap [\tau,t] \label{eq:Bar_Upsilon}
	\end{align}
	where $\forall m\in \mathbb{N},$
	\begin{align}
		& W_m\coloneqq\{\varphi_m\}\cup[\varphi_m,\varphi_m+v_m) \label{eq:W_m} \\
		& Y_m\coloneqq \{\varphi_m+v_m\}\cup [\varphi_m+v_m,\varphi_{m+1}) \label{eq:Y_m}
	\end{align}
	and where $\varphi_0=v_0 \coloneqq 0$.
\end{lemma}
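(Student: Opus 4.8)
\noindent The statement is essentially a bookkeeping lemma of the kind established in \cite{de2014resilient}, so the plan is to construct the sequences $\{\varphi_m\}$ and $\{v_m\}$ explicitly and then verify the two required properties: that the $W_m$'s and $Y_m$'s of \eqref{eq:W_m}--\eqref{eq:Y_m} partition $[\tau,t]$, and that \eqref{eq:update_rule_1} holds on $\overline{\Upsilon}(\tau,t)$ while it may fail on $\overline{\Omega}(\tau,t)$. First I would record that a control update is attempted only at the sampling instants $\{t_k\}$ and is obstructed precisely when $k\in\mathcal{K}$, cf. \eqref{eq:scr_K}; because a failed reset of $\xi_\sigma,\xi_e$ can be retried only at the next sampling instant, for each $k\in\mathcal{K}$ I would set $\theta_k:=\inf\{t_j: j\ge k,\ j\notin\mathcal{K}\}$ --- the first instant at which the reset succeeds --- and $\mathcal{F}:=\bigcup_{k\in\mathcal{K}}[t_k,\theta_k)$. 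Lemma \ref{lm:inter-execution} gives $t_{j+1}-t_j\ge\underline{\Delta}$, so $\theta_k$ is finite whenever the underlying FSDoS burst has finite duration, and $\mathcal{F}$ is then a countable union of left-closed, right-open intervals.

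\noindent Next I would take the maximal connected components of $\mathcal{F}$, list them increasingly as $W_1,W_2,\dots$, write $W_m=[\varphi_m,\varphi_m+v_m)$, and append the degenerate $W_0:=\{0\}$ (that is, $\varphi_0:=v_0:=0$) to absorb the convention $0\in\Omega(0,t)$; each $W_m$ then has the form $\{\varphi_m\}\cup[\varphi_m,\varphi_m+v_m)$ with $\{\varphi_m\}_{m\ge1}$ increasing and positive and $\{v_m\}$ non-negative, and setting $Y_m:=\{\varphi_m+v_m\}\cup[\varphi_m+v_m,\varphi_{m+1})$ produces the gaps \eqref{eq:Y_m}. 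Then $\bigcup_{m\ge0}W_m=\mathcal{F}\cup\{0\}$ and $\bigcup_{m\ge1}Y_{m-1}$ is its complement in $\mathbb{R}_{\ge0}$ (the two overlap only at the single point $0$, irrelevant for the length-based estimates to follow), the families are otherwise pairwise disjoint, and intersecting with $[\tau,t]$ yields \eqref{eq:Bar_Omega}--\eqref{eq:Bar_Upsilon} together with $[\tau,t]=\overline{\Omega}(\tau,t)\sqcup\overline{\Upsilon}(\tau,t)$. For the update rule I would note that, by maximality of the components, every sampling instant inside a $Y_m$ lies outside $\mathcal{K}$ --- so the reset there succeeds, forcing $\xi_\sigma=\xi_e=0$ and hence \eqref{eq:update_rule_1} at that instant --- and that the left endpoint $\varphi_m+v_m$ of $Y_m$ is itself such an instant; applying Lemma \ref{lm:inter-execution} on $Y_m$, the triggering sequence keeps \eqref{eq:update_rule_1} in force between consecutive resets with inter-execution time at least $\underline{\Delta}$, and a short induction over those finitely many resets gives \eqref{eq:update_rule_1} throughout $Y_m$, hence on all of $\overline{\Upsilon}(\tau,t)$. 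On $W_m\subseteq\mathcal{F}$ the most recent update attempt was obstructed, so $\xi_\sigma,\xi_e$ hold stale data and \eqref{eq:update_rule_1} need not hold.

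\noindent The one genuinely delicate point, which I expect to be the main obstacle, is the interplay between the continuous FSDoS intervals and the discrete sampling grid: one has to check that extending each obstructed attempt forward to its retry instant $\theta_k$ and then merging overlapping intervals yields components of exactly the half-open shape \eqref{eq:W_m}, and, crucially, that each $Y_m$ contains \emph{only} FSDoS-free sampling instants, so that Lemma \ref{lm:inter-execution} applies on $Y_m$ without modification --- even though FSDoS may still be active at non-sampling times within $Y_m$, where it is harmless because $u$ and the held signals $\hat{x}_e,\hat{y}_\sigma$ are constant between sampling instants regardless. Everything else is routine interval bookkeeping that carries over from \cite{de2014resilient}.
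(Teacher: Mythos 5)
The paper does not prove this lemma at all: it is imported verbatim from \cite{de2014resilient} and stated without proof, so there is no in-paper argument to compare against. Your reconstruction --- building $\overline{\Omega}$ as the union of maximal components of $\bigcup_{k\in\mathcal{K}}[t_k,\theta_k)$, where $\theta_k$ is the first unobstructed retry, taking $\overline{\Upsilon}$ as the complement, and using Lemma \ref{lm:inter-execution} to propagate \eqref{eq:update_rule_1} between successful resets inside each $Y_m$ --- is exactly the standard construction from the cited reference and is sound; the only cosmetic caveat is the measure-zero overlap at $t=0$ that you already flag.
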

Now, the following result holds.
\begin{theorem}\label{th:2}
	In addition to the control law \eqref{eq:u(t)}, consider the control system \eqref{eq:system} with the process dynamics \eqref{eq:x_dot}, where $K$ is such that $A - BK$ is Hurwitz. Consider restricted sampling rate with $\psi_1$ and $\psi_2$, which satisfies ETM \eqref{eq:update_rule_1} and switched observer \eqref{eq:observer} with observer gain $L_\sigma$, which persuades the LMIs mentioned in \eqref{eq:Gamma_2}. Consider any  observer switching frequency and FSDoS sequence that meets Assumptions \ref{as1}, \ref{as2} and \ref{as3} and has arbitrary  $\varkappa, \zeta,$ and $\eta$, with $\tau_D, \tau_F,$ and $T$ such that
    \begin{align}
		\frac{1}{T}+\frac{\underline{\Delta}}{\tau_F}< \min_{\forall \sigma \in \{1, \cdots, n_s\}} \frac{\omega_{1_\sigma}}{\omega_{1_\sigma}+\omega_{2_\sigma}} \label{eq:FSDoS_condition} 
    \end{align} 
    $\omega_{1_\sigma}$ is mentioned in \eqref{eq:omega_1} and 
    \begin{align}
        \omega_{2_\sigma}\coloneqq \frac{\lambda_{\max}(\Gamma_{3_\sigma})+\lambda_{\max}(\Gamma_{4_\sigma})}{\min \left \{ \lambda_{\min} (P_{p_\sigma}), \lambda_{\min}(P_{e_\sigma}) \right \}} \label{eq:omega_2_sigma}
    \end{align}
    where
    \begin{align}
    \Gamma_{3_\sigma} &\coloneqq \begin{bmatrix}
    2\gamma_{2_\sigma}-\gamma_{1_\sigma} & \gamma_{3_\sigma} \\
    \gamma_{3_\sigma} & -\gamma_{1_\sigma}
    \end{bmatrix}, \label{eq:Gamma_3}\\
	    \Gamma_{4_\sigma} &\coloneqq \begin{bmatrix}
    \Xi_{11_\sigma} & \Xi_{12_\sigma} \\
    \Xi_{21_\sigma} & \Xi_{22_\sigma}
    \end{bmatrix}, \label{eq:Gamma_4}
	\end{align}
    $\Xi_{11_\sigma} \coloneqq \gamma_{2_\sigma}\{(1+\sqrt{2}) \sqrt{\psi_1} \|C_\sigma\| + (1+\sqrt{2}) \sqrt{\psi_2}+2\}, \Xi_{12_\sigma} \coloneqq \gamma_{2_\sigma} \{(1+\sqrt{2}) \sqrt{\psi_2} +2 \}, \Xi_{21_\sigma} \coloneqq \gamma_{3_\sigma} [\{(1+\sqrt{2})\sqrt{\psi_1}+2\}\|C_\sigma\|+(1+\sqrt{2}) \sqrt{\psi_2}], \Xi_{22_\sigma} \coloneqq \gamma_{3_\sigma} (1+\sqrt{2}) \sqrt{\psi_2}, \gamma_{1_\sigma} \coloneqq \lambda_{\min} (-\Gamma_{5_\sigma}), \gamma_{2_\sigma} \coloneqq \lambda_{\max} (P_{p_\sigma} BK), \gamma_{3_\sigma} \coloneqq \| P_{e_\sigma} L_\sigma\|,$ and
    \begin{align}
        \Gamma_{5_\sigma} \coloneqq \begin{bmatrix} \text{He}(P_{p_\sigma}(A-BK))+\frac{P^2_{p_\sigma}}{\varepsilon_{1_\sigma}} & P_{p_\sigma}BK \\ * & \text{He}(P_{e_\sigma}(A-L_\sigma C_\sigma))+\frac{P^2_{e_\sigma}}{\varepsilon_{2_\sigma}} \end{bmatrix}. \label{eq:Gamma_5}
    \end{align}
    Then, the control system (\ref{eq:system}) is ISS.
\end{theorem}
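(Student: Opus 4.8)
The plan is to follow the switched-systems template already set up in Sections~\ref{sec:control_update_policy}--\ref{sec:stability_ETM}: split the time axis by FSDoS activity, bound the Lyapunov function on "stable" and "unstable" intervals separately, and then show the average-dwell-time conditions \eqref{eq:FSDoS_condition}, \eqref{eq:varkappa}, \eqref{eq:convergance_1} make the stable contraction dominate. First I would invoke Lemma~\ref{lm:actuator_delay} to write $[\tau,t]=\overline{\Omega}(\tau,t)\sqcup\overline{\Upsilon}(\tau,t)$, so that on each $Y_{m-1}\subseteq\overline{\Upsilon}$ the ETM~\eqref{eq:update_rule_1} holds and the dissipation inequality \eqref{eq:V_dot_2} is in force (hence $\tfrac{d}{dt}V_\sigma\le -x^\top\Gamma_{1_\sigma}x+(\varepsilon_{1_\sigma}+\varepsilon_{2_\sigma}+\psi_1)f^2$, giving decay rate $\omega_{1_\sigma}$ from \eqref{eq:omega_1}), while on each $W_m\subseteq\overline{\Omega}$ the ETM may fail. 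On the $W_m$ intervals the control input is frozen (or zero, by \eqref{eq:u(t)}), so $\xi_\sigma,\xi_e$ grow; I would re-derive $\tfrac{d}{dt}V_\sigma$ without using \eqref{eq:static_etm}, instead bounding $\|\xi_\sigma(t)\|,\|\xi_e(t)\|$ by their growth since the last successful update via the Lemma~\ref{lm:inter-execution}-style flow estimate, and reorganize the quadratic form so the "bad" part is captured by $\Gamma_{3_\sigma}$ and $\Gamma_{4_\sigma}$; this is exactly where the constants $\gamma_{1_\sigma}=\lambda_{\min}(-\Gamma_{5_\sigma})$, $\gamma_{2_\sigma}=\lambda_{\max}(P_{p_\sigma}BK)$, $\gamma_{3_\sigma}=\|P_{e_\sigma}L_\sigma\|$ and the $(1+\sqrt{2})$ factors in $\Xi_{11_\sigma},\dots,\Xi_{22_\sigma}$ come from, yielding the growth rate $\omega_{2_\sigma}$ in \eqref{eq:omega_2_sigma}.

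Next I would assemble the interval-by-interval estimates into a single bound on $V_\sigma(x(t))$ (equivalently $\|x(t)\|^2$, using the eigenvalue ratios of $P_{p_\sigma},P_{e_\sigma}$ as in Lemma~\ref{lm:trajectory}). Over $[\tau,t]$ the total "stable" measure is $|\overline{\Upsilon}(\tau,t)|$ and the total "unstable" measure is $|\overline{\Omega}(\tau,t)|$; the accumulated exponent is roughly $-\omega_{1_\sigma}|\overline{\Upsilon}(\tau,t)|+\omega_{2_\sigma}|\overline{\Omega}(\tau,t)|$, together with a finite number of multiplicative overshoot jumps at the $\varphi_m$ boundaries and at the observer switches $\iota_g$ (the latter controlled by Assumption~\ref{as1} exactly as in Lemmas~\ref{lm:trajectory}--\ref{lm:convergence}). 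The key quantitative step is to relate $|\overline{\Omega}(\tau,t)|$ to $|\Omega(\tau,t)|$ plus the sampling-induced delay: each FSDoS interval is lengthened by at most one inter-sample time $\underline{\Delta}$, so $|\overline{\Omega}(\tau,t)|\le|\Omega(\tau,t)|+\underline{\Delta}\,n(\tau,t)$; plugging in Assumptions~\ref{as2}--\ref{as3} gives $|\overline{\Omega}(\tau,t)|\le (\zeta+\underline{\Delta}\eta)+\bigl(\tfrac{1}{T}+\tfrac{\underline{\Delta}}{\tau_F}\bigr)(t-\tau)$, and correspondingly $|\overline{\Upsilon}(\tau,t)|\ge (t-\tau)-|\overline{\Omega}(\tau,t)|$.

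Substituting these into the exponent gives an overall rate
\[
-\omega_{1_\sigma}\Bigl(1-\tfrac{1}{T}-\tfrac{\underline{\Delta}}{\tau_F}\Bigr)(t-\tau)+\omega_{2_\sigma}\Bigl(\tfrac{1}{T}+\tfrac{\underline{\Delta}}{\tau_F}\Bigr)(t-\tau)+\text{const},
\]
which is strictly negative precisely when $\tfrac{1}{T}+\tfrac{\underline{\Delta}}{\tau_F}<\tfrac{\omega_{1_\sigma}}{\omega_{1_\sigma}+\omega_{2_\sigma}}$ for every $\sigma$ — this is \eqref{eq:FSDoS_condition}. Taking the worst case over $\sigma$ and combining with the convergence of the observer-switching products/sums from Lemma~\ref{lm:convergence} (which needs \eqref{eq:varkappa} and \eqref{eq:convergance_1}) yields a $KL$ bound on $\|x(t)\|$ plus a $K_\infty$ term in $\|w_t\|_\infty$, i.e.\ ISS in the sense of Definition~\ref{def:1}. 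I expect the main obstacle to be the bookkeeping in the unstable intervals: carefully bounding the frozen-input error terms $\xi_\sigma,\xi_e$ over a whole $W_m$ (not just one inter-sample interval) and showing the resulting quadratic form is dominated by $\lambda_{\max}(\Gamma_{3_\sigma})+\lambda_{\max}(\Gamma_{4_\sigma})$ with exactly the stated constants — in particular tracking where the $(1+\sqrt2)$ factors and the $C_\sigma^\dagger$-type terms enter — and then threading this through the product-of-overshoots telescoping without losing the strict inequality. The disturbance terms require a parallel but routine accounting, reusing $f(t)$ from \eqref{eq:f}.
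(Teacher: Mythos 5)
Your plan reproduces the paper's architecture almost exactly: the decomposition of the time axis via Lemma~\ref{lm:actuator_delay}, the dissipation inequality \eqref{eq:V_dot_2} with rate $\omega_{1_\sigma}$ on the $Y_m$ intervals, a growth inequality with rate $\omega_{2_\sigma}$ on the $W_m$ intervals, the bound $|\overline{\Omega}(\tau,t)|\le|\Omega(\tau,t)|+O(n(\tau,t))\underline{\Delta}$ feeding Assumptions~\ref{as2}--\ref{as3} into the exponent, and the interleaving with the MCDoS switching products controlled by \eqref{eq:convergance_1} and \eqref{eq:varkappa}. This is the paper's proof in outline, and the final rate calculation yielding \eqref{eq:FSDoS_condition} is the same.

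The one sub-step where your mechanism diverges is the bound on $\xi_\sigma(t),\xi_e(t)$ over a $W_m$ interval. You propose a Lemma~\ref{lm:inter-execution}-style flow estimate (integrating the error dynamics from the last successful update), which would produce bounds growing with $e^{\|\mathcal{G}_\sigma\|(t-\varphi_m)}$ and would not reproduce the specific constants $\Gamma_{3_\sigma},\Gamma_{4_\sigma}$ appearing in $\omega_{2_\sigma}$. The paper instead exploits that \eqref{eq:update_rule_1} holds at $t=\varphi_m$ (the last instant before FSDoS prevents the update): from $\|\xi_\sigma(\varphi_m)\|^2+\|\xi_e(\varphi_m)\|^2\le\psi_1\|y_\sigma(\varphi_m)\|^2+\psi_2\|x_e(\varphi_m)\|^2+\psi_1\|w_{\varphi_m}\|^2_\infty$ one extracts, by adding and subtracting the cross term $2\|\xi_\sigma\|\|\xi_e\|$, the individual bounds \eqref{eq:error_fsdos_5}--\eqref{eq:error_fsdos_6} with the $\tfrac{1+\sqrt2}{2}$ coefficients; since the held values $\hat y_\sigma(t),\hat x_e(t)$ remain frozen at $t_{k(\varphi_m)}$ throughout $W_m$, the triangle inequality then gives \eqref{eq:xi_sigma_2}--\eqref{eq:xi_e_2}, i.e.\ bounds \emph{linear in} $\|x(\varphi_m)\|$ and $\|x(t)\|$ with no dependence on the length of $W_m$. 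This is what makes $\tfrac{d}{dt}V_\sigma\le\omega_{2_\sigma}\max\{V_\sigma(x(t)),V_\sigma(x(\varphi_m))\}+\nu_{2_\sigma}f^2(t)$ hold with the stated $\omega_{2_\sigma}$, so you would need to replace your flow-estimate step with this ETM-threshold argument to land on the theorem as stated; the rest of your plan then goes through as written.
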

\begin{proof}
        See the \href{\ref{appendiz}}{Appendix}.
\end{proof}
\section{Resilient control logic} \label{sec:resilient_control_logic}
We can't directly implement the ETM in \eqref{eq:update_rule_1} because of its dependency on the supremum norm of the disturbance $w$. We can avoid the term supremum norm of the disturbance in the event-based control logic in the basics in \eqref{eq:update_rule_1}. By omitting the disturbance, we have
\begin{align}
    \xi^\top_\sigma(t) \xi_\sigma(t) + \xi^\top_e(t) \xi_e(t) = \psi_1 y^\top_\sigma (t) y_\sigma (t)  + \psi_2 x^\top_e (t) x_e(t)
\end{align}
from \eqref{eq:update_rule_1}. If we omit the disturbance, we are not able to avoid the Zeno behaviour. Therefore, a lower bound on the sampling rate must be imposed a priori. The lower bound of the sampling rate ($\underline{\Delta}$) is calculated in Lemma \ref{lm:inter-execution}. Let
\begin{align}
    \varpi_{\sigma_k} \coloneqq & \inf \bigl \{t \in \mathbb{R}_{>t_k} : \xi^\top_\sigma(t^-) \xi_\sigma(t^-) +\xi^\top_e(t^-) \xi_e(t^-) \geq \psi_1 y^\top_\sigma (t) y_\sigma (t) \notag \\
    & + \psi_2 x^\top_e (t) x_e(t) \bigr \} \label{eq:varpi_sigma_k}
\end{align}
$\forall k \in \mathbb{N}$ and $ \forall \sigma \in \{1, \cdots, n_s\}$. 
\begin{proposition} \label{proposition:1}
    Let $\underline{\Delta}$ be the positive constant which follows \eqref{eq:underline_Delta} and \eqref{eq:static_interexecution} as in Lemma \ref{lm:inter-execution}. Then the sequence $(t_k)_{k \in \mathbb{I}}$ for ETM is formally defined by
    \begin{align}
        t_{k+1}= \begin{cases}
            t_k+\underline{\Delta}, \; &\text{if } \varpi_{\sigma_k} \leq t_k+\underline{\Delta}, \\
            \varpi_{\sigma_k}, \; &\text{otherwise}.
        \end{cases} \label{eq:proposition_1}
    \end{align}
\end{proposition}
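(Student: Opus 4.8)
The plan is to establish the two features that make \eqref{eq:proposition_1} an admissible control update rule: that the generated sampling sequence $(t_k)_{k\in\mathbb{I}}$ respects the dwell-time bound \eqref{eq:Delta_k} with the constant $\underline{\Delta}$ of Lemma \ref{lm:inter-execution}, and that along this sequence the (non-implementable) ETM \eqref{eq:update_rule_1} is satisfied at every transmission instant $t\in\Upsilon(0,t)$. Well-definedness requires no work: given $t_k$, the quantity $\varpi_{\sigma_k}$ of \eqref{eq:varpi_sigma_k} is a well-defined element of $(t_k,+\infty]$ under the convention $\inf\emptyset=+\infty$ (in which case the first branch is always selected), so \eqref{eq:proposition_1} produces a strictly increasing sequence; since each step is at least $\underline{\Delta}>0$, we get $t_k\to\infty$ and the intervals $[t_k,t_{k+1})$ exhaust the transmission-possible times.

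For the dwell-time bound I would simply read off the two branches of \eqref{eq:proposition_1}. In the first branch $t_{k+1}-t_k=\underline{\Delta}$; in the second branch the guard $\varpi_{\sigma_k}>t_k+\underline{\Delta}$ gives $t_{k+1}-t_k=\varpi_{\sigma_k}-t_k>\underline{\Delta}$. Either way $\Delta_k\geq\underline{\Delta}$, which is \eqref{eq:Delta_k}, so Zeno behaviour is precluded.

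The substantive claim is that \eqref{eq:update_rule_1} holds throughout each $[t_k,t_{k+1})$; I would fix $k$ and use that the held samples \eqref{eq:y_i}, \eqref{eq:x_o_hat} enforce the reset $\xi_\sigma(t_k)=0$ and $\xi_e(t_k)=0$ in view of \eqref{eq:xi_sigma_1}, \eqref{eq:xi_e_1}. \emph{If the second branch is taken}, then $t_{k+1}=\varpi_{\sigma_k}$ and, by the infimum in \eqref{eq:varpi_sigma_k} together with continuity of the signals between samples, every $t\in[t_k,\varpi_{\sigma_k})$ satisfies $\xi^\top_\sigma(t)\xi_\sigma(t)+\xi^\top_e(t)\xi_e(t)\leq\psi_1 y^\top_\sigma(t)y_\sigma(t)+\psi_2 x^\top_e(t)x_e(t)$; adding the nonnegative $\psi_1\|w_t\|^2_\infty$ on the right recovers \eqref{eq:update_rule_1}. \emph{If the first branch is taken}, then $t_{k+1}=t_k+\underline{\Delta}$ while possibly $\varpi_{\sigma_k}\leq t_k+\underline{\Delta}$; here I would invoke Lemma \ref{lm:inter-execution}, noting that since the errors vanish at $t_k$, the error trajectory on $[t_k,\cdot)$ — and hence the first instant after $t_k$ at which \eqref{eq:update_rule_1} can turn into an equality — depends only on the state at $t_k$ and on $w$, not on how $t_k$ itself was generated. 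By that lemma this first instant is no earlier than $t_k+\underline{\Delta}$, so \eqref{eq:update_rule_1} holds on all of $[t_k,t_k+\underline{\Delta})=[t_k,t_{k+1})$. Taking the union over $k$ covers every $t\in\Upsilon(0,t)$.

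The main obstacle is the first-branch case: one must argue that postponing the update from the earlier implementable trigger $\varpi_{\sigma_k}$ to $t_k+\underline{\Delta}$ does not violate \eqref{eq:update_rule_1}. The resolution is that the disturbance slack $\psi_1\|w_t\|^2_\infty$, present in \eqref{eq:update_rule_1} but deliberately dropped from the implementable threshold \eqref{eq:varpi_sigma_k}, is exactly what absorbs the overshoot on $[\varpi_{\sigma_k},t_k+\underline{\Delta})$, and the quantitative guarantee that it suffices for a whole interval of length $\underline{\Delta}$ is precisely Lemma \ref{lm:inter-execution} (whose hypotheses only ask $\psi_1,\psi_2\geq0$). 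The only remaining care is bookkeeping: the one-sided limits $t^-$ in \eqref{eq:varpi_sigma_k} versus $t$ on the right-hand side, and the degenerate case $\varpi_{\sigma_k}=+\infty$, both handled by continuity between samples and the stated convention.
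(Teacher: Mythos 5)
Your proposal is correct, and its first two paragraphs reproduce the paper's entire argument: the paper's proof consists only of the two-branch case analysis showing $\Delta_k=\underline{\Delta}$ in the first branch and $\Delta_k=\varpi_{\sigma_k}-t_k>\underline{\Delta}$ in the second, hence \eqref{eq:Delta_k} holds and Zeno behaviour is excluded. Where you differ is that you go on to prove a second, stronger property that the paper leaves implicit, namely that the sequence generated by \eqref{eq:proposition_1} actually keeps the non-implementable SETM \eqref{eq:update_rule_1} satisfied on every inter-sample interval in $\Upsilon(0,t)$. Your handling of the delicate first-branch case --- where the implementable threshold $\varpi_{\sigma_k}$ may fire before $t_k+\underline{\Delta}$ but the update is nevertheless postponed --- is the right argument: since $\xi_\sigma$ and $\xi_e$ reset at $t_k$, Lemma \ref{lm:inter-execution} bounds from below the time for $\phi_\sigma$ to climb from $0$ to $\sqrt{\psi}$, and \eqref{eq:psi} implies \eqref{eq:update_rule_1} with the $\psi_1\|w_t\|^2_\infty$ slack absorbing the overshoot on $[\varpi_{\sigma_k},t_k+\underline{\Delta})$. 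This extra step is precisely what Theorems \ref{th1} and \ref{th:2} need in order to invoke \eqref{eq:update_rule_1} under the implementable logic, so your version buys a logically complete bridge between Section \ref{sec:resilient_control_logic} and the stability results, at the cost of relying on the reset $\xi_\sigma(t_k)=\xi_e(t_k)=0$, which holds only at successful transmission instants --- a restriction you correctly impose by working on $\Upsilon(0,t)$, consistent with the hypothesis of Lemma \ref{lm:inter-execution}.
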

\begin{proof}
    The inter-sampling related to \eqref{eq:proposition_1} is equivalent to $\varpi_{\sigma_k}-t_k$ or $\underline{\Delta}$. By \eqref{eq:proposition_1}, $t_{k+1}=\varpi_{\sigma_k}$ only if $\varpi_{\sigma_k} \geq t_k+\underline{\Delta}$. Otherwise, the control update occurs at a minimum feasible time of $\underline{\Delta}$. With this resilient control logic, we are avoiding the Zeno behaviour.
\end{proof}

\section{Conclusion}
This article has presented a comprehensive characterization of FSDoS, complementing the prior work on MCDoS. Addressing the need for a proper characterization of FSDoS, we proposed two assumptions to constrain its frequency and duration. Subsequently, we derived the appropriate bounds for these parameters to mitigate the impact of FSDoS. With the inclusion of the switched observer, the augmented system exhibits behavior akin to a switched system. Analyzing the stability of this system under the combined effects of MCDoS and FSDoS with three distinct assumptions poses significant challenges, which we have successfully addressed in this work. Lastly, we introduced a resilient ETM to effectively handle FSDoS, while ensuring its frequency and duration are properly characterized.

\section*{Appendix} \label{appendiz}
$Proof$ $of$ $Theorem$ $\ref{th1}$ $:$
Employing the Schur complement and substitution approach \cite{boyd1994linear} in \eqref{eq:Gamma_1}, we get \eqref{eq:Gamma_2}. So \eqref{eq:Gamma_1} and \eqref{eq:Gamma_2} is equivalent. Hence $\Gamma_{1_\sigma}>0$. Therefore, the smallest eigenvalues of $\Gamma_{1_\sigma}$ and $\Gamma_{2_\sigma}$ are greater than zero. Thus, equation \eqref{eq:V_dot_2} can now be rewritten as 
\begin{equation}\label{eq:V_dot_4}
	\frac{d}{dt}V_\sigma(x(t))\leq -\zeta_{1_\sigma}\|x(t)\|^2 +(\varepsilon_{1_\sigma}+ \varepsilon_{2_\sigma} + \psi_1) f^2(t),
\end{equation}
where $x(t)$ and $f(t)$ are mentioned in \eqref{eq:x(t)} and \eqref{eq:f}, respectively and
\begin{align}
    \zeta_{1_\sigma} \coloneqq \lambda_{\min} (\Gamma_{1_\sigma}), \label{eq:zeta_1}
\end{align}
which is also the smallest eigenvalue of $\Gamma_{2_\sigma}$ as $\Gamma_{2_\sigma}>0$ is the LMI of $\Gamma_{1_\sigma}>0$. Furthermore, Lyapunov function in  \eqref{eq:V_1} can be recast as
\begin{align}
   \underline{\alpha}_{p_\sigma}\left \|x_p(t)\right \|^2 +\underline{\alpha}_{e_\sigma}\left\|\tilde{x}(t)\right \|^2 \leq V_\sigma (x_p(t),\tilde{x}(t))\leq \overline{\alpha}_{p_\sigma}\left \|x_p(t)\right \|^2+\overline{\alpha}_{e_\sigma}\left\|\tilde{x}(t)\right \|^2, \label{eq:V_2}
\end{align}
where $\underline{\alpha}_{p_\sigma} \coloneqq \lambda_{\min} (P_{p_\sigma}), \underline{\alpha}_{e_\sigma} \coloneqq \lambda_{\min}(P_{e_\sigma}), \overline{\alpha}_{p_\sigma} \coloneqq \lambda_{\max} (P_{p_\sigma})$ and $\overline{\alpha}_{e_\sigma} \coloneqq \lambda_{\max} (P_{e_\sigma})$. Equation \eqref{eq:V_dot_4} implies
\begin{align}
	\frac{d}{dt}V_\sigma(x(t))\leq -\omega_{1_\sigma} V_\sigma(x(t)) + (\varepsilon_{1_\sigma}+ \varepsilon_{2_\sigma} + \psi_1) f^2(t), \label{eq:V_dot_5}
\end{align}
where $\omega_{1_\sigma}$ is mentioned in \eqref{eq:omega_1}.
Hence, from the comparison lemma, we get
\begin{align}
   V_\sigma(x(t))\leq e^{-\omega_{1_\sigma} \left (t-\iota_{g(t)} \right )}V_\sigma(x(\iota^+_{g(t)}))+ \nu_{1_\sigma}\|w_t\|^2_\infty, \label{eq:V_3}
\end{align}
in the interval $[\iota_{g(t)},t]$ where
\begin{align}
    \nu_{1_\sigma}\coloneqq \frac{\varepsilon_{1_\sigma}+\varepsilon_{2_\sigma} +\psi_1}{\omega_{1_\sigma}} \label{eq:nu_1}
\end{align}
and $g(t)$ is mentioned in \eqref{eq:g(t)}. Note that $\|f_t\|_\infty=\|w_t\|_\infty.$ Following the Lemma \ref{lm:trajectory}, we get
\begin{align}
    &\|x(t)\|^2 \leq \frac{\overline{\alpha}_{\sigma(\iota^+_{g(t)})} \cdots \overline{\alpha}_{\sigma(\iota^+_0)}}{\underline{\alpha}_{\sigma(\iota^+_{g(t)})} \cdots \underline{\alpha}_{\sigma(\iota^+_0)}} e^{-\left (\sum_{i=1}^{n_s} \omega_{1_i} | \mathcal{O}_i (\iota_0,t)|\right )}\|x(\iota_0)\|^2 + \Biggl [ \frac{\nu_{1_{\sigma(\iota^+_{g(t)})}}}{\underline{\alpha}_{\sigma(\iota^+_{g(t)})}} + \mathlarger{\mathlarger{\sum}}_{\substack{l \in \mathbb{N};\\ 0 \leq \iota_{g(t)-l} \leq t}} \notag\\ & \Biggl \{ \frac{\nu_{1_{\sigma(\iota^+_{g(t)-l})}}}  { \underline{\alpha}_{\sigma(\iota^+_{g(t)})}} \prod_{\substack{j \in \mathbb{N}_0; \\ \iota_{g(t)-l}< \iota_{g(t)-j} \leq t}} \Bigl ( \frac{ \overline{\alpha}_{\sigma(\iota^+_{g(t)-j})}} { \underline{\alpha}_{\sigma(\iota^+_{g(t)-(j+1)})}} \Bigr ) e^{-\left ( \sum_{i=1}^{n_s} \omega_{1_i} |\mathcal{O}_i(\iota_{g(t)-(l-1)},t)| \right) }  \Biggr \} \Biggr ]  {\|w_t\|}^2_\infty, \label{eq:trajectory_2}
\end{align}
where $\underline{\alpha}_\sigma\coloneqq \min\{\underline{\alpha}_{s_{\sigma}},\underline{\alpha}_{e_{\sigma}}\}$ and $\overline{\alpha}_\sigma \coloneqq \max \{\overline{\alpha}_{s_{\sigma}},\overline{\alpha}_{e_{\sigma}}\},$ and $\iota_0=0$. From Lemma \ref{lm:convergence}, we can say that the term associated with $\|x(\iota_0)\|^2$ and the summation term associated with $\|w_t\|^2_\infty$ are convergent in nature. Every convergent series is bounded. Hence, the summation term in \eqref{eq:trajectory_2} is bounded. Let's denote the bound by $G_1$. Since $a^2+b^2\leq (a+b)^2$ for any pair of positive reals $a$ and $b$, we finally arrive at the following
\begin{align}
    \|x(t)\| \leq &\sqrt{\frac{\overline{\alpha}_{\sigma(\iota^+_0)} \cdots \overline{\alpha}_{\sigma(\iota^+_{g(t)})}}{\underline{\alpha}_{\sigma(\iota^+_0)} \cdots \underline{\alpha}_{\sigma(\iota^+_{g(t)})}}} e^{-\frac{1}{2}\left (\sum_{i=1}^{n_s} \omega_{1_i} | \mathcal{O}_i (\iota_0,t)|\right )} \|x(\iota_0)\| \notag \\ &+ \sqrt{\frac{(\varepsilon_1+\varepsilon_2+\psi_1) \overline{\alpha}}{\zeta_1 \underline{\alpha}}+G_1}{\|w_t\|}_\infty, \label{eq:trajectory_3}
\end{align}
where $\underline{\alpha} \coloneqq \min_{\forall \sigma \in \{1,\cdots,n_s\}}\{\underline{\alpha}_\sigma\},$ $\overline{\alpha}\coloneqq \max_{\forall \sigma \in \{1,\cdots,n_s\}}\{\overline{\alpha}_\sigma\},$ $\varepsilon_1 \coloneqq $\\$ \max_{\forall \sigma \in \{1,\cdots, n_s\}} \varepsilon_{1_\sigma},$ $\varepsilon_2 \coloneqq \max_{\forall \sigma \in \{1, \cdots, n_s\}} \varepsilon_{2_\sigma},$ and $\zeta_1 \coloneqq \min_{\forall \sigma \in \{1, \cdots, n_s\}} \zeta_{1_\sigma}$.
Therefore, comparing \eqref{eq:trajectory_3} with \eqref{eq:ISS}, \eqref{eq:system} is ISS. \hfill $\blacksquare$ \\
$Proof$ $of$ $Lemma$ $\ref{lm:inter-execution}$ $:$
From \eqref{eq:system}, we can write
    \begin{align}
        &\dot{x}_p(t)=Ax_p(t) -BKx_e(t)-BK \xi_e(t)+w(t) \notag\\
        &\dot{x}_e(t)=(A-BK-L_\sigma C_\sigma)x_e(t)+ L_\sigma C_\sigma x_p(t) +L_\sigma \xi_\sigma (t) -BK\xi_e(t). \label{eq:system_2}
    \end{align}
    Hence,
    \begin{align}
        \dot{x}_\sigma(t)= \mathcal{G}_\sigma x_\sigma (t)+\mathcal{H}_\sigma e_\sigma(t) + \begin{bmatrix}
            C^\top_\sigma & 0
        \end{bmatrix}^\top w(t), \label{eq:x_sigma_dot}
    \end{align}
    where
    \begin{align}
        & x_\sigma(t) \coloneqq \begin{bmatrix}
            y^\top_\sigma(t) & x^\top_e(t)
        \end{bmatrix}^\top  \text{ and} \label{eq:x_sigma} \\
        & e_\sigma(t) \coloneqq \begin{bmatrix}
            \xi^\top_\sigma(t) & \xi^\top_e(t)
        \end{bmatrix}^\top. \label{eq:e_sigma}
    \end{align}
    as it follows from \eqref{eq:system_2}.
    From \eqref{eq:x_sigma_dot}, we can state
    \begin{align}
        \|\dot{x}_\sigma(t)\| \leq \|\mathcal{G}_\sigma\|\|x_\sigma(t)\| + \|\mathcal{H}_\sigma \| \|e_\sigma(t)\| +\|C_\sigma\| \|w_t\|_\infty. \label{eq:norm_x_sigma_dot}
    \end{align}
    If 
    \begin{align}
        \|e_\sigma(t)\| \leq \sqrt{\psi} \sqrt{\|x_\sigma(t)\|^2+\|w_t\|^2_\infty}, \label{eq:psi}
    \end{align} 
    then \eqref{eq:update_rule_1} is true where $\psi$ is mentioned in \eqref{eq:super_psi}. Thus, it obeys a lower bound on the inter-execution time given by the time it takes for the function
    \begin{align}
        \phi_\sigma(t) \coloneqq \frac{ \|e_\sigma(t)\|}{\sqrt{\|x_\sigma(t)\|^2 +\|w_t\|^2_\infty}}
    \end{align}
    to go from $0$ to $\sqrt{\psi}$. We have
    \begin{align}
        \dot{\phi}_\sigma(t) = \frac{ e^\top_\sigma(t) \dot{e}_\sigma(t)}{\|e_\sigma(t)\| \sqrt{\|x_\sigma(t)\|^2+ \|w_t\|^2_\infty}} - \frac{\|e_\sigma(t)\| x^\top_\sigma(t) \dot{x}_\sigma(t)}{\left (\|x_\sigma(t)\|^2+ \|w_t\|^2_\infty \right )^{\frac{3}{2}}}.
    \end{align}
    By observing that
    \begin{align}
        \dot{e}_\sigma(t)=-\dot{x}_\sigma(t)   \label{eq:dot_e_sigma} 
    \end{align}
    and the bound of $\|\dot{x}_\sigma(t)\|$ in \eqref{eq:norm_x_sigma_dot}, it follows that
    \begin{align}
        \dot{\phi}_\sigma(t) \leq & \frac{1}{\sqrt{\|x_\sigma(t)\|^2 + \|w_t\|^2_\infty}} \left (\|\mathcal{G}_\sigma\|\|x_\sigma(t)\| + \|\mathcal{H}_\sigma \| \|e_\sigma(t)\| +\|C_\sigma\| \|w_t\|_\infty \right ) \notag \\
        & + \frac{ \|e_\sigma(t)\|}{\left (\|x_\sigma(t)\|^2+ \|w_t\|^2_\infty\right)^ \frac{3}{2}} (\|\mathcal{G}_\sigma\|\|x_\sigma(t)\|^2 + \|\mathcal{H}_\sigma \| \|x_\sigma(t)\| \|e_\sigma(t)\| \notag \\ &+\|C_\sigma\| \|x_\sigma(t)\| \|w_t\|_\infty ) \notag \\
        \leq & \|\mathcal{G}_\sigma\| \frac{\|x_\sigma(t)\|}{\sqrt{\|x_\sigma(t)\|^2+ \|w_t\|^2_\infty}} + \|\mathcal{H}_\sigma \| \phi_\sigma (t)+ \|C_\sigma\| \frac{\|w_t\|_\infty}{\sqrt{\|x_\sigma(t)\|^2 + \|w_t\|^2_\infty}} \notag \\
        & + \|\mathcal{G}_\sigma \| \frac{\|x_\sigma(t)\|^2}{\|x_\sigma(t)\|^2+\|w_t\|^2_\infty} \phi_\sigma(t) + \|\mathcal{H}_\sigma\| \frac{\|x_\sigma(t)\|}{\sqrt{\|x_\sigma(t)\|^2+ \|w_t\|^2_\infty}} \phi^2_\sigma(t) \notag \\
        &+ \|C_\sigma\| \frac{\|x_\sigma(t)\|}{\sqrt{\|x_\sigma(t)\|^2 +\|w_t\|^2_\infty}} \frac{\|w_t\|_\infty}{\sqrt{\|x_\sigma(t)\|^2 +\|w_t\|^2_\infty}} \phi_\sigma(t).
    \end{align}
    Note that $\|x_\sigma(t)\|$ and $\|w_t\|_\infty$ are greater than zero. Hence, we can write
    \begin{align}
        \dot{\phi}_\sigma (t) \leq \|\mathcal{G}_\sigma\| +\|C_\sigma\| + \left ( \|\mathcal{G}_\sigma\| + \|\mathcal{H}_\sigma\| +\|C_\sigma\| \right ) \phi_\sigma(t) +\|\mathcal{H}_\sigma\| \phi^2_\sigma(t).
    \end{align}
    Thus, the comparison lemma for differential inequalities yields
    \begin{align}
        \int_{t_k}^{t_{k+1}} \; dt \geq \int_0^{\sqrt{\psi}} \frac{1}{\|\mathcal{G}_\sigma\| + \|C_\sigma\| +\left (\|\mathcal{G}_\sigma\| + \|\mathcal{H}_\sigma\| +\|C_\sigma\| \right ) \phi_\sigma + \|\mathcal{H}_\sigma\|\phi^2_\sigma} \; d\phi_\sigma .
    \end{align}
    Replacing $\phi_\sigma$ with $s\coloneqq \phi_\sigma/\sqrt{\psi},$ we get the result of \eqref{eq:static_interexecution}. \hfill $\blacksquare$ \\

$Proof$ $of$ $Theorem$ $\ref{th:2}$ $:$ In lemma \ref{lm:actuator_delay}, it is already mentioned the decomposition technique of time interval where it is possible to hold \eqref{eq:update_rule_1} and where \eqref{eq:update_rule_1} does not have to hold owing to the presence of FSDoS. Consider the intervals $Y_m,m\in \mathbb{N}$ mentioned in \eqref{eq:Y_m}, where \eqref{eq:update_rule_1} holds. From \eqref{eq:trajectory_3}, we can write
\begin{align}
    \|x(t)\| \leq &\sqrt{\frac{\overline{\alpha}_{\sigma(\varphi_m+v_m)} \cdots \overline{\alpha}_{\sigma(t)}}{\underline{\alpha}_{\sigma(\varphi_m+v_m)} \cdots \underline{\alpha}_{\sigma(t)}}} e^{-\frac{1}{2}\left (\sum_{i=1}^{n_s} \omega_{1_i} | \mathcal{O}_i (\varphi_m+v_m,t)|\right )} \|x(\varphi_m+v_m)\| \notag \\ &+ \sqrt{\frac{(\varepsilon_1+\varepsilon_2+\psi_1) \overline{\alpha}}{\zeta_1 \underline{\alpha}}+G_1}{\|w_t\|}_\infty,
\end{align}
$\forall m\in\mathbb{N}$ and $\forall t\in Y_m$. All relevant constants are the same as in the proof of Theorem \ref{th1}.\\
Consider the intervals $W_m,$ $m \in \mathbb{N}$, across which the inequality \eqref{eq:update_rule_1} may not hold. In this situation, certain intermediary procedures are required to derive a constraint on the development of $V(x(t))$. First, we have to prove that each $m\in\mathbb{N}$
\begin{align}
    \|\xi_\sigma(t)\| \leq & \left( \frac{1+\sqrt{2}}{2} \sqrt{\psi_1} +1 \right ) \|y_\sigma (\varphi_m)\| +\frac{1+\sqrt{2}}{2} \sqrt{\psi_2} \|x_e(\varphi_m)\| + \|y_\sigma (t)\| \notag\\
    & + \frac{1+\sqrt{2}}{2} \sqrt{\psi_1} \|w_{\varphi_m}\|_\infty \label{eq:xi_sigma_2} \\
    \|\xi_e(t)\| \leq & \frac{1+\sqrt{2}}{2} \sqrt{\psi_1} \|y_\sigma(\varphi_m)\| + \left (\frac{1+\sqrt{2}}{2}\sqrt{\psi_2}+1 \right) \|x_e(\varphi_m)\|+ \|x_e(t)\| \notag \\
    &+\frac{1+\sqrt{2}}{2} \sqrt{\psi_1} \|w_{\varphi_m}\|_\infty \label{eq:xi_e_2}
\end{align}
$\forall t\in W_m$. From \eqref{eq:update_rule_1}, we can write
\begin{align}
    \|\xi_\sigma (\varphi_m)\|^2 +\|\xi_e (\varphi_m)\|^2 \leq \psi_1 \|y_\sigma (\varphi_m)\|^2 +\psi_2 \|x_e(\varphi_m)\|^2 +\psi_1 \|w_{\varphi_m}\|_\infty^2. \label{eq:error_fsdos}
\end{align}
From \eqref{eq:error_fsdos}, we can write
\begin{align}
    2\|\xi_\sigma (\varphi_m)\|^2 +2\|\xi_e (\varphi_m)\|^2 \leq 2\psi_1 \|y_\sigma (\varphi_m)\|^2 +2\psi_2 \|x_e(\varphi_m)\|^2 +2\psi_1 \|w_{\varphi_m}\|_\infty^2. \label{eq:error_fsdos_3}
\end{align}
From \eqref{eq:error_fsdos_3}, we can achieve
\begin{align}
    & \|\xi_\sigma (\varphi_m)\|^2 +\|\xi_e (\varphi_m) \|^2 +2 \|\xi_\sigma (\varphi_m) \| \|\xi_e (\varphi_m)\| \leq 2 \{ \psi_1 \|y_\sigma (\varphi_m) \|^2 + \psi_2 \|x_e(\varphi_m)\|^2 \notag \\
    & + \psi_1 \|w_{\varphi_m}\|_\infty^2+ 2 \sqrt{\psi_1} \|y_\sigma (\varphi_m) \| \sqrt{\psi_2} \|x_e(\varphi_m)\|+ 2\sqrt{\psi_2} \|x_e(\varphi_m)\| \sqrt{\psi_1} \|w_{\varphi_m}\|_\infty \notag \\
    & + 2 \sqrt{\psi_1} \|w_{\varphi_m}\|_\infty \sqrt{\psi_1}\|y_\sigma(\varphi_m)\| \}. 
\end{align}
Hence,
\begin{align}
    \|\xi_\sigma(\varphi_m)\|+ \|\xi_e (\varphi_m)\| \leq \sqrt{2 \psi_1} ( \|y_\sigma (\varphi_m)\| + \|w_{\varphi_m}\|_\infty ) +\sqrt{2 \psi_2} \|x_e (\varphi_m)\|. \label{eq:error_fsdod_4}
\end{align}
From \eqref{eq:error_fsdos}, we can reach
\begin{align}
    &\|\xi_\sigma (\varphi_m)\|^2 +\|\xi_e (\varphi_m)\|^2 -2 \|\xi_\sigma (\varphi_m)\| \|\xi_e(\varphi_m) \| \leq \psi_1 \|y_\sigma (\varphi_m) \|^2 + \psi_2 \|x_e (\varphi_m)\|^2 \notag \\
    &+\psi_1 \|w_{\varphi_m} \|^2_\infty +2\sqrt{\psi_1} \|y_\sigma (\varphi_m)\| \sqrt{\psi_2} \|x_e (\varphi_m)\| +2\sqrt{\psi_2} \|x_e(\varphi_m)\| \sqrt{\psi_1} \|w_{\varphi_m}\|_\infty \notag \\
    & +2 \sqrt{\psi_1} \|w_{\varphi_m}\|_\infty \sqrt{\psi_1} \|y_\sigma(\varphi_m)\|.
\end{align}
Hence,
\begin{align}
    \|\xi_\sigma (\varphi_m) \|- \|\xi_e (\varphi_m)\| \leq \sqrt{\psi_1} \left ( \|y_\sigma (\varphi_m)\|+ \|w_{\varphi_m}\|_\infty \right )+ \sqrt{\psi_2} \|x_e (\varphi_m)\| \label{eq:error_fsdos_1}
\end{align}
and
\begin{align}
    \|\xi_e (\varphi_m)\|- \|\xi_\sigma (\varphi_m) \| \leq \sqrt{\psi_1} \left ( \|y_\sigma (\varphi_m)\|+ \|w_{\varphi_m}\|_\infty \right )+ \sqrt{\psi_2} \|x_e (\varphi_m)\|. \label{eq:error_fsdos_2}
\end{align}
By adding \eqref{eq:error_fsdod_4} and \eqref{eq:error_fsdos_1}, we get
\begin{align}
    \|\xi_\sigma (\varphi_m) \| \leq \frac{1+\sqrt{2}}{2} \sqrt{\psi_1} (\|y_\sigma (\varphi_m)\|+ \|w_{\varphi_m}\|_\infty) + \frac{1+\sqrt{2}}{2} \sqrt{\psi_2} \|x_e (\varphi_m)\| \label{eq:error_fsdos_5}
\end{align}
and by adding \eqref{eq:error_fsdod_4} and \eqref{eq:error_fsdos_2}, we get
\begin{align}
    \|\xi_e (\varphi_m)\| \leq \frac{1+\sqrt{2}}{2} \sqrt{\psi_1} (\|y_\sigma (\varphi_m)\|+ \|w_{\varphi_m}\|_\infty) + \frac{1+\sqrt{2}}{2} \sqrt{\psi_2} \|x_e (\varphi_m)\|. \label{eq:error_fsdos_6}
\end{align}
Hence from \eqref{eq:error_fsdos_5},
\begin{align}
    \|y_\sigma (t_{k(\varphi_m)})- y_\sigma (\varphi_m)\| \leq &\frac{1+\sqrt{2}}{2} \sqrt{\psi_1} (\|y_\sigma (\varphi_m)\|+ \|w_{\varphi_m}\|_\infty) \notag \\
    & + \frac{1+\sqrt{2}}{2} \sqrt{\psi_2} \|x_e (\varphi_m)\|
\end{align}
and \eqref{eq:xi_sigma_2} follows by applying the triangular inequality. From \eqref{eq:error_fsdos_6}, we are able to reach out
\begin{align}
    \|x_e(t_{k(\varphi_m)}) -x_e(\varphi_m) \| \leq &\frac{1+\sqrt{2}}{2} \sqrt{\psi_1} (\|y_\sigma (\varphi_m)\|+ \|w_{\varphi_m}\|_\infty) \notag \\ 
    & + \frac{1+\sqrt{2}}{2} \sqrt{\psi_2} \|x_e (\varphi_m)\|
\end{align}
and we get \eqref{eq:xi_e_2} by applying triangular inequality. From \eqref{eq:V_dot_1}, we get
\begin{align}
    &\frac{d}{dt}V_\sigma(x(t), e_\sigma(t)) \leq x^\top(t) \Gamma_{5_\sigma} x(t) -2x^\top_p(t)P_{p_\sigma}BK\xi_e(t) -2\tilde{x}^\top(t) P_{e_\sigma} L_\sigma \xi_\sigma (t) \notag \\
    & +(\varepsilon_{1_\sigma}+ \varepsilon_{2_\sigma}) w^\top(t)w(t), \label{eq:V_dot_fsdos_21}
\end{align}
where $ \Gamma_{5_\sigma}, \varepsilon_{1_\sigma},$ and $\varepsilon_{2_\sigma}$ are mentioned in \eqref{eq:Gamma_5} and \eqref{eq:V_dot_6}. If $\Gamma_{2_\sigma}$ is positive definite then $\Gamma_{5_\sigma}$ is negative definite. Hence,
\begin{align}
    & \frac{d}{dt} V_\sigma (x(t), e_\sigma(t)) \leq -\gamma_{1_\sigma}\|x(t)\|^2 +2 \gamma_{2_\sigma}\|x_p(t)\| \|\xi_e(t)\| +2 \gamma_{3_\sigma} \|\tilde{x}(t)\| \|\xi_\sigma(t)\| \notag \\
    & +(\varepsilon_{1_\sigma}+ \varepsilon_{2_\sigma}) \|w(t)\|^2 \label{eq:V_dot_fsdos_1}
\end{align}
where $\gamma_{1_\sigma}, \gamma_{2_\sigma},$ and $\gamma_{3_\sigma}$ are mentioned in \eqref{eq:Gamma_3}. From \eqref{eq:xi_sigma_2}, and \eqref{eq:xi_e_2} we have
\begin{align}
    &\|\xi_\sigma (t)\| \leq \left \{ \left ( \frac{1+\sqrt{2}}{2} \sqrt{\psi_1}+1 \right ) \|C_\sigma\| +\frac{1+\sqrt{2}}{2}\sqrt{\psi_2} \right \} \|x_p(\varphi_m)\| + \frac{1+\sqrt{2}}{2} \notag \\
    &  \sqrt{\psi_2}\|\tilde{x}(\varphi_m)\| +\|C_\sigma\| \|x_p(t)\| + \frac{1+\sqrt{2}}{2} \sqrt{\psi_1} \|w_{\varphi_m}\|_\infty \label{eq:xi_sigma_3}
\end{align}
and
\begin{align}
    & \|\xi_e (t)\| \leq \left ( \frac{1+\sqrt{2}}{2} \sqrt{\psi_1} \|C_\sigma\| + \frac{1+\sqrt{2}}{2} \sqrt{\psi_2} +1 \right ) \|x_p(\varphi_m)\| + \Biggl ( \frac{1+\sqrt{2}}{2} \sqrt{\psi_2} \notag \\
    & +1 \Biggr ) \|\tilde{x} (\varphi_m)\| +\|x_p(t)\| +\|\tilde{x}(t)\| + \frac{1+\sqrt{2}}{2} \sqrt{\psi_1} \|w_{\varphi_m}\|_\infty, \label{eq:xi_e_3}
\end{align}
respectively, by using \eqref{eq:x_tilde} and $y_\sigma (t)=C_\sigma x_p(t)$. Using \eqref{eq:xi_sigma_3} and \eqref{eq:xi_e_3}, in \eqref{eq:V_dot_fsdos_1}, we get
\begin{align}
    &\frac{d}{dt} V_\sigma (x_p(t), \tilde{x}(t), x_p(\varphi_m), \tilde{x}(\varphi_m), f(t)) \leq (2 \gamma_{2_\sigma} - \gamma_{1_\sigma}) \|x_p(t)\|^2 -\gamma_{1_\sigma} \|\tilde{x}(t) \|^2 \notag \\
    & +2 \gamma_{3_\sigma} \|C_\sigma\| \|x_p(t)\| \|\tilde{x}(t)\| + \gamma_{2_\sigma} \{(1+\sqrt{2}) \sqrt{\psi_1} \|C_\sigma\| +(1+\sqrt{2}) \sqrt{\psi_2} +2\} \notag \\
    & \|x_p(t)\| \|x_p(\varphi_m)\| + \gamma_{2_\sigma} \{(1+\sqrt{2}) \sqrt{\psi_2}+2\} \|x_p(t)\| \|\tilde{x}(\varphi_m)\| + \gamma_{3_\sigma} [\{ (1+\sqrt{2}) \notag \\
    & \sqrt{\psi_1}+2 \} \|C_\sigma\| +(1+\sqrt{2}) \sqrt{\psi_2}] \|\tilde{x}(t)\| \|x_p(\varphi_m)\| +\gamma_{3_\sigma} (1+\sqrt{2}) \sqrt{\psi_2} \|\tilde{x}(t)\| \notag \\
    & \|\tilde{x}(\varphi_m)\| +\gamma_{2_\sigma} (1+\sqrt{2}) \sqrt{\psi_1} \|x_p(t)\| \|w_{\varphi_m}\|_\infty +\gamma_{3_\sigma} (1+\sqrt{2}) \sqrt{\psi_1} \|\tilde{x}(t)\| \|w_{\varphi_m}\|_\infty \notag \\
    & +(\varepsilon_{1_\sigma}+ \varepsilon_{2_\sigma}) \|w(t)\|^2, \label{eq:V_dot_fsdos_2}
\end{align}
where $f(t)$ is mentioned in \eqref{eq:f}. By using Young's inequality, we have
\begin{align}
    \gamma_{2_\sigma} (1+\sqrt{2}) \sqrt{\psi_1} \|x_p(t)\| \|w_{\varphi_m}\|_\infty \leq \gamma_{2_\sigma} \frac{1+ \sqrt{2}}{2 \varepsilon_{3_\sigma}} \sqrt{\psi_1} \|x_p(t)\|^2 +\frac{\varepsilon_{3_\sigma}}{2}  \|w_{\varphi_m}\|^2_\infty \label{eq:yi_1}
\end{align}
and
\begin{align}
    \gamma_{3_\sigma} (1+\sqrt{2}) \sqrt{\psi_1} \|\tilde{x}(t)\| \|w_{\varphi_m}\|_\infty \leq \gamma_{3_\sigma} \frac{1+\sqrt{2}}{2 \varepsilon_{4_\sigma}} \sqrt{\psi} \|\tilde{x}(t)\|^2
    + \frac{\varepsilon_{4_\sigma}}{2} \|w_{\varphi_m}\|^2(t). \label{eq:yi_2}
\end{align}
If we choose $\varepsilon_{3_\sigma}$ and $\varepsilon_{4_\sigma}$ very high then the values of $\gamma_{2_\sigma} \frac{1+ \sqrt{2}}{2 \varepsilon_{3_\sigma}} \sqrt{\psi_1} \|x_p(t)\|^2$ and $\gamma_{3_\sigma} \frac{1+\sqrt{2}}{2 \varepsilon_{4_\sigma}} \sqrt{\psi} \|\tilde{x}(t)\|^2$ are negligible. Using \eqref{eq:yi_1} and \eqref{eq:yi_2} in \eqref{eq:V_dot_fsdos_2}, we get
\begin{align}
    \frac{d}{dt} V_\sigma (X_f(t)) \leq &X^\top_f(t) \Gamma_{3_\sigma} X_f(t) +X^\top_f(t) \Gamma_{4_\sigma} X_f(\varphi_m) + \nu_{2_\sigma} f^2(t), 
\end{align}
where $\Gamma_{3_\sigma}$ and $\Gamma_{4_\sigma}$ are mentioned in \eqref{eq:Gamma_3} and \eqref{eq:Gamma_4},
\begin{align}
    X_f(t) \coloneqq \begin{bmatrix}
        \|x_p(t)\| & \|\tilde{x}(t)\|
    \end{bmatrix}^\top ,
\end{align}
and
\begin{align}
     \nu_{2_\sigma}\coloneqq \varepsilon_{1_\sigma}+ \varepsilon_{2_\sigma} + \frac{\varepsilon_{3_\sigma}}{2} +\frac{\varepsilon_{4_\sigma}}{2}.
\end{align}
Hence,
\begin{align}
    \frac{d}{dt} V_\sigma (x(t)) \leq \omega_{2_\sigma} \max \{ V_\sigma(x(t)), V_\sigma (x(\varphi_m)\} + \nu_{2_\sigma} f^2(t),
\end{align}
where $\omega_{2_\sigma}$ is mentioned in \eqref{eq:omega_2_sigma}. Thus, standard comparison results of differential inequalities yields
\begin{align}
    V_\sigma(x(t)) \leq e^{\omega_{2_\sigma} (t- \varphi_m)} V_\sigma (x(\varphi_m)) + \nu_{3_\sigma} e^{\omega_{2_\sigma} (t- \varphi_m)} \|w_t\|_\infty^2, \label{eq:V_fsdos}
\end{align}
where $\nu_{3_\sigma} \coloneqq \nu_{2_\sigma}/\omega_{2_\sigma}, \forall t\in W_m$. Consider the interval $Y_m$ where \eqref{eq:update_rule_1} holds true by construction and contains $q_m$ jump points. $g(t)$ is total number of jump points, so $q_m \leq g(t)$. In $W_m$, the starting point $\varphi_m$ may not be a jump point, but the ending point $\varphi_m+v_m$ may be a jump point where the mode of the observer is going to change. From \eqref{eq:V_3}, we can write
\begin{align}
    V_\sigma (x(\varphi_{m(t)})) \leq e^{-\omega_{1_\sigma} (\varphi_{m(t)} -\iota_{g(t)})} V_\sigma (x(\iota_{g(t)})) + \nu_{1_\sigma} \|w_t\|^2_\infty \label{eq:V_fsdos_2}
\end{align}
$\forall t \in [\varphi_{m(t)}, \iota_{g(t)}]$ where
\begin{align}
    m(t) \coloneqq \begin{cases}
    0, & \text{if } t< \varphi_1 \\
    \sup\{m \in \mathbb{N}: \varphi_m \leq t\}, & \text{otherwise.}
    \end{cases} \label{eq:m_t}
\end{align}
The sum of all the jump points in $Y_m$ equals to $g(t)$. Hence, from Figure \ref{fig:time_axis}, we can say 
\begin{align}
    \sum_{s=1}^{m(t)} q_{m(t)-s} =g(t).
\end{align}
Using the upper bound of $V_\sigma (x(\varphi_{m(t)}))$ in \eqref{eq:V_fsdos}, we get
\begin{align}
    V_\sigma (x(t)) \leq e^{-\omega_{1_\sigma}|\Bar{\Upsilon}(\iota_{g(t)},t)|} e^{\omega_{2_\sigma}|\Bar{\Omega} (\iota_{g(t)},t)|} V_\sigma (x(\iota_{g(t)}))+ (\nu_{1_\sigma} +\nu_{3_\sigma}) e^{\omega_{2_\sigma} (t-\varphi_{m(t)})} \|w_t\|^2_\infty \label{eq:V_fsdos_3}
\end{align}
$\forall t \in [\iota_{g(t)}, \varphi_{m(t)}+v_{m(t)}).$ Using \eqref{eq:V_2}, we get
\begin{align}
    \|x(t)\|^2 \leq & \frac{\overline{\alpha}_{\sigma(\iota^+_{g(t)})}}{\underline{\alpha}_{\sigma(\iota^+_{g(t)})}} e^{-\omega_{1_\sigma}|\Bar{\Upsilon}(\iota_{g(t)},t)|} e^{\omega_{2_\sigma}|\Bar{\Omega} (\iota_{g(t)},t)|} \|x(\iota_{g(t)})\|^2 + \frac{\nu_{1_\sigma} +\nu_{3_\sigma}}{\underline{\alpha}_{\sigma(\iota^+_{g(t)})}} \notag \\
    & e^{\omega_{2_\sigma} (t-\varphi_{m(t)})} \|w_t\|^2_\infty \label{eq:trajectory_fsdos}
\end{align}
$\forall t\in [\iota_{g(t)}, \varphi_{m(t)}+v_{m(t)})$. We can establish the following results by combining \eqref{eq:trajectory_1} and \eqref{eq:trajectory_fsdos} and some iterative process.
\begin{figure}
    \centering
    \includegraphics[width=\textwidth, trim={0.5cm 5.1cm 0 7.55cm}, clip]{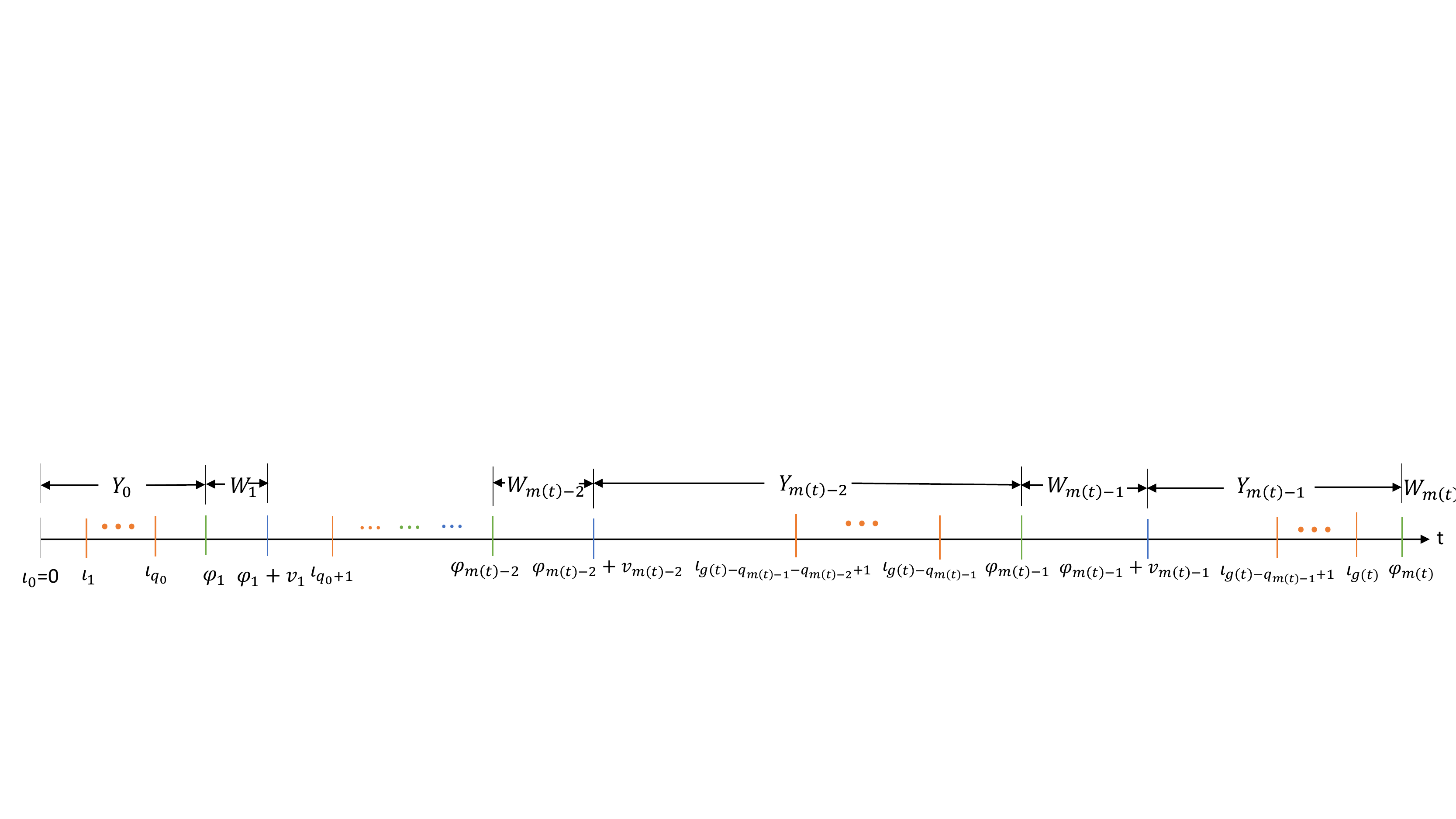}
    \caption{Time axis from $0$ to $t$ where red lines represent jump points, green lines represent the starting points of the interval $W_m,m \in \mathbb{N}$ where \eqref{eq:update_rule_1} may not hold, and blue lines represent the ending point of the interval $W_m, m \in \mathbb{N}$. Green lines are not jump points, but some blue lines may be jump points. This blue line and the next red line will be considered a single point in that scenario. For example, if $\varphi_1 + v_1$ is a jump point then $\varphi_1 + v_1 = \iota_{q_0+1}$.}
    \label{fig:time_axis}
\end{figure}
\begin{figure}
    \centering
    \includegraphics[width=\textwidth, trim={0.1cm 4.3cm 0 7.55cm}, clip]{time_axix_pth.pdf}
    \caption{Time axis from $0$ to $t$ where $W_{m(t)-(p+1)}, Y_{m(t)-(p+1)}, W_{m(t)-p},$ and $Y_{m(t)-p}$ sections are highlighted and red lines represent jump points, green lines represent the starting points of the interval $W_m,m \in \mathbb{N}$ where \eqref{eq:update_rule_1} may not hold, and blue lines represent the ending point of the interval $W_m, m \in \mathbb{N}$.}
    \label{fig:time_axis_pth}
\end{figure}
\begin{lemma}\label{lm:trajectory_fsdos}
    For all $t \in [0,\varphi_{m(t)}+v_{m(t)}),$ we get
    \begin{align}
        &\|x(t)\|^2 \leq \frac{\overline{\alpha}_{\sigma(\iota^+_0)} \cdots \overline{\alpha}_{\sigma(\iota^+_{g(t)})}}{\underline{\alpha}_{\sigma(\iota^+_0)} \cdots \underline{\alpha}_{\sigma(\iota^+_{g(t)})}} e^{-\sum_{i=1}^{n_s} \omega_{1_i}|(\mathcal{O}_i\cap\overline{\Upsilon})(\iota_0,t)|} e^{\sum_{i=1}^{n_s} \omega_{2_i}|(\mathcal{O}_i\cap \overline{\Omega})(\iota_0,t)|} \|x(\iota_0)\|^2 \notag \\
        & +\frac{1}{\underline{\alpha}_{\sigma(\iota^+_{g(t)})}} \Biggl [ \mathlarger{\sum}_{\substack{{l \in \mathbb{N};} \\ {0 \leq \iota_{g(t)-l} < \iota_{g(t)}}}} \Biggl \{ \nu_{1_{\sigma(\iota^+_{g(t)-l})}} \mathlarger{\prod}_{\substack{{j \in \mathbb{N}_0;} \\ {\iota_{g(t)-l} < \iota_{g(t)-j} \leq \iota_{g(t)}}}} \left ( \frac{\overline{\alpha}_{\sigma(\iota^+_{g(t)-j})}}{\underline{\alpha}_{\sigma(\iota^+_{g(t)-(j+1)})}} \right ) \notag \\
        & e^{-(\sum_{i=1}^{n_s} \omega_{1_i} |(\mathcal{O}_i \cap \overline{\Upsilon})(\iota_{g(t)-(l-1)},t)|)} e^{ (\sum_{i=1}^{n_s} \omega_{2_i} |(\mathcal{O}_i \cap \overline{\Omega}) (\iota_{g(t)-l},t)|)} \Biggr \} + \mathlarger{\sum}_{\substack{{k \in \mathbb{N}_0;}\\{q\coloneqq \sum_{s=0}^k q_{m(t)-s} \leq g(t)}}} \notag \\
        & \Biggl \{ \left(\nu_{1_{\sigma(\iota^+_{g(t)-q})}} + \nu_{3_{\sigma(\iota^+_{g(t)-q})}} \right ) \mathlarger{\prod}_{\substack{{j \in \mathbb{N}_0;}\\{\iota_{g(t)-q} < \iota_{g(t)-j} \leq \iota_{g(t)}}}} \left( \frac{\overline{\alpha}_{\sigma(\iota^+_{g(t)-j})}}{\underline{\alpha}_{\sigma(\iota^+_{g(t)-(j+1)})}} \right )  \notag \\
        & e^{-(\sum_{i=1}^{n_s} \omega_{1_i}|(\mathcal{O}_i \cap \overline{\Upsilon})(\iota_{g(t)-(q-1)},t)|)} e^{(\sum_{i=1}^{n_s} \omega_{2_i}|(\mathcal{O}_i \cap \overline{\Omega})(\iota_{g(t)-q},t)|)} \Biggr \} \Biggr ] \|w_t\|^2_\infty, \label{eq:trajectory_fsdos_14}
    \end{align}
    by considering $q_{m(t)}=0$ (because $Y_{m(t)}$ does not exist in Figure \ref{fig:time_axis}) and $|(\mathcal{O}_i \cap \overline{\Upsilon})(\iota_{g(t)+1},t)|=0$ (because $t < \iota_{g(t)+1}$ in Figure \ref{fig:time_axis}).
\end{lemma}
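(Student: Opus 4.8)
The plan is to concatenate the two estimates already established: the decay bound \eqref{eq:trajectory_1} of Lemma~\ref{lm:trajectory}, valid on any run of consecutive stable intervals $Y_m$ punctuated by observer-switching instants $\iota_g$, and the growth bound \eqref{eq:V_fsdos}--\eqref{eq:trajectory_fsdos}, valid across a single FSDoS block $W_m$ on which \eqref{eq:update_rule_1} may fail. By Lemma~\ref{lm:actuator_delay} and Figure~\ref{fig:time_axis}, the interval $[0,\varphi_{m(t)}+v_{m(t)})$ is the alternating concatenation $W_0,Y_0,W_1,Y_1,\dots,W_{m(t)}$; each $Y_m$ carries exactly $q_m$ switching instants, $\sum_{s=1}^{m(t)}q_{m(t)-s}=g(t)$, and no switching can occur inside a $W_m$ since \eqref{eq:sigma} freezes $\sigma$ whenever $t_k\in\Omega(0,t)$. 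I would run a backward iteration along this chain from $t$ down to $\iota_0=0$, peeling off one pair $W_m,Y_{m-1}$ at a time.

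For the base step, \eqref{eq:trajectory_fsdos} already bounds $\|x(t)\|^2$ on the last, possibly incomplete, block by the mode-ratio $\overline{\alpha}_{\sigma(\iota^+_{g(t)})}/\underline{\alpha}_{\sigma(\iota^+_{g(t)})}$ times the decay/growth exponentials $e^{-\omega_{1_\sigma}|\overline{\Upsilon}(\iota_{g(t)},t)|}e^{\omega_{2_\sigma}|\overline{\Omega}(\iota_{g(t)},t)|}$ applied to $\|x(\iota_{g(t)})\|^2$, plus a forcing term with coefficient $(\nu_{1_\sigma}+\nu_{3_\sigma})/\underline{\alpha}_{\sigma(\iota^+_{g(t)})}$. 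Since the observer runs in the single mode $\sigma=\sigma(\iota^+_{g(t)})$ on $[\iota_{g(t)},\iota_{g(t)+1})$, I would rewrite $|\overline{\Upsilon}(\iota_{g(t)},t)|$ and $|\overline{\Omega}(\iota_{g(t)},t)|$ as $\sum_{i}|(\mathcal{O}_i\cap\overline{\Upsilon})(\iota_{g(t)},t)|$ and $\sum_{i}|(\mathcal{O}_i\cap\overline{\Omega})(\iota_{g(t)},t)|$; this is the $k=0$ summand of the second sum in \eqref{eq:trajectory_fsdos_14}, and it already fits the general template once the stated conventions $q_{m(t)}=0$ and $|(\mathcal{O}_i\cap\overline{\Upsilon})(\iota_{g(t)+1},t)|=0$ are adopted.

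For the iteration step, moving back across the stable block $Y_{m-1}$ (which contains $q_{m-1}$ switching instants) amounts to applying \eqref{eq:V_3} repeatedly, i.e. invoking Lemma~\ref{lm:trajectory} restarted at the left end of $Y_{m-1}$; this produces a telescoping product of ratios $\overline{\alpha}_{\sigma(\iota^+_{g(t)-j})}/\underline{\alpha}_{\sigma(\iota^+_{g(t)-(j+1)})}$ over the block, decay exponentials $e^{-\sum_i\omega_{1_i}|(\mathcal{O}_i\cap\overline{\Upsilon})(\cdot,\cdot)|}$, and one forcing contribution $\nu_{1_{\sigma(\iota^+_{g(t)-l})}}$ per switching instant $\iota_{g(t)-l}$ in the block --- these supply the first sum of \eqref{eq:trajectory_fsdos_14}. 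Crossing the preceding FSDoS block $W_{m-1}$ via \eqref{eq:V_fsdos} (equivalently \eqref{eq:V_fsdos_3}) then appends a growth factor $e^{+\sum_i\omega_{2_i}|(\mathcal{O}_i\cap\overline{\Omega})(\cdot,\cdot)|}$ and one forcing contribution $\nu_{1_\sigma}+\nu_{3_\sigma}$ at the switching label $g(t)-q$ bounding $W_{m-1}$, where $q=\sum_{s=0}^{k}q_{m(t)-s}$ after $k+1$ stable blocks have been passed --- these supply the second sum. Dividing the whole chain by $\underline{\alpha}_{\sigma(\iota^+_{g(t)})}$ once at the end, and using $\iota_0=0$, yields \eqref{eq:trajectory_fsdos_14}.

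The main obstacle is purely combinatorial bookkeeping; no new analytic estimate is needed, since every individual step is the comparison lemma used above. One must verify that the switching instants lying strictly inside some $Y_m$ (first sum, index $l$) together with the switching instants that terminate some $W_m$ (second sum, index $k$, read off from the cumulative count $q=\sum_{s=0}^k q_{m(t)-s}$) account for all of $\iota_1,\dots,\iota_{g(t)}$ without duplication; that the nested $\overline{\alpha}/\underline{\alpha}$-products telescope with the correct one-step shift between numerator index $j$ and denominator index $j+1$; and that every $|(\mathcal{O}_i\cap\overline{\Upsilon})(\cdot,\cdot)|$ and $|(\mathcal{O}_i\cap\overline{\Omega})(\cdot,\cdot)|$ carries the right pair of consecutive break-points. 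This is exactly where Figures~\ref{fig:time_axis} and \ref{fig:time_axis_pth} carry the argument, in particular the identification of a block end $\varphi_m+v_m$ with a switching instant $\iota_g$ when the two coincide.
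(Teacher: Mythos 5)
Your proposal is correct and follows essentially the same route as the paper's proof: a backward induction over the alternating blocks $W_m, Y_m$, combining the MCDoS decay bound \eqref{eq:trajectory_2} (restarted at each stable block, yielding the telescoping $\overline{\alpha}/\underline{\alpha}$ products and the $\nu_1$ forcing terms) with the growth bound \eqref{eq:V_fsdos}--\eqref{eq:trajectory_fsdos} across each FSDoS block (yielding the $e^{+\omega_2|\overline{\Omega}|}$ factors and the $\nu_1+\nu_3$ terms), exactly as in \eqref{eq:trajectory_fsdos_1}--\eqref{eq:trajectory_fsdos_13}. The only cosmetic difference is that the paper anchors the induction at $Y_0=[0,\varphi_1)$ while you anchor it at the final block; the content is identical.
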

$Proof$ $of$ $Lemma$ $\ref{lm:trajectory_fsdos}$ $:$
We use an induction argument to prove \eqref{eq:trajectory_fsdos_14}. First, the inequality holds true over $Y_0=[0, \varphi_1)$ is shown. If $\varphi_1=0,$ the claim trivially holds. Suppose $\varphi_1>0$. Over $Y_0$ the $\|x(t)\|^2$ obeys \eqref{eq:trajectory_2}. Thus \eqref{eq:trajectory_fsdos_14} follows by noting that $|(\mathcal{O}_i \cap \overline{\Upsilon})(0,t)|=|\mathcal{O}_i(0,t)|$ and $|(\mathcal{O}_i \cap \overline{\Omega})(0,t)|=0,$ for all $t \in Y_0$. \\
Consider interval $Y_m$ where \eqref{eq:update_rule_1} holds true, and it contains $q_m$ jump points (time instants when observer switching takes place). Lets consider $Y_{m(t)-1}$ interval which contains $q_{m(t)-1}$ jump points and last jump point is $\iota_{g(t)}$ shown in Figure \ref{fig:time_axis}. So following \eqref{eq:trajectory_2}, we can write
\begin{align}
    &\|x(\iota_{g(t)})\|^2 \leq \frac{\overline{\alpha}_{\sigma(\iota^+_{g(t)-1})} \cdots \overline{\alpha}_{\sigma(\iota^+_{g(t)-q_{m(t)-1}+1})}}{\underline{\alpha}_{\sigma(\iota^+_{g(t)-1})} \cdots \underline{\alpha}_{\sigma(\iota^+_{g(t)-q_{m(t)-1}+1})}} e^{-\left (\sum_{i=1}^{n_s} \omega_{1_i} | \mathcal{O}_i (\iota_{g(t)-q_{m(t)-1}+1},\iota_{g(t)})|\right )} \notag \\
    & \|x(\iota_{g(t)-q_{m(t)-1}+1})\|^2 + \Biggl [ \frac{\nu_{1_{\sigma(\iota^+_{g(t)-1})}}}{\underline{\alpha}_{\sigma(\iota^+_{g(t)-1})}} + \mathlarger{\mathlarger{\sum}}_{\substack{l \in \mathbb{N}\setminus 1;\\ \iota_{g(t)-q_{m(t)-1}+1} \leq \iota_{g(t)-l} < \iota_{g(t)}}}\Biggl \{ \frac{\nu_{1_{\sigma(\iota^+_{g(t)-l})}}}  { \underline{\alpha}_{\sigma(\iota^+_{g(t)-1})}} \notag\\ 
    & \prod_{\substack{j \in \mathbb{N}; \\ \iota_{g(t)-l}< \iota_{g(t)-j} < \iota_{g(t)}}} \Bigl ( \frac{ \overline{\alpha}_{\sigma(\iota^+_{g(t)-j})}} { \underline{\alpha}_{\sigma(\iota^+_{g(t)-(j+1)})}} \Bigr ) e^{-\left ( \sum_{i=1}^{n_s} \omega_{1_i} |\mathcal{O}_i(\iota_{g(t)-(l-1)}, \iota_{g(t)})| \right) }  \Biggr \} \Biggr ]  {\|w_t\|}^2_\infty. \label{eq:trajectory_fsdos_1}
\end{align}
Using the inequality of \eqref{eq:trajectory_fsdos_1}, in \eqref{eq:trajectory_fsdos}, we get
\begin{align}
    & \|x(t)\|^2 \leq \frac{\overline{\alpha}_{\sigma(\iota^+_{g(t)})} \cdots \overline{\alpha}_{\sigma(\iota^+_{g(t)-q_{m(t)-1}+1})}}{\underline{\alpha}_{\sigma(\iota^+_{g(t)})} \cdots \underline{\alpha}_{\sigma(\iota^+_{g(t)-q_{m(t)-1}+1})}} e^{-\left (\sum_{i=1}^{n_s} \omega_{1_i} | (\mathcal{O}_i \cap \overline{\Upsilon}) (\iota_{g(t)-q_{m(t)-1}+1},t)|\right )} \notag \\
    & e^{\left (\sum_{i=1}^{n_s} \omega_{2_i} |(\mathcal{O}_i \cap \overline{\Omega}) (\iota_{g(t)-q_{m(t)-1}+1},t)| \right)} \|x(\iota_{g(t)-q_{m(t)-1}+1})\|^2 + \frac{1}{\underline{\alpha}_{\sigma(\iota^+_{g(t)})}} \Biggl [ (\nu_{1_{\sigma(\iota^+_{g(t)})}} \notag \\
    & +\nu_{3_{\sigma(\iota^+_{g(t)})}} ) e^{\omega_{2_{\sigma(\iota^+_{g(t)})}} (t-\varphi_{m(t)})} + \mathlarger{\mathlarger{\sum}}_{\substack{l \in \mathbb{N};\\ \iota_{g(t)-q_{m(t)-1}+1} \leq \iota_{g(t)-l} < \iota_{g(t)}}}\Biggl \{ \nu_{1_{\sigma(\iota^+_{g(t)-l})}} \notag \\
    & \prod_{\substack{j \in \mathbb{N}_0; \\ \iota_{g(t)-l}< \iota_{g(t)-j} \leq \iota_{g(t)}}} \left ( \frac{ \overline{\alpha}_{\sigma(\iota^+_{g(t)-j})}} { \underline{\alpha}_{\sigma(\iota^+_{g(t)-(j+1)})}} \right ) e^{-\left ( \sum_{i=1}^{n_s} \omega_{1_i} |(\mathcal{O}_i \cap \overline{\Upsilon})(\iota_{g(t)-(l-1)}, t)| \right) } \notag \\
    & e^{\left ( \sum_{i=1}^{n_s} \omega_{2_i} |(\mathcal{O}_i \cap \overline{\Omega})(\iota_{g(t)-(l-1)}, t)| \right) }  \Biggr \} \Biggr ]  {\|w_t\|}^2_\infty, \label{eq:trajectory_fsdos_2}
\end{align}
$\forall t \in \mathbb{R}_{\geq \iota_{g(t)-q_{m(t)-1}+1}}$ and consider that $|\overline{\Omega}(\iota_{g(t)-q_{m(t)-1}+1}, \iota_{g(t)})|=0$. From \eqref{eq:V_fsdos_3}, we can write
\begin{align}
    & V_{\sigma(\iota^+_{g(t)-q_{m(t)-1}})} (x(\varphi_{m(t)-1}+v_{m(t)-1})) \leq e^{- \left (\omega_{1_{\sigma(\iota^+_{g(t)-q_{m(t)-1}})}}|\Bar{\Upsilon}(\iota_{g(t)-q_{m(t)-1}},\varphi_{m(t)-1}+v_{m(t)-1})| \right )} \notag \\
    & e^{\left (\omega_{2_{\sigma(\iota^+_{g(t)-q_{m(t)-1}})}}|\Bar{\Omega} (\iota_{g(t)-q_{m(t)-1}},\varphi_{m(t)-1}+v_{m(t)-1})| \right )} V_{\sigma(\iota^+_{g(t)-q_{m(t)-1}})} (x(\iota_{g(t)-q_{m(t)-1}})) \notag \\
    & + \left (\nu_{1_{\sigma(\iota^+_{g(t)-q_{m(t)-1}})}} +\nu_{3_{\sigma(\iota^+_{g(t)-q_{m(t)-1}})}} \right ) e^{\omega_{2_{\sigma(\iota^+_{g(t)-q_{m(t)-1}})}} v_{m(t)-1}} \|w_t\|^2_\infty. \label{eq:V_fsdos_4}
\end{align}
From \eqref{eq:V_fsdos_2}, we can write,
\begin{align}
    & V_{\sigma(\iota^+_{g(t)-q_{m(t)-1}})} (x(\iota_{g(t)-q_{m(t)-1}+1})) \leq e^{-\omega_{1_{\sigma(\iota^+_{g(t)-q_{m(t)-1}})}} (\iota_{g(t)-q_{m(t)-1}+1}-(\varphi_{m(t)-1}+v_{m(t)-1}))} \notag \\
    & V_{\sigma(\iota^+_{g(t)-q_{m(t)-1}})} (x(\varphi_{m(t)-1}+v_{m(t)-1})) + \nu_{1_{\sigma(\iota^+_{g(t)-q_{m(t)-1}})}} \|w_t\|^2_\infty. \label{eq:V_fsdos_5}
\end{align}
Using the inequality of \eqref{eq:V_fsdos_4} in \eqref{eq:V_fsdos_5}, we get
\begin{align}
    & V_{\sigma(\iota^+_{g(t)-q_{m(t)-1}})} (x(\iota_{g(t)-q_{m(t)-1}+1})) \leq e^{- \left (\omega_{1_{\sigma(\iota^+_{g(t)-q_{m(t)-1}})}}|\Bar{\Upsilon}(\iota_{g(t)-q_{m(t)-1}},\iota_{g(t)-q_{m(t)-1}+1})| \right )} \notag \\
    & e^{\left (\omega_{2_{\sigma(\iota^+_{g(t)-q_{m(t)-1}})}}|\Bar{\Omega} (\iota_{g(t)-q_{m(t)-1}},\iota_{g(t)-q_{m(t)-1}+1})| \right )} V_{\sigma(\iota^+_{g(t)-q_{m(t)-1}})} (x(\iota_{g(t)-q_{m(t)-1}})) \notag \\
    & + \Biggl [ \nu_{1_{\sigma(\iota^+_{g(t)-q_{m(t)-1}})}} e^{- \left (\omega_{1_{\sigma(\iota^+_{g(t)-q_{m(t)-1}})}}|\Bar{\Upsilon}(\iota_{g(t)-q_{m(t)-1}},\varphi_{m(t)-1}+v_{m(t)-1})| \right )} \notag \\
    & e^{\left (\omega_{2_{\sigma(\iota^+_{g(t)-q_{m(t)-1}})}}|\Bar{\Omega} (\iota_{g(t)-q_{m(t)-1}},\varphi_{m(t)-1}+v_{m(t)-1})| \right )} + \left (\nu_{1_{\sigma(\iota^+_{g(t)-q_{m(t)-1}})}} +\nu_{3_{\sigma(\iota^+_{g(t)-q_{m(t)-1}})}} \right ) \notag \\
    & e^{\omega_{2_{\sigma(\iota^+_{g(t)-q_{m(t)-1}})}} v_{m(t)-1}} \Biggr ] \|w_t\|^2_\infty. \label{eq:V_fsdos_6}
\end{align}
Hence,
\begin{align}
    & \|x(\iota_{g(t)-q_{m(t)-1}+1})\|^2 \leq \frac{\overline{\alpha}_{\sigma(\iota^+_{g(t)-q_{m(t)-1}})}}{\underline{\alpha}_{\sigma(\iota^+_{g(t)-q_{m(t)-1}})}} e^{- \left (\omega_{1_{\sigma(\iota^+_{g(t)-q_{m(t)-1}})}}|\Bar{\Upsilon}(\iota_{g(t)-q_{m(t)-1}},\iota_{g(t)-q_{m(t)-1}+1})| \right )} \notag \\
    & e^{\left (\omega_{2_{\sigma(\iota^+_{g(t)-q_{m(t)-1}})}}|\Bar{\Omega} (\iota_{g(t)-q_{m(t)-1}},\iota_{g(t)-q_{m(t)-1}+1})| \right )} \|x(\iota_{g(t)-q_{m(t)-1}})\|^2+\frac{1}{\underline{\alpha}_{\sigma(\iota^+_{g(t)-q_{m(t)-1}})}} \notag \\
    & \Biggl [ \nu_{1_{\sigma(\iota^+_{g(t)-q_{m(t)-1}})}} e^{- \left (\omega_{1_{\sigma(\iota^+_{g(t)-q_{m(t)-1}})}}|\Bar{\Upsilon}(\iota_{g(t)-q_{m(t)-1}},\varphi_{m(t)-1}+v_{m(t)-1})| \right )} \notag \\
    & e^{\left (\omega_{2_{\sigma(\iota^+_{g(t)-q_{m(t)-1}})}}|\Bar{\Omega} (\iota_{g(t)-q_{m(t)-1}},\varphi_{m(t)-1}+v_{m(t)-1})| \right )} + \left (\nu_{1_{\sigma(\iota^+_{g(t)-q_{m(t)-1}})}} +\nu_{3_{\sigma(\iota^+_{g(t)-q_{m(t)-1}})}} \right ) \notag \\
    & e^{\omega_{2_{\sigma(\iota^+_{g(t)-q_{m(t)-1}})}} v_{m(t)-1}} \Biggr ] \|w_t\|^2_\infty. \label{eq:trajectory_fsdos_3}
\end{align}
Using the inequality of \eqref{eq:trajectory_fsdos_3} in \eqref{eq:trajectory_fsdos_2}, we get
\begin{align}
    & \|x(t)\|^2 \leq \frac{\overline{\alpha}_{\sigma(\iota^+_{g(t)})} \cdots \overline{\alpha}_{\sigma(\iota^+_{g(t)-q_{m(t)-1}})}}{\underline{\alpha}_{\sigma(\iota^+_{g(t)})} \cdots \underline{\alpha}_{\sigma(\iota^+_{g(t)-q_{m(t)-1}})}} e^{-\left (\sum_{i=1}^{n_s} \omega_{1_i} | (\mathcal{O}_i \cap \overline{\Upsilon}) (\iota_{g(t)-q_{m(t)-1}},t)|\right )} \notag \\
    & e^{\left (\sum_{i=1}^{n_s} \omega_{2_i} |(\mathcal{O}_i \cap \overline{\Omega}) (\iota_{g(t)-q_{m(t)-1}},t)| \right)} \|x(\iota_{g(t)-q_{m(t)-1}})\|^2 + \frac{1}{\underline{\alpha}_\sigma} \Biggl [ (\nu_{1_{\sigma}} + \nu_{3_{\sigma}} ) e^{\omega_{2_\sigma}(t-\varphi_{m(t)})} \notag \\
    &  + \mathlarger{\mathlarger{\sum}}_{\substack{l \in \mathbb{N};\\ \iota_{g(t)-q_{m(t)-1}+1} \leq \iota_{g(t)-l} < \iota_{g(t)}}} \Biggl \{ \nu_{1_{\sigma(\iota^+_{g(t)-l})}} \mathlarger{\mathlarger{\prod}}_{\substack{j \in \mathbb{N}_0; \\ \iota_{g(t)-l} < \iota_{g(t)-j} \leq \iota_{g(t)}}} \left ( \frac{\overline{\alpha}_{\sigma(\iota^+_{g(t)-j})}}{\underline{\alpha}_{\sigma(\iota^+_{g(t)-(j+1)})}} \right ) \notag \\
    & e^{- \left(\sum_{i=1}^{n_s} \omega_{1_i} |(\mathcal{O}_i \cap \overline{\Upsilon})(\iota_{g(t)-(l-1)}, t)|\right )} e^{ \left (\sum_{i=1}^{n_s} \omega_{2_i}|(\mathcal{O}_i \cap \overline{\Omega})(\iota_{g(t)-(l-1)}, t)| \right )} \Biggr \} \notag \\
    & +\frac{\overline{\alpha}_{\sigma(\iota^+_{g(t)})} \cdots \overline{\alpha}_{\sigma(\iota^+_{g(t)-q_{m(t)-1}+1})}}{\underline{\alpha}_{\sigma(\iota^+_{g(t)-1})} \cdots \underline{\alpha}_{\sigma(\iota^+_{g(t)-q_{m(t)-1}+1})} \underline{\alpha}_{\sigma(\iota^+_{g(t)-q_{m(t)-1}})}} e^{-\left (\sum_{i=1}^{n_s} \omega_{1_i} | (\mathcal{O}_i \cap \overline{\Upsilon}) (\iota_{g(t)-q_{m(t)-1}+1},t)|\right )} \notag \\
    & e^{\left (\sum_{i=1}^{n_s} \omega_{2_i} |(\mathcal{O}_i \cap \overline{\Omega}) (\iota_{g(t)-q_{m(t)-1}+1},t)| \right)} \Biggl \{ \nu_{1_{\sigma(\iota^+_{g(t)-q_{m(t)-1}})}} \notag \\
    & e^{- \left (\omega_{1_{\sigma(\iota^+_{g(t)-q_{m(t)-1}})}}|\Bar{\Upsilon}(\iota_{g(t)-q_{m(t)-1}},\varphi_{m(t)-1}+v_{m(t)-1})| \right )} \notag \\
    & e^{\left (\omega_{2_{\sigma(\iota^+_{g(t)-q_{m(t)-1}})}}|\Bar{\Omega} (\iota_{g(t)-q_{m(t)-1}},\varphi_{m(t)-1}+v_{m(t)-1})| \right )} + \Biggl (\nu_{1_{\sigma(\iota^+_{g(t)-q_{m(t)-1}})}}  \notag \\
    & +\nu_{3_{\sigma(\iota^+_{g(t)-q_{m(t)-1}})}} \Biggr ) e^{\omega_{2_{\sigma(\iota^+_{g(t)-q_{m(t)-1}})}} v_{m(t)-1}} \Biggr \} \Biggr ] \|w_t\|^2_\infty. \label{eq:trajectory_fsdos_4}
\end{align}
From \eqref{eq:trajectory_fsdos_4}, we can write
\begin{align}
    & \|x(t)\|^2 \leq \frac{\overline{\alpha}_{\sigma(\iota^+_{g(t)})} \cdots \overline{\alpha}_{\sigma(\iota^+_{g(t)-q_{m(t)-1}})}}{\underline{\alpha}_{\sigma(\iota^+_{g(t)})} \cdots \underline{\alpha}_{\sigma(\iota^+_{g(t)-q_{m(t)-1}})}} e^{-\left (\sum_{i=1}^{n_s} \omega_{1_i} | (\mathcal{O}_i \cap \overline{\Upsilon}) (\iota_{g(t)-q_{m(t)-1}},t)|\right )} \notag \\
    & e^{\left (\sum_{i=1}^{n_s} \omega_{2_i} |(\mathcal{O}_i \cap \overline{\Omega}) (\iota_{g(t)-q_{m(t)-1}},t)| \right)} \|x(\iota_{g(t)-q_{m(t)-1}})\|^2 + \frac{1}{\underline{\alpha}_\sigma} \Biggl [ (\nu_{1_{\sigma}} + \nu_{3_{\sigma}} ) e^{\omega_{2_\sigma}(t-\varphi_{m(t)})} \notag \\
    &  + \mathlarger{\mathlarger{\sum}}_{\substack{l \in \mathbb{N};\\ \iota_{g(t)-q_{m(t)-1}} \leq \iota_{g(t)-l} < \iota_{g(t)}}} \Biggl \{ \nu_{1_{\sigma(\iota^+_{g(t)-l})}} \mathlarger{\mathlarger{\prod}}_{\substack{j \in \mathbb{N}_0; \\ \iota_{g(t)-l} < \iota_{g(t)-j} \leq \iota_{g(t)}}} \left ( \frac{\overline{\alpha}_{\sigma(\iota^+_{g(t)-j})}}{\underline{\alpha}_{\sigma(\iota^+_{g(t)-(j+1)})}} \right ) \notag \\
    & e^{- \left(\sum_{i=1}^{n_s} \omega_{1_i} |(\mathcal{O}_i \cap \overline{\Upsilon})(\iota_{g(t)-(l-1)}, t)|\right )} e^{ \left (\sum_{i=1}^{n_s} \omega_{2_i}|(\mathcal{O}_i \cap \overline{\Omega})(\iota_{g(t)-l}, t)| \right )} \Biggr \} + \Biggl (\nu_{1_{\sigma(\iota^+_{g(t)-q_{m(t)-1}})}} \notag \\
    & +\nu_{3_{\sigma(\iota^+_{g(t)-q_{m(t)-1}})}} \Biggr ) \frac{\overline{\alpha}_{\sigma(\iota^+_{g(t)})} \cdots \overline{\alpha}_{\sigma(\iota^+_{g(t)-q_{m(t)-1}+1})}}{\underline{\alpha}_{\sigma(\iota^+_{g(t)-1})} \cdots \underline{\alpha}_{\sigma(\iota^+_{g(t)-q_{m(t)-1}+1})} \underline{\alpha}_{\sigma(\iota^+_{g(t)-q_{m(t)-1}})}} \notag \\
    & e^{-\left (\sum_{i=1}^{n_s} \omega_{1_i} | (\mathcal{O}_i \cap \overline{\Upsilon}) (\iota_{g(t)-q_{m(t)-1}+1},t)|\right )} e^{\left (\sum_{i=1}^{n_s} \omega_{2_i} |(\mathcal{O}_i \cap \overline{\Omega}) (\iota_{g(t)-q_{m(t)-1}},t)| \right)} \Biggr \} \Biggr ] \|w_t\|^2_\infty, \label{eq:trajectory_fsdos_5}
\end{align}
by avoiding $e^{- \left (\omega_{1_{\sigma(\iota^+_{g(t)-q_{m(t)-1}})}}|\Bar{\Upsilon}(\iota_{g(t)-q_{m(t)-1}},\varphi_{m(t)-1}+v_{m(t)-1})| \right )}$ term because $e^{-(a+b)} \leq e^{-a}, \forall \{a,b\} \in \mathbb{R}_{\geq 0} $. Rewrite \eqref{eq:trajectory_fsdos_5}, we get
\begin{align}
    & \|x(t)\|^2 \leq \frac{\overline{\alpha}_{\sigma(\iota^+_{g(t)})} \cdots \overline{\alpha}_{\sigma(\iota^+_{g(t)-q_{m(t)-1}})}}{\underline{\alpha}_{\sigma(\iota^+_{g(t)})} \cdots \underline{\alpha}_{\sigma(\iota^+_{g(t)-q_{m(t)-1}})}} e^{-\left (\sum_{i=1}^{n_s} \omega_{1_i} | (\mathcal{O}_i \cap \overline{\Upsilon}) (\iota_{g(t)-q_{m(t)-1}},t)|\right )} \notag \\
    & e^{\left (\sum_{i=1}^{n_s} \omega_{2_i} |(\mathcal{O}_i \cap \overline{\Omega}) (\iota_{g(t)-q_{m(t)-1}},t)| \right)} \|x(\iota_{g(t)-q_{m(t)-1}})\|^2 + \frac{1}{\underline{\alpha}_{\sigma(\iota^+_{g(t)})}} \Biggl [ \mathlarger{\mathlarger{\sum}}_{\substack{l \in \mathbb{N};\\ \iota_{g(t)-q_{m(t)-1}} \leq \iota_{g(t)-l} < \iota_{g(t)}}} \notag \\
    & \Biggl \{ \nu_{1_{\sigma(\iota^+_{g(t)-l})}} \mathlarger{\mathlarger{\prod}}_{\substack{j \in \mathbb{N}_0; \\ \iota_{g(t)-l} < \iota_{g(t)-j} \leq \iota_{g(t)}}} \left ( \frac{\overline{\alpha}_{\sigma(\iota^+_{g(t)-j})}}{\underline{\alpha}_{\sigma(\iota^+_{g(t)-(j+1)})}} \right )  e^{- \left(\sum_{i=1}^{n_s} \omega_{1_i} |(\mathcal{O}_i \cap \overline{\Upsilon})(\iota_{g(t)-(l-1)}, t)|\right )} \notag \\
    & e^{ \left (\sum_{i=1}^{n_s} \omega_{2_i}|(\mathcal{O}_i \cap \overline{\Omega})(\iota_{g(t)-l}, t)| \right )} \Biggr \} + \mathlarger{\mathlarger{\sum}}_{\substack{k \in \mathbb{N}_0; \\ q \coloneqq \sum_{s=0}^k q_{m(t)-s} \leq q_{m(t)}+q_{m(t)-1}}} \Biggl \{ \Biggl ( \nu_{1_{\sigma(\iota^+_{g(t)-q})}} \notag \\
    & + \nu_{3_{\sigma(\iota^+_{g(t)-q})}} \Biggr )\mathlarger{\prod}_{\substack{{j \in \mathbb{N}_0;}\\{\iota_{g(t)-q} < \iota_{g(t)-j} \leq \iota_{g(t)}}}} \left( \frac{\overline{\alpha}_{\sigma(\iota^+_{g(t)-j})}}{\underline{\alpha}_{\sigma(\iota^+_{g(t)-(j+1)})}} \right ) e^{-(\sum_{i=1}^{n_s} \omega_{1_i}|(\mathcal{O}_i \cap \overline{\Upsilon})(\iota_{g(t)-(q-1)},t)|)} \notag \\ 
    & e^{(\sum_{i=1}^{n_s} \omega_{2_i}|(\mathcal{O}_i \cap \overline{\Omega})(\iota_{g(t)-q},t)|)} \Biggr \} \Biggr ] \|w_t\|_\infty. \label{eq:trajectory_fsdos_6}
\end{align}
Consider $Y_{m(t)-p}$ interval where $q_{m(t)-p}$ jump points are present (see in Figure \ref{fig:time_axis_pth}). Since $x(t)$ is continuous, we can write from \eqref{eq:trajectory_fsdos_14},
\begin{align}
    & \|x(t)\|^2 \leq \frac{\overline{\alpha}_{\sigma(\iota^+_{g(t)})} \cdots \overline{\alpha}_{\sigma(\iota^+_{g(t)-\sum_{j=0}^p q_{m(t)-j}})}}{\underline{\alpha}_{\sigma(\iota^+_{g(t)})} \cdots \underline{\alpha}_{\sigma(\iota^+_{g(t)-\sum_{j=0}^p q_{m(t)-j}})}} e^{-\left (\sum_{i=1}^{n_s} \omega_{1_i} | (\mathcal{O}_i \cap \overline{\Upsilon}) (\iota_{g(t)-\sum_{j=0}^p q_{m(t)-j}},t)|\right )} \notag \\
    & e^{\left (\sum_{i=1}^{n_s} \omega_{2_i} |(\mathcal{O}_i \cap \overline{\Omega}) (\iota_{g(t)-\sum_{j=0}^p q_{m(t)-j}},t)| \right)} \|x(\iota_{g(t)-\sum_{j=0}^p q_{m(t)-j}})\|^2 \notag \\
    & + \frac{1}{\underline{\alpha}_{\sigma(\iota^+_{g(t)})}} \Biggl [ \mathlarger{\mathlarger{\sum}}_{\substack{l \in \mathbb{N};\\ \iota_{g(t)-\sum_{j=0}^p q_{m(t)-j}} \leq \iota_{g(t)-l} < \iota_{g(t)}}} \Biggl \{ \nu_{1_{\sigma(\iota^+_{g(t)-l})}} \mathlarger{\mathlarger{\prod}}_{\substack{j \in \mathbb{N}_0; \\ \iota_{g(t)-l} < \iota_{g(t)-j} \leq \iota_{g(t)}}} \notag \\
    &\left ( \frac{\overline{\alpha}_{\sigma(\iota^+_{g(t)-j})}}{\underline{\alpha}_{\sigma(\iota^+_{g(t)-(j+1)})}} \right )  e^{- \left(\sum_{i=1}^{n_s} \omega_{1_i} |(\mathcal{O}_i \cap \overline{\Upsilon})(\iota_{g(t)-(l-1)}, t)|\right )} e^{ \left (\sum_{i=1}^{n_s} \omega_{2_i}|(\mathcal{O}_i \cap \overline{\Omega})(\iota_{g(t)-l}, t)| \right )} \Biggr \} \notag \\
    & + \mathlarger{\mathlarger{\sum}}_{\substack{k \in \mathbb{N}_0; \\ q \coloneqq \sum_{s=0}^k q_{m(t)-s} \leq \sum_{j=0}^p q_{m(t)-j}}} \Biggl \{ \left ( \nu_{1_{\sigma(\iota^+_{g(t)-q})}} + \nu_{3_{\sigma(\iota^+_{g(t)-q})}} \right )\mathlarger{\prod}_{\substack{{j \in \mathbb{N}_0;}\\{\iota_{g(t)-q} < \iota_{g(t)-j} \leq \iota_{g(t)}}}} \notag \\
    & \left( \frac{\overline{\alpha}_{\sigma(\iota^+_{g(t)-j})}}{\underline{\alpha}_{\sigma(\iota^+_{g(t)-(j+1)})}} \right ) e^{-(\sum_{i=1}^{n_s} \omega_{1_i}|(\mathcal{O}_i \cap \overline{\Upsilon})(\iota_{g(t)-(q-1)},t)|)} e^{(\sum_{i=1}^{n_s} \omega_{2_i}|(\mathcal{O}_i \cap \overline{\Omega})(\iota_{g(t)-q},t)|)} \Biggr \} \Biggr ] \|w_t\|_\infty, \label{eq:trajectory_fsdos_7}
\end{align}
$\forall t \in \mathbb{R}_{\geq \iota_{g(t)-\sum_{j=0}^pq_{m(t)-j}}}$. $Y_{m(t)-(p+1)}$ contains $q_{m(t)-(p+1)}$ jump points. The jump points are $\iota_{g(t)-\sum_{j=0}^{p+1}q_{m(t)-j}+1}$ to $\iota_{g(t)-\sum_{j=0}^{p}q_{m(t)-j}}$. And the last jump points of $Y_{m(t)-(p+2)}$ is  $\iota_{g(t)-\sum_{j=0}^{p+1}q_{m(t)-j}}$ (see in Figure \ref{fig:time_axis_pth}). Hence from \eqref{eq:trajectory_fsdos_1}, we can write
\begin{align}
    &\|x(\iota_{g(t)-\sum_{j=0}^{p}q_{m(t)-j}})\|^2 \leq \frac{\overline{\alpha}_{\sigma(\iota^+_{g(t)-\sum_{j=0}^{p}q_{m(t)-j}-1})} \cdots \overline{\alpha}_{\sigma(\iota^+_{g(t)-\sum_{j=0}^{p+1}q_{m(t)-j}+1})}}{\underline{\alpha}_{\sigma(\iota^+_{g(t)-\sum_{j=0}^{p}q_{m(t)-j}-1})} \cdots \underline{\alpha}_{\sigma(\iota^+_{g(t)-\sum_{j=0}^{p+1}q_{m(t)-j}+1})}} \notag \\
    & e^{-\left (\sum_{i=1}^{n_s} \omega_{1_i} | \mathcal{O}_i (\iota_{g(t)-\sum_{j=0}^{p+1}q_{m(t)-j}+1},\iota_{g(t)-\sum_{j=0}^{p}q_{m(t)-j}})|\right )} \|x(\iota_{g(t)-\sum_{j=0}^{p+1}q_{m(t)-j}+1})\|^2 \notag \\
    & + \Biggl [ \mathlarger{\mathlarger{\sum}}_{\substack{l \in \mathbb{N};\\ \iota_{g(t)-\sum_{j=0}^{p+1}q_{m(t)-j}+1} \leq \iota_{g(t)-l} < \iota_{g(t)-\sum_{j=0}^{p} q_{m(t)-j}}}}\Biggl \{ \frac{\nu_{1_{\sigma(\iota^+_{g(t)-l})}}}  { \underline{\alpha}_{\sigma(\iota^+_{g(t)-\sum_{j=0}^{p} q_{m(t)-j}-1})}} \notag\\ 
    & \prod_{\substack{j \in \mathbb{N}; \\ \iota_{g(t)-l}< \iota_{g(t)-j} < \iota_{g(t)-\sum_{j=0}^{p} q_{m(t)-j}}}} \left ( \frac{ \overline{\alpha}_{\sigma(\iota^+_{g(t)-j})}} { \underline{\alpha}_{\sigma(\iota^+_{g(t)-(j+1)})}} \right ) \notag \\
    & e^{-\left ( \sum_{i=1}^{n_s} \omega_{1_i} |\mathcal{O}_i(\iota_{g(t)-(l-1)}, \iota_{g(t)-\sum_{j=0}^{p}q_{m(t)-j}})| \right) }  \Biggr \} \Biggr ]  {\|w_t\|}^2_\infty. \label{eq:trajectory_fsdos_8}
\end{align}
From \eqref{eq:trajectory_fsdos_3}, we have
\begin{align}
    & \|x(\iota_{g(t)-\sum_{j=0}^{p+1}q_{m(t)-j}+1})\|^2 \leq \frac{\overline{\alpha}_{\sigma(\iota^+_{g(t)-\sum_{j=0}^{p+1}q_{m(t)-j}})}}{\underline{\alpha}_{\sigma(\iota^+_{g(t)-\sum_{j=0}^{p+1}q_{m(t)-j}})}} \notag \\
    & e^{- \left (\omega_{1_{\sigma(\iota^+_{g(t)-\sum_{j=0}^{p+1}q_{m(t)-j}})}}|\Bar{\Upsilon}(\iota_{g(t)-\sum_{j=0}^{p+1}q_{m(t)-j}},\iota_{g(t)-\sum_{j=0}^{p+1}q_{m(t)-j}+1})| \right )} \notag \\
    & e^{\left (\omega_{2_{\sigma(\iota^+_{g(t)-\sum_{j=0}^{p+1}q_{m(t)-j}})}}|\Bar{\Omega} (\iota_{g(t)-\sum_{j=0}^{p+1}q_{m(t)-j}},\iota_{g(t)-\sum_{j=0}^{p+1}q_{m(t)-j}+1})| \right )} \|x(\iota_{g(t)-\sum_{j=0}^{p+1}q_{m(t)-j}})\|^2 \notag \\
    & +\frac{1}{\underline{\alpha}_{\sigma(\iota^+_{g(t)-\sum_{j=0}^{p+1}q_{m(t)-j}})}}\Biggl [ \nu_{1_{\sigma(\iota^+_{g(t)-\sum_{j=0}^{p+1}q_{m(t)-j}})}} \notag \\
    & e^{- \left (\omega_{1_{\sigma(\iota^+_{g(t)-\sum_{j=0}^{p+1}q_{m(t)-j}})}}|\Bar{\Upsilon}(\iota_{g(t)-\sum_{j=0}^{p+1}q_{m(t)-j}},\varphi_{m(t)-(p+1)}+v_{m(t)-(p+1)})| \right )} \notag \\
    & e^{\left (\omega_{2_{\sigma(\iota^+_{g(t)-\sum_{j=0}^{p+1}q_{m(t)-j}})}}|\Bar{\Omega} (\iota_{g(t)-\sum_{j=0}^{p+1}q_{m(t)-j}},\varphi_{m(t)-(p+1)}+v_{m(t)-(p+1)})| \right )} \notag \\
    & + \left (\nu_{1_{\sigma(\iota^+_{g(t)-\sum_{j=0}^{p+1}q_{m(t)-j}})}} +\nu_{3_{\sigma(\iota^+_{g(t)-\sum_{j=0}^{p+1}q_{m(t)-j}})}} \right )e^{\omega_{2_{\sigma(\iota^+_{g(t)-\sum_{j=0}^{p+1}q_{m(t)-j}})}} v_{m(t)-(p+1)}} \Biggr ] \|w_t\|^2_\infty. \label{eq:trajectory_fsdos_9}
\end{align}
Using the inequality of \eqref{eq:trajectory_fsdos_9}, in \eqref{eq:trajectory_fsdos_8}, we get
\begin{align}
    &\|x(\iota_{g(t)-\sum_{j=0}^{p}q_{m(t)-j}})\|^2 \leq \frac{\overline{\alpha}_{\sigma(\iota^+_{g(t)-\sum_{j=0}^{p}q_{m(t)-j}-1})} \cdots \overline{\alpha}_{\sigma(\iota^+_{g(t)-\sum_{j=0}^{p+1}q_{m(t)-j}})}}{\underline{\alpha}_{\sigma(\iota^+_{g(t)-\sum_{j=0}^{p}q_{m(t)-j}-1})} \cdots \underline{\alpha}_{\sigma(\iota^+_{g(t)-\sum_{j=0}^{p+1}q_{m(t)-j}})}} \notag \\
    & e^{-\left (\sum_{i=1}^{n_s} \omega_{1_i} | \mathcal{O}_i (\iota_{g(t)-\sum_{j=0}^{p+1}q_{m(t)-j}},\iota_{g(t)-\sum_{j=0}^{p}q_{m(t)-j}})|\right )} \|x(\iota_{g(t)-\sum_{j=0}^{p+1}q_{m(t)-j}})\|^2 \notag \\
    & + \Biggl [ \mathlarger{\mathlarger{\sum}}_{\substack{l \in \mathbb{N};\\ \iota_{g(t)-\sum_{j=0}^{p+1}q_{m(t)-j}+1} \leq \iota_{g(t)-l} < \iota_{g(t)-\sum_{j=0}^{p} q_{m(t)-j}}}}\Biggl \{ \frac{\nu_{1_{\sigma(\iota^+_{g(t)-l})}}}  { \underline{\alpha}_{\sigma(\iota^+_{g(t)-\sum_{j=0}^{p} q_{m(t)-j}-1})}} \notag\\ 
    & \prod_{\substack{j \in \mathbb{N}; \\ \iota_{g(t)-l}< \iota_{g(t)-j} < \iota_{g(t)-\sum_{j=0}^{p} q_{m(t)-j}}}} \Bigl ( \frac{ \overline{\alpha}_{\sigma(\iota^+_{g(t)-j})}} { \underline{\alpha}_{\sigma(\iota^+_{g(t)-(j+1)})}} \Bigr ) e^{-\left ( \sum_{i=1}^{n_s} \omega_{1_i} |\mathcal{O}_i(\iota_{g(t)-(l-1)}, \iota_{g(t)-\sum_{j=0}^{p}q_{m(t)-j}})| \right) }  \Biggr \}  \notag \\
    & + \frac{\overline{\alpha}_{\sigma(\iota^+_{g(t)-\sum_{j=0}^{p}q_{m(t)-j}-1})} \cdots \overline{\alpha}_{\sigma(\iota^+_{g(t)-\sum_{j=0}^{p+1}q_{m(t)-j}+1})}}{\underline{\alpha}_{\sigma(\iota^+_{g(t)-\sum_{j=0}^{p}q_{m(t)-j}-1})} \cdots \underline{\alpha}_{\sigma(\iota^+_{g(t)-\sum_{j=0}^{p+1}q_{m(t)-j}+1})} \underline{\alpha}_{\sigma(\iota^+_{g(t)-\sum_{j=0}^{p+1}q_{m(t)-j}})}} \notag \\
    & e^{-\left (\sum_{i=1}^{n_s} \omega_{1_i} | \mathcal{O}_i (\iota_{g(t)-\sum_{j=0}^{p+1}q_{m(t)-j}+1},\iota_{g(t)-\sum_{j=0}^{p}q_{m(t)-j}})|\right )} \Biggl \{ \nu_{1_{\sigma(\iota^+_{g(t)-\sum_{j=0}^{p+1}q_{m(t)-j}})}} \notag \\
    & e^{- \left (\omega_{1_{\sigma(\iota^+_{g(t)-\sum_{j=0}^{p+1}q_{m(t)-j}})}}|\Bar{\Upsilon}(\iota_{g(t)-\sum_{j=0}^{p+1}q_{m(t)-j}},\varphi_{m(t)-(p+1)}+v_{m(t)-(p+1)})| \right )} \notag \\
    & e^{\left (\omega_{2_{\sigma(\iota^+_{g(t)-\sum_{j=0}^{p+1}q_{m(t)-j}})}}|\Bar{\Omega} (\iota_{g(t)-\sum_{j=0}^{p+1}q_{m(t)-j}},\varphi_{m(t)-(p+1)}+v_{m(t)-(p+1)})| \right )} \notag \\
    & + \left (\nu_{1_{\sigma(\iota^+_{g(t)-\sum_{j=0}^{p+1}q_{m(t)-j}})}} +\nu_{3_{\sigma(\iota^+_{g(t)-\sum_{j=0}^{p+1}q_{m(t)-j}})}} \right )e^{\omega_{2_{\sigma(\iota^+_{g(t)-\sum_{j=0}^{p+1}q_{m(t)-j}})}} v_{m(t)-(p+1)}} \Biggr \} \Biggr ] \notag \\
    & \|w_t\|^2_\infty. \label{eq:trajectory_fsdos_10}
\end{align}
By resizing \eqref{eq:trajectory_fsdos_10}, we get
\begin{align}
    &\|x(\iota_{g(t)-\sum_{j=0}^{p}q_{m(t)-j}})\|^2 \leq \frac{\overline{\alpha}_{\sigma(\iota^+_{g(t)-\sum_{j=0}^{p}q_{m(t)-j}-1})} \cdots \overline{\alpha}_{\sigma(\iota^+_{g(t)-\sum_{j=0}^{p+1}q_{m(t)-j}})}}{\underline{\alpha}_{\sigma(\iota^+_{g(t)-\sum_{j=0}^{p}q_{m(t)-j}-1})} \cdots \underline{\alpha}_{\sigma(\iota^+_{g(t)-\sum_{j=0}^{p+1}q_{m(t)-j}})}} \notag \\
    & e^{-\left (\sum_{i=1}^{n_s} \omega_{1_i} | \mathcal{O}_i (\iota_{g(t)-\sum_{j=0}^{p+1}q_{m(t)-j}},\iota_{g(t)-\sum_{j=0}^{p}q_{m(t)-j}})|\right )} \|x(\iota_{g(t)-\sum_{j=0}^{p+1}q_{m(t)-j}})\|^2 \notag \\
    & + \frac{1}{\underline{\alpha}_{\sigma(\iota^+_{g(t)-\sum_{j=0}^{p}q_{m(t)-j}-1})}} \Biggl [ \mathlarger{\mathlarger{\sum}}_{\substack{l \in \mathbb{N};\\ \iota_{g(t)-\sum_{j=0}^{p+1}q_{m(t)-j}} \leq \iota_{g(t)-l} < \iota_{g(t)-\sum_{j=0}^{p} q_{m(t)-j}}}} \notag \\
    & \Biggl \{ \nu_{1_{\sigma(\iota^+_{g(t)-l})}} \prod_{\substack{j \in \mathbb{N}; \\ \iota_{g(t)-l}< \iota_{g(t)-j} < \iota_{g(t)-\sum_{j=0}^{p} q_{m(t)-j}}}} \left ( \frac{ \overline{\alpha}_{\sigma(\iota^+_{g(t)-j})}} { \underline{\alpha}_{\sigma(\iota^+_{g(t)-(j+1)})}} \right ) \notag \\
    & e^{-\left ( \sum_{i=1}^{n_s} \omega_{1_i} |\mathcal{O}_i(\iota_{g(t)-(l-1)}, \iota_{g(t)-\sum_{j=0}^{p}q_{m(t)-j}})| \right) }  \Biggr \}  \notag \\
    & e^{\left (\omega_{2_{\sigma(\iota^+_{g(t)-\sum_{j=0}^{p+1}q_{m(t)-j}})}}|\Bar{\Omega} (\iota_{g(t)-\sum_{j=0}^{p+1}q_{m(t)-j}},\varphi_{m(t)-(p+1)}+v_{m(t)-(p+1)})| \right )} \notag \\
    & + \left (\nu_{1_{\sigma(\iota^+_{g(t)-\sum_{j=0}^{p+1}q_{m(t)-j}})}} +\nu_{3_{\sigma(\iota^+_{g(t)-\sum_{j=0}^{p+1}q_{m(t)-j}})}} \right )\notag \\
    & \frac{\overline{\alpha}_{\sigma(\iota^+_{g(t)-\sum_{j=0}^{p}q_{m(t)-j}-1})} \cdots \overline{\alpha}_{\sigma(\iota^+_{g(t)-\sum_{j=0}^{p+1}q_{m(t)-j}+1})}}{\underline{\alpha}_{\sigma(\iota^+_{g(t)-\sum_{j=0}^{p}q_{m(t)-j}-2})} \cdots \underline{\alpha}_{\sigma(\iota^+_{g(t)-\sum_{j=0}^{p+1}q_{m(t)-j}})}} e^{\omega_{2_{\sigma(\iota^+_{g(t)-\sum_{j=0}^{p+1}q_{m(t)-j}})}} v_{m(t)-(p+1)}} \Biggr \} \Biggr ] \notag \\
    & \|w_t\|^2_\infty, \label{eq:trajectory_fsdos_11}
\end{align}
by avoiding $e^{- \left (\omega_{1_{\sigma(\iota^+_{g(t)-\sum_{j=0}^{p+1}q_{m(t)-j}})}}|\Bar{\Upsilon}(\iota_{g(t)-\sum_{j=0}^{p+1}q_{m(t)-j}},\varphi_{m(t)-(p+1)}+v_{m(t)-(p+1)})| \right )}$ term because $e^{-(a+b)} \leq e^{-a}, \forall \{a,b\} \in \mathbb{R}_{\geq 0}$. Therefor using the inequality \eqref{eq:trajectory_fsdos_11} in \eqref{eq:trajectory_fsdos_7}, we have
\begin{align}
    & \|x(t)\|^2 \leq \frac{\overline{\alpha}_{\sigma(\iota^+_{g(t)})} \cdots \overline{\alpha}_{\sigma(\iota^+_{g(t)-\sum_{j=0}^{p+1} q_{m(t)-j}})}}{\underline{\alpha}_{\sigma(\iota^+_{g(t)})} \cdots \underline{\alpha}_{\sigma(\iota^+_{g(t)-\sum_{j=0}^{p+1} q_{m(t)-j}})}} e^{-\left (\sum_{i=1}^{n_s} \omega_{1_i} | (\mathcal{O}_i \cap \overline{\Upsilon}) (\iota_{g(t)-\sum_{j=0}^{p+1} q_{m(t)-j}},t)|\right )} \notag \\
    & e^{\left (\sum_{i=1}^{n_s} \omega_{2_i} |(\mathcal{O}_i \cap \overline{\Omega}) (\iota_{g(t)-\sum_{j=0}^{p+1} q_{m(t)-j}},t)| \right)} \|x(\iota_{g(t)-\sum_{j=0}^{p+1} q_{m(t)-j}})\|^2 \notag \\
    & + \frac{1}{\underline{\alpha}_{\sigma(\iota^+_{g(t)})}} \Biggl [ \mathlarger{\mathlarger{\sum}}_{\substack{l \in \mathbb{N};\\ \iota_{g(t)-\sum_{j=0}^p q_{m(t)-j}} \leq \iota_{g(t)-l} < \iota_{g(t)}}} \Biggl \{ \nu_{1_{\sigma(\iota^+_{g(t)-l})}} \mathlarger{\mathlarger{\prod}}_{\substack{j \in \mathbb{N}_0; \\ \iota_{g(t)-l} < \iota_{g(t)-j} \leq \iota_{g(t)}}} \notag \\
    &\left ( \frac{\overline{\alpha}_{\sigma(\iota^+_{g(t)-j})}}{\underline{\alpha}_{\sigma(\iota^+_{g(t)-(j+1)})}} \right )  e^{- \left(\sum_{i=1}^{n_s} \omega_{1_i} |(\mathcal{O}_i \cap \overline{\Upsilon})(\iota_{g(t)-(l-1)}, t)|\right )} e^{ \left (\sum_{i=1}^{n_s} \omega_{2_i}|(\mathcal{O}_i \cap \overline{\Omega})(\iota_{g(t)-l}, t)| \right )} \Biggr \} \notag \\
    & + \mathlarger{\mathlarger{\sum}}_{\substack{k \in \mathbb{N}_0; \\ q \coloneqq \sum_{s=0}^k q_{m(t)-s} \leq \sum_{j=0}^p q_{m(t)-j}}} \Biggl \{ \left ( \nu_{1_{\sigma(\iota^+_{g(t)-q})}} + \nu_{3_{\sigma(\iota^+_{g(t)-q})}} \right )\mathlarger{\prod}_{\substack{{j \in \mathbb{N}_0;}\\{\iota_{g(t)-q} < \iota_{g(t)-j} \leq \iota_{g(t)}}}} \notag \\
    & \left( \frac{\overline{\alpha}_{\sigma(\iota^+_{g(t)-j})}}{\underline{\alpha}_{\sigma(\iota^+_{g(t)-(j+1)})}} \right ) e^{-(\sum_{i=1}^{n_s} \omega_{1_i}|(\mathcal{O}_i \cap \overline{\Upsilon})(\iota_{g(t)-(q-1)},t)|)} e^{(\sum_{i=1}^{n_s} \omega_{2_i}|(\mathcal{O}_i \cap \overline{\Omega})(\iota_{g(t)-q},t)|)} \Biggr \} \notag \\
    & +\frac{\overline{\alpha}_{\sigma(\iota^+_{g(t)})} \cdots \overline{\alpha}_{\sigma(\iota^+_{g(t)-\sum_{j=0}^p q_{m(t)-j}})}}{\underline{\alpha}_{\sigma(\iota^+_{g(t)-1})} \cdots \underline{\alpha}_{\sigma(\iota^+_{g(t)-\sum_{j=0}^p q_{m(t)-j}-1})}} e^{-\left (\sum_{i=1}^{n_s} \omega_{1_i} | (\mathcal{O}_i \cap \overline{\Upsilon}) (\iota_{g(t)-\sum_{j=0}^p q_{m(t)-j}},t)|\right )} \notag \\
    & e^{\left (\sum_{i=1}^{n_s} \omega_{2_i} |(\mathcal{O}_i \cap \overline{\Omega}) (\iota_{g(t)-\sum_{j=0}^p q_{m(t)-j}},t)| \right)} \Biggl [ \mathlarger{\mathlarger{\sum}}_{\substack{l \in \mathbb{N};\\ \iota_{g(t)-\sum_{j=0}^{p+1}q_{m(t)-j}} \leq \iota_{g(t)-l} < \iota_{g(t)-\sum_{j=0}^{p} q_{m(t)-j}}}} \notag \\
    & \Biggl \{ \nu_{1_{\sigma(\iota^+_{g(t)-l})}} \prod_{\substack{j \in \mathbb{N}; \\ \iota_{g(t)-l}< \iota_{g(t)-j} < \iota_{g(t)-\sum_{j=0}^{p} q_{m(t)-j}}}} \left ( \frac{ \overline{\alpha}_{\sigma(\iota^+_{g(t)-j})}} { \underline{\alpha}_{\sigma(\iota^+_{g(t)-(j+1)})}} \right ) \notag \\
    & e^{-\left ( \sum_{i=1}^{n_s} \omega_{1_i} |\mathcal{O}_i(\iota_{g(t)-(l-1)}, \iota_{g(t)-\sum_{j=0}^{p}q_{m(t)-j}})| \right) }  \Biggr \}  \notag \\
    & e^{\left (\omega_{2_{\sigma(\iota^+_{g(t)-\sum_{j=0}^{p+1}q_{m(t)-j}})}}|\Bar{\Omega} (\iota_{g(t)-\sum_{j=0}^{p+1}q_{m(t)-j}},\varphi_{m(t)-(p+1)}+v_{m(t)-(p+1)})| \right )} \notag
\end{align}
\begin{align}
    & + \left (\nu_{1_{\sigma(\iota^+_{g(t)-\sum_{j=0}^{p+1}q_{m(t)-j}})}} +\nu_{3_{\sigma(\iota^+_{g(t)-\sum_{j=0}^{p+1}q_{m(t)-j}})}} \right )\notag \\
    & \frac{\overline{\alpha}_{\sigma(\iota^+_{g(t)-\sum_{j=0}^{p}q_{m(t)-j}-1})} \cdots \overline{\alpha}_{\sigma(\iota^+_{g(t)-\sum_{j=0}^{p+1}q_{m(t)-j}+1})}}{\underline{\alpha}_{\sigma(\iota^+_{g(t)-\sum_{j=0}^{p}q_{m(t)-j}-2})} \cdots \underline{\alpha}_{\sigma(\iota^+_{g(t)-\sum_{j=0}^{p+1}q_{m(t)-j}})}} e^{\omega_{2_{\sigma(\iota^+_{g(t)-\sum_{j=0}^{p+1}q_{m(t)-j}})}} v_{m(t)-(p+1)}} \Biggr \} \Biggr ] \Biggr ] \notag \\
    & \|w_t\|^2_\infty. \label{eq:trajectory_fsdos_12}
\end{align}
Resizing \eqref{eq:trajectory_fsdos_12}, we get
\begin{align}
    & \|x(t)\|^2 \leq \frac{\overline{\alpha}_{\sigma(\iota^+_{g(t)})} \cdots \overline{\alpha}_{\sigma(\iota^+_{g(t)-\sum_{j=0}^{p+1} q_{m(t)-j}})}}{\underline{\alpha}_{\sigma(\iota^+_{g(t)})} \cdots \underline{\alpha}_{\sigma(\iota^+_{g(t)-\sum_{j=0}^{p+1} q_{m(t)-j}})}} e^{-\left (\sum_{i=1}^{n_s} \omega_{1_i} | (\mathcal{O}_i \cap \overline{\Upsilon}) (\iota_{g(t)-\sum_{j=0}^{p+1} q_{m(t)-j}},t)|\right )} \notag \\
    & e^{\left (\sum_{i=1}^{n_s} \omega_{2_i} |(\mathcal{O}_i \cap \overline{\Omega}) (\iota_{g(t)-\sum_{j=0}^{p+1} q_{m(t)-j}},t)| \right)} \|x(\iota_{g(t)-\sum_{j=0}^{p+1} q_{m(t)-j}})\|^2 \notag \\
    & + \frac{1}{\underline{\alpha}_{\sigma(\iota^+_{g(t)})}} \Biggl [ \mathlarger{\mathlarger{\sum}}_{\substack{l \in \mathbb{N};\\ \iota_{g(t)-\sum_{j=0}^{p+1} q_{m(t)-j}} \leq \iota_{g(t)-l} < \iota_{g(t)}}} \Biggl \{ \nu_{1_{\sigma(\iota^+_{g(t)-l})}} \mathlarger{\mathlarger{\prod}}_{\substack{j \in \mathbb{N}_0; \\ \iota_{g(t)-l} < \iota_{g(t)-j} \leq \iota_{g(t)}}} \notag \\
    &\left ( \frac{\overline{\alpha}_{\sigma(\iota^+_{g(t)-j})}}{\underline{\alpha}_{\sigma(\iota^+_{g(t)-(j+1)})}} \right )  e^{- \left(\sum_{i=1}^{n_s} \omega_{1_i} |(\mathcal{O}_i \cap \overline{\Upsilon})(\iota_{g(t)-(l-1)}, t)|\right )} e^{ \left (\sum_{i=1}^{n_s} \omega_{2_i}|(\mathcal{O}_i \cap \overline{\Omega})(\iota_{g(t)-l}, t)| \right )} \Biggr \} \notag \\
    & + \mathlarger{\mathlarger{\sum}}_{\substack{k \in \mathbb{N}_0; \\ q \coloneqq \sum_{s=0}^k q_{m(t)-s} \leq \sum_{j=0}^{p+1} q_{m(t)-j}}} \Biggl \{ \left ( \nu_{1_{\sigma(\iota^+_{g(t)-q})}} + \nu_{3_{\sigma(\iota^+_{g(t)-q})}} \right )\mathlarger{\prod}_{\substack{{j \in \mathbb{N}_0;}\\{\iota_{g(t)-q} < \iota_{g(t)-j} \leq \iota_{g(t)}}}} \notag \\
    & \left( \frac{\overline{\alpha}_{\sigma(\iota^+_{g(t)-j})}}{\underline{\alpha}_{\sigma(\iota^+_{g(t)-(j+1)})}} \right ) e^{-(\sum_{i=1}^{n_s} \omega_{1_i}|(\mathcal{O}_i \cap \overline{\Upsilon})(\iota_{g(t)-(q-1)},t)|)} e^{(\sum_{i=1}^{n_s} \omega_{2_i}|(\mathcal{O}_i \cap \overline{\Omega})(\iota_{g(t)-q},t)|)} \Biggr \} \Biggr ] \|w_t\|_\infty, \label{eq:trajectory_fsdos_13}
\end{align}
$\forall t \in \mathbb{R}_{\geq \iota_{g(t)-\sum_{j=0}^{p+1} q_{m(t)-j}}},$ which concludes the proof.
We must bind the sum terms in \eqref{eq:trajectory_fsdos_14} in order to complete the proof. In the next lemma, we demonstrate that.
\begin{lemma} \label{lm:upper_bound_sum}
    Under Assumptions \ref{as1}, \ref{as2}, and \ref{as3} and condition \eqref{eq:convergance_1} and \eqref{eq:varkappa} in Lemma \ref{lm:convergence}, the sum
    \begin{align}
            & \frac{1}{\underline{\alpha}_{\sigma(\iota^+_{g(t)})}} \mathlarger{\sum}_{\substack{{k \in \mathbb{N}_0;}\\{q\coloneqq   \sum_{s=0}^k q_{m(t)-s} \leq g(t)}}}  \left(\nu_{1_{\sigma(\iota^+_{g(t)-q})}} + \nu_{3_{\sigma(\iota^+_{g(t)-q})}} \right ) \mathlarger{\prod}_{\substack{{j \in \mathbb{N}_0;}\\{\iota_{g(t)-q} < \iota_{g(t)-j} \leq \iota_{g(t)}}}}  \notag \\
            & \left( \frac{\overline{\alpha}_{\sigma(\iota^+_{g(t)-j})}}{\underline{\alpha}_{\sigma(\iota^+_{g(t)-(j+1)})}} \right ) e^{-(\sum_{i=1}^{n_s} \omega_{1_i}|(\mathcal{O}_i \cap \overline{\Upsilon})(\iota_{g(t)-(q-1)},t)|)} e^{(\sum_{i=1}^{n_s} \omega_{2_i}|(\mathcal{O}_i \cap \overline{\Omega})(\iota_{g(t)-q},t)|)} \label{eq:sum_fsdos_1}
    \end{align}
    is bounded from above.
\end{lemma}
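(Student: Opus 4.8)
The plan is to bound \eqref{eq:sum_fsdos_1} by a convergent geometric series whose ratio does not depend on $t$, by playing the Lyapunov decay accumulated over the FSDoS-free stretches against (i) the condition-number jumps incurred at the observer switches and (ii) the exponential growth accumulated over the FSDoS stretches, using Assumptions \ref{as1}, \ref{as2}, \ref{as3} together with conditions \eqref{eq:convergance_1}, \eqref{eq:varkappa}, \eqref{eq:FSDoS_condition}. First I would replace the finitely many mode-dependent constants by uniform ones: set $\nu^{\star}\coloneqq\max_{\sigma}(\nu_{1_{\sigma}}+\nu_{3_{\sigma}})$, $\underline{\alpha}^{\star}\coloneqq\min_{\sigma}\underline{\alpha}_{\sigma}$, $\mu\coloneqq(\max_{\sigma}\overline{\alpha}_{\sigma})/(\min_{\sigma}\underline{\alpha}_{\sigma})\ge 1$, $\underline{\omega}_{1}\coloneqq\min_{\sigma}\omega_{1_{\sigma}}$, $\overline{\omega}_{2}\coloneqq\max_{\sigma}\omega_{2_{\sigma}}$. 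The product $\prod_{j}\overline{\alpha}_{\sigma(\iota^{+}_{g(t)-j})}/\underline{\alpha}_{\sigma(\iota^{+}_{g(t)-(j+1)})}$ in the $k$-th summand runs over exactly $q\coloneqq\sum_{s=0}^{k}q_{m(t)-s}$ indices $j\in\{0,\dots,q-1\}$, each factor $\le\mu$, so that summand is at most $(\nu^{\star}/\underline{\alpha}^{\star})\,\mu^{q}\,e^{-\sum_{i}\omega_{1_{i}}|(\mathcal{O}_{i}\cap\overline{\Upsilon})(\iota_{g(t)-(q-1)},t)|}\,e^{\sum_{i}\omega_{2_{i}}|(\mathcal{O}_{i}\cap\overline{\Omega})(\iota_{g(t)-q},t)|}$.

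Next I would estimate the two exponentials on the window $[\iota_{g(t)-q},t]$. By Lemma \ref{lm:actuator_delay}, $\overline{\Omega}$ exceeds $\Omega$ by at most one transmission lag (of length $\le\underline{\Delta}$) per FSDoS interval, so Assumptions \ref{as2}--\ref{as3} give $|\overline{\Omega}(\tau,t)|\le\bar{\zeta}+\rho(t-\tau)$ with $\bar{\zeta}\coloneqq\zeta+\eta\,\underline{\Delta}$ and $\rho\coloneqq\frac{1}{T}+\frac{\underline{\Delta}}{\tau_{F}}$, hence $|\overline{\Upsilon}(\tau,t)|\ge(1-\rho)(t-\tau)-\bar{\zeta}$. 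After reconciling the one-switch mismatch between the two windows (the intervening stretch $[\iota_{g(t)-q},\iota_{g(t)-(q-1)}]$ is an FSDoS interval whose length is already counted in the growth factor), the partition identities $|(\mathcal{O}_{i}\cap\overline{\Upsilon})|+|(\mathcal{O}_{i}\cap\overline{\Omega})|=|\mathcal{O}_{i}|$ and $\sum_{i}|\mathcal{O}_{i}(\tau,t)|=t-\tau$ recast the combined exponent as $-\sum_{i}\omega_{1_{i}}|\mathcal{O}_{i}(\tau,t)|+\sum_{i}(\omega_{1_{i}}+\omega_{2_{i}})|(\mathcal{O}_{i}\cap\overline{\Omega})(\tau,t)|$; carrying this out mode by mode and invoking \eqref{eq:FSDoS_condition} in the form $\omega_{1_{\sigma}}(1-\rho)-\omega_{2_{\sigma}}\rho>0$ for every $\sigma$, I would bound it by $-\mu_{0}(t-\iota_{g(t)-q})+C_{0}$ with $\mu_{0}>0$ and $C_{0}$ depending only on $\bar{\zeta}$ and the matrices. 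Feeding in the $\mu^{q}$ factor and the spacing estimate $t-\iota_{g(t)-q}\ge\iota_{g(t)}-\iota_{g(t)-q}\ge q\,\underline{\Delta}$ from \eqref{eq:zeno_1}, and using \eqref{eq:convergance_1}--\eqref{eq:varkappa} exactly as in the proof of Lemma \ref{lm:convergence} (to ensure the decay over the FSDoS-free part dominates the accumulated $\mu^{q}$), the $k$-th summand becomes $\le C\,e^{-\mu_{1}q}$ for a fresh $\mu_{1}>0$.

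Finally, since each block $Y_{m(t)-s}$, $s\ge1$, is preceded by an FSDoS block $W_{m(t)-s}$ and the jump points are $\underline{\Delta}$-separated by \eqref{eq:zeno_1}, the index $k$ ranges over a finite set for each $t$ and $q$ is nondecreasing in $k$; grouping the finitely many summands by the value of $q$, the bound $\le C\,e^{-\mu_{1}q}$ makes $\sum_{k}$ dominated by a convergent geometric series whose sum does not depend on $t$, which is exactly the asserted upper bound on \eqref{eq:sum_fsdos_1}. I expect the main obstacle to be the middle step: because \eqref{eq:FSDoS_condition} is genuinely mode-dependent, the crude device of replacing every mode by the worst pair $(\underline{\omega}_{1},\overline{\omega}_{2})$ is too lossy --- it would demand the strictly stronger $\rho<\underline{\omega}_{1}/(\underline{\omega}_{1}+\overline{\omega}_{2})$ --- so the FSDoS-growth-versus-Lyapunov-decay balance has to be booked mode by mode along the switching sequence, in tandem with the condition-number bookkeeping inherited from Lemma \ref{lm:convergence}; keeping both ledgers consistent across the alternation of $W$- and $Y$-intervals is the delicate part.
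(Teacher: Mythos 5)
Your proposal is correct and follows essentially the same route as the paper: bound $|\overline{\Omega}(\tau,t)|$ via Assumptions \ref{as2}--\ref{as3} plus the per-FSDoS actuation delay (the paper's $\zeta_*$, $T_*$ are your $\bar\zeta$, $1/\rho$), use the complement identity for $|\overline{\Upsilon}|$, extract a per-block decay rate $\beta_\sigma=\omega_{1_\sigma}-(\omega_{1_\sigma}+\omega_{2_\sigma})/T_*>0$ from \eqref{eq:FSDoS_condition} mode by mode, lower-bound the inter-switch and inter-FSDoS spacings by $\tau_D$ and $\tau_F$ via Assumptions \ref{as1}--\ref{as2}, and conclude by a ratio-test/geometric-series argument in which \eqref{eq:convergance_1} makes each condition-number factor times $e^{-\omega_{1_\sigma}\tau_D}$ strictly less than one. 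Your closing caveat about the mode-dependent bookkeeping is precisely how the paper handles it (via the $\beta_{\sigma(\iota^+_{g(t)-q})}$ ledger), so no substantive difference remains.
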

$Proof$ $of$ $Lemma$ $\ref{lm:upper_bound_sum}$ $:$ Let's take the exponential term of \eqref{eq:sum_fsdos_1},
\begin{align}
    & \sum_{\substack{{k \in \mathbb{N}_0;}\\{q \leq g(t)}}} e^{-(\sum_{i=1}^{n_s} \omega_{1_i}|(\mathcal{O}_i \cap \overline{\Upsilon})(\iota_{g(t)-(q-1)},t)|)} e^{(\sum_{i=1}^{n_s} \omega_{2_i}|(\mathcal{O}_i \cap \overline{\Omega})(\iota_{g(t)-q},t)|)} \notag \\
    & = \sum_{\substack{{k \in \mathbb{N}_0;}\\{q \leq g(t)}}} \exp{\left \{ \substack{-\bigl [\omega_{1_{\sigma(\iota^+_{g(t)})}} |\overline{\Upsilon}(\iota_{g(t)},t)| + \omega_{1_{\sigma(\iota^+_{g(t)-1})}} |\overline{\Upsilon} (\iota_{g(t)-1}, \iota_{g(t)})| + \cdots \\ +\omega_{1_{\sigma(\iota^+_{g(t)-(q-1)})}} |\overline{\Upsilon} (\iota_{g(t)-(q-1)}, \iota_{g(t)-(q-2)})| \bigr ]}\right \}} \notag \\
    & \exp{\left \{ \substack{\omega_{2_{\sigma(\iota^+_{g(t)})}} |\overline{\Omega}(\varphi_{m(t)},t) | + \omega_{2_{\sigma(\iota^+_{g(t)-q_{m(t)-1}})}} | \overline{\Omega}(\varphi_{m(t)-1}, \iota_{g(t)-q_{m(t)-1}+1}|) \\ + \cdots + \omega_{2_{\sigma (\iota^+_{g(t)-q})}} |\overline{\Omega}(\varphi_{m(t)-k}, \iota_{g(t)-(q-1)})| }\right \}}. \label{eq:sum_fsdos_2}
\end{align}
Note that
\begin{align}
    |\overline{\Omega}(\tau,t)| \leq |\Omega(\tau,t)|+(1+ n(\tau,t)) \Delta_*
\end{align}
$\forall t, \tau \in \mathbb{R}_{\geq 0}$ with $t> \tau$. In words, $|\overline{\Omega}(\tau,t)|$ can be upper bounded by the total length of DoS over $[\tau, t]$ plus the maximum actuation delay $\Delta_*$, which may occur once at the beginning of the interval $[\tau, t]$ (as a result of a previous FSDoS) plus $n(\tau, t)$ times, where $n(\tau, t)$ represents the number of off/on transitions of FSDoS occurring over $[\tau, t)$. From Assumption \ref{as2} and \ref{as3}, we can write
\begin{align}
    |\overline{\Omega}(\tau,t)| &\leq \zeta +\frac{t-\tau}{T} + \left ( 1+ \eta \frac{t-\tau}{\tau_f}\right ) \Delta_* \notag \\
    & \eqqcolon \zeta_* +\frac{t-\tau}{T_*},
\end{align}
considering $\zeta_* \coloneqq \zeta +(1+\eta) \Delta_*$ and $T_* \coloneqq T \tau_F/ (T \Delta_*+ \tau_F).$ With this in mind, we can now analyze the sum term in \eqref{eq:sum_fsdos_2}. From the above inequality, we have
\begin{align}
    |\overline{\Omega}(\varphi_{m(t)},t)| \leq \zeta_* + \frac{t-\varphi_{m(t)}}{T_*}
\end{align}
$\forall t \in \mathbb{R}_{\geq \varphi_{m(t)}}, m \in \mathbb{N}$. Similarly, $|\overline{\Omega}(\varphi_{m(t)-1}, \iota_{g(t)-(q_{m(t)-1}-1)})| \leq (\iota_{g(t)-(q_{m(t)-1}-1)}- \varphi_{m(t)-1})/ T_*$ and $|\overline{\Omega}(\varphi_{m(t)-k}, \iota_{g(t)-(q-1)})| \leq (\iota_{g(t)-(q-1)}-\varphi_{m(t)-k})/T_*$. Consider next $|\overline{\Upsilon}(\iota_{g(t)-(q-1)}, \iota_{g(t)-(q-2)})|$. We are avoiding the constant terms $\zeta_*$ in the rest of the intervals. We have
\begin{align}
    |\overline{\Upsilon}(\iota_{g(t)},t)| =t- \iota_{g(t)} - |\overline{\Omega}(\varphi_{m(t)},t)|
\end{align}
$\forall t \in \mathbb{R}_{\varphi_{m(t)}},$ where $\varphi_{m(t)}> \iota_{g(t)}$. Similarly, $|\overline{\Upsilon}(\iota_{g(t)-1}, \iota_{g(t)})| = \iota_{g(t)}- \iota_{g(t)-1}-|\overline{\Omega}(\iota_{g(t)-1}, \iota_{g(t)})|= \iota_{g(t)}- \iota_{g(t)-1}, |\overline{\Upsilon}(\iota_{g(t)-q_{m(t)-1}}, \iota_{g(t)-q_{m(t)-1}+1})| = \iota_{g(t)-q_{m(t)-1}+1}- \iota_{g(t)-q_{m(t)-1}}- |\overline{\Omega}(\varphi_{m(t)-1},\iota_{g(t)-q_{m(t)-1}+1})|,$ and $|\overline{\Upsilon}(\iota_{g(t)-(q-1)}, \iota_{g(t)-(q-2)})|= \iota_{g(t)-(q-2)}- \varphi_{m(t)-k} - |\overline{\Omega}(\varphi_{m(t)-k}, \iota_{g(t)-(q-1)})|$ Hence, from \eqref{eq:sum_fsdos_2}, we can write
\begin{align}
    \sum_{\substack{k \in \mathbb{N}_0; \\ q \leq g(t)}} & \exp{\left \{ \substack{-\Biggl [\sum_{i=1}^{n_s} \omega_{1_i} |\mathcal{O}_i (\varphi_{m(t)-k},t)|- \Bigl\{ \left ( \omega_{1_{\sigma(\iota^+_{g(t)})}} + \omega_{2_{\sigma(\iota^+_{g(t)}) }} \right ) |\overline{\Omega}(\varphi_{m(t)},t)| \\ + \left ( \omega_{1_{\sigma(\iota^+_{g(t)-q_{m(t)-1}})}} + \omega_{2_{\sigma(\iota^+_{g(t)-q_{m(t)-1}}) }} \right ) | \overline{\Omega}(\varphi_{m(t)-1}, \iota_{g(t)-(q_{m(t)-1}-1)})| \\+ \cdots + \left ( \omega_{1_{\sigma(\iota^+_{g(t)-q})}} + \omega_{2_{\sigma(\iota^+_{g(t)-q})}} \right ) |\overline{\Omega} (\varphi_{m(t)-k}, \iota_{g(t)-(q-1)})| \Bigr \} \Biggr ]} \right \}} \notag \\
    \leq & \exp{\left\{ \left ( \omega_{1_{\sigma(\iota^+_{g(t)})}} + \omega_{2_{\sigma(\iota^+_{g(t)})}}\right ) \zeta_*\right \}} \notag \\
    & \sum_{\substack{k \in \mathbb{N}_0; \\ q \leq g(t)}} \exp{\left \{ \substack{- \Bigl[ \sum_{i=1}^{n_s} \omega_{1_i} |\mathcal{O}_i(\varphi_{m(t)-k},t)| - \frac{1}{T_*} \Bigl \{ \left ( \omega_{1_{\sigma(\iota^+_{g(t)})}} + \omega_{2_{\sigma(\iota^+_{g(t)})}}\right ) (t-\varphi_{m(t)})\\ + \left ( \omega_{1_{\sigma(\iota^+_{g(t)-q_{m(t)-1}})}} + \omega_{2_{\sigma(\iota^+_{g(t)-q_{m(t)-1}})}}\right ) (\iota_{g(t)-(q_{m(t)-1}-1)}- \varphi_{m(t)-1}) \\ + \cdots  + \left ( \omega_{1_{\sigma(\iota^+_{g(t)-q})}} + \omega_{2_{\sigma(\iota^+_{g(t)-q})}} \right ) \left ( \iota_{g(t)-(q-1)}- \varphi_{m(t)-k} \right ) \Bigr ]} \right\}}. \label{eq:sum_fsdos_3}
\end{align}
Moreover, Assumption \ref{as1} and Assumption \ref{as2} yields
\begin{align}
    t- \varphi_{m(t)} \geq \tau_F n(\varphi_{m(t)},t) - \tau_F \eta= \tau_F- \tau_F \eta \label{eq:assumption_fsdos_1}
\end{align}
where $n(\varphi_{m(t)},t)=1$ because only one FSDoS on transition is present in $[ \varphi_{m(t)}, t)$ interval and $l(\varphi_{m(t)}, t)=0$ because $[\varphi_{m(t)},t)$ is operated in a single switching mode considering that $t< \varphi_{m(t)}+v_{m(t)}$. Similarly, we get $\iota_{g(t)-(q_{m(t)-1}-1)}- \varphi_{m(t)-1} \geq \tau_F, \iota_{g(t)-(q-1)}- \varphi_{m(t)-k} \geq \tau_F$ and so on by avoiding the constant term $\eta$ in the rest of the cases. Again, Assumption \ref{as1} and Assumption \ref{as2} yields
\begin{align}
    \varphi_{m(t)}- \iota_{g(t)} \geq \tau_D l(\iota_{g(t)}, \varphi_{m(t)})- \tau_D \varkappa = \tau_D - \tau_D \varkappa
\end{align}
where $l(\iota_{g(t)}, \varphi_{m(t)})=1$ because only one time switching is changed in $[\iota_{g(t)}, \varphi_{m(t)})$ interval and $n(\iota_{g(t)}, \varphi_{m(t)})=0$ because there is no FSDoS present in the interval $[\iota_{g(t)}, \varphi_{m(t)})$. Similarly, we get $\iota_{g(t)}- \iota_{g(t)-1} \geq \tau_D, \varphi_{m(t)-1}- \iota_{g(t)-(q_{m(t)-1})} \geq \tau_D, \iota_{g(t)-q}- \iota_{g(t)-(q+1)} \geq \tau_D $ and so on by avoiding the constant term $\nu$. Hence
\begin{align}
    & \sum_{\substack{k \in \mathbb{N}_0; \\ q \leq g(t)}} \exp{\left \{ \substack{- \Bigl[ \sum_{i=1}^{n_s} \omega_{1_i} |\mathcal{O}_i(\varphi_{m(t)-k},t)| - \frac{1}{T_*} \Bigl \{ \left ( \omega_{1_{\sigma(\iota^+_{g(t)})}} + \omega_{2_{\sigma(\iota^+_{g(t)})}}\right ) (t-\varphi_{m(t)})\\ + \left ( \omega_{1_{\sigma(\iota^+_{g(t)-q_{m(t)-1}})}} + \omega_{2_{\sigma(\iota^+_{g(t)-q_{m(t)-1}})}}\right ) (\iota_{g(t)-(q_{m(t)-1}-1)}- \varphi_{m(t)-1}) \\ + \cdots  + \left ( \omega_{1_{\sigma(\iota^+_{g(t)-q})}} + \omega_{2_{\sigma(\iota^+_{g(t)-q})}} \right ) \left ( \iota_{g(t)-(q-1)}- \varphi_{m(t)-k} \right ) \Bigr \} \Bigr ]} \right\}} \notag \\
    & =\sum_{\substack{k \in \mathbb{N}_0; \\ q \leq g(t)}} \exp{\left \{ \substack{ -\Bigl [ \omega_{1_{\sigma (\iota^+_{g(t)})}} \left (t- \iota_{g(t)} \right ) + \omega_{1_{\sigma (\iota^+_{g(t)-1})}} \left (\iota_{g(t)}- \iota_{g(t)-1} \right ) + \cdots + \omega_{1_{\sigma(\iota^+_{g(t)-q_{m(t)-1}})}}\\ \left (\iota_{g(t)-(q_{m(t)-1}-1)}- \iota_{g(t)-q_{m(t)-1}}\right) + \cdots + \omega_{1_{\sigma(\iota^+_{g(t)-q})}} \left (\iota_{g(t)-(q-1)}- \varphi_{m(t)-k} \right) \\- \frac{1}{T_*} \Bigl \{ \left ( \omega_{1_{\sigma(\iota^+_{g(t)})}} + \omega_{2_{\sigma(\iota^+_{g(t)})}}\right ) (t-\varphi_{m(t)}) + \Bigl ( \omega_{1_{\sigma(\iota^+_{g(t)-q_{m(t)-1}})}} \\ + \omega_{2_{\sigma(\iota^+_{g(t)-q_{m(t)-1}})}} \Bigr ) \left (\iota_{g(t)-(q_{m(t)-1}-1)}- \varphi_{m(t)-1} \right) + \cdots \\ + \left ( \omega_{1_{\sigma(\iota^+_{g(t)-q})}} + \omega_{2_{\sigma(\iota^+_{g(t)-q})}} \right ) \left ( \iota_{g(t)-(q-1)}- \varphi_{m(t)-k} \right ) \Bigr \}\Bigr ]}\right \}} \notag \\
    & \leq \exp{ \left \{\omega_{1_{\sigma(\iota^+_{g(t)})}} \tau_D \varkappa \right \}} \sum_{\substack{k \in \mathbb{N}_0; \\ q \leq g(t)}} \exp{ \left\{ - \left ( \omega_{1_{\sigma(\iota^+_{g(t)})}} +\cdots + \omega_{1_{\sigma(\iota^+_{g(t)-q})}}\right ) \tau_D \right \}} \notag \\
    & \exp{ \left \{ \substack{ -\Biggl[ \left ( \omega_{1_{\sigma(\iota^+_{g(t)})}} - \frac{\omega_{1_{\sigma(\iota^+_{g(t)})}}+ \omega_{2_{\sigma(\iota^+_{g(t)})}}}{T_*}\right )(t- \varphi_{m(t)}) + \Biggl ( \omega_{1_{\sigma(\iota^+_{g(t)-q_{m(t)-1}})}} \\ -\frac{\omega_{1_{\sigma(\iota^+_{g(t)-q_{m(t)-1}})}}+ \omega_{2_{\sigma(\iota^+_{g(t)-q_{m(t)-1}})}}}{T_*} \Biggr )\left ( \iota_{g(t)-(q_{m(t)-1}-1)}- \varphi_{m(t)-1} \right ) + \cdots + \\ \left ( \omega_{1_{\sigma(\iota^+_{g(t)-q})}} - \frac{\omega_{1_{\sigma(\iota^+_{g(t)-q})}} + \omega_{1_{\sigma(\iota^+_{g(t)-q})}}}{T_*}\right ) \left ( \iota_{g(t)-(q-1)}- \varphi_{m(t)-k}\right ) \Biggr ]} \right \}} \notag \\
    & \leq \exp{ \left \{\omega_{1_{\sigma(\iota^+_{g(t)})}} \tau_D \varkappa \right \}} \sum_{\substack{k \in \mathbb{N}_0; \\ q \leq g(t)}} \exp{ \left\{ - \left ( \omega_{1_{\sigma(\iota^+_{g(t)})}} +\cdots + \omega_{1_{\sigma(\iota^+_{g(t)-q})}}\right ) \tau_D \right \}} \notag \\
    & \exp{\left \{ \substack{ -\Biggl [\beta_{\sigma(\iota^+_{g(t)})} \left (t- \varphi_{m(t)} \right ) + \beta_{\sigma(\iota^+_{g(t)-q_{m(t)-1}})} \left ( \iota_{g(t)-(q_{m(t)-1}-1)}- \varphi_{m(t)-1}\right ) \\ +\cdots + \beta_{\sigma(\iota^+_{g(t)-q})} \left ( \iota_{g(t)-(q-1)} - \varphi_{m(t)-k} \right ) \Biggr ] } \right \}} \label{eq:sum_fsdos_4}
\end{align}
where $\beta_{\sigma(\iota^+_{g(t)-q})} \coloneqq \omega_{1_{\sigma(\iota^+_{g(t)-q})}}- ( \omega_{1_{\sigma(\iota^+_{g(t)-q})}} + \omega_{2_{\sigma (\iota^+_{g(t)-q})}} )/T_*$ and so on. Under the condition of \eqref{eq:FSDoS_condition}, we have $\beta_\sigma > 0$. Equation \eqref{eq:sum_fsdos_4} is bounded from above by
\begin{align}
    & \exp{ \left \{\omega_{1_{\sigma(\iota^+_{g(t)})}} \tau_D \varkappa \right \}} \sum_{\substack{k \in \mathbb{N}_0; \\ q \leq g(t)}} \exp{ \left\{ - \left ( \omega_{1_{\sigma(\iota^+_{g(t)})}} +\cdots + \omega_{1_{\sigma(\iota^+_{g(t)-q})}}\right ) \tau_D \right \}} \notag \\
    & \exp{\left \{ \beta_{\sigma(\iota^+_{g(t)})} \tau_F \eta\right \}} \exp{ \left \{ - \left ( \beta_{\sigma(\iota^+_{g(t)})} + \beta_{\sigma (\iota^+_{g(t)-q_{m(t)-1}})} +\cdots+ \beta_{\sigma(\iota^+_{g(t)-q})}\right ) \tau_F\right \}}, \label{eq:sum_fsdos_5}
\end{align}
using \eqref{eq:assumption_fsdos_1} and so on. Hence equation \eqref{eq:sum_fsdos_1} is bounded from above by
\begin{align}
    & \frac{1}{\underline{\alpha}_{\sigma(\iota^+_{g(t)})}} e^{\bigl ( \omega_{1_{\sigma(\iota^+_{g(t)})}} +\omega_{2_{\sigma(\iota^+_{g(t)})}}\bigr ) \zeta_*} e^{\omega_{1_{\sigma(\iota^+_{g(t)})}} \tau_D \varkappa} e^{\beta_{\sigma(\iota^+_{g(t)})} \tau_F \eta} \notag \\
    & \sum_{\substack{{k \in \mathbb{N}_0;}\\{q \leq g(t)}}} \Biggl [ \left(\nu_{1_{\sigma(\iota^+_{g(t)-q})}} + \nu_{3_{\sigma(\iota^+_{g(t)-q})}} \right ) \prod_{\substack{j \in \mathbb{N}_0; \\ \iota_{g(t)-q} < \iota_{g(t)-j} \leq \iota_{g(t)}}} \left ( \frac{\overline{\alpha}_{\sigma(\iota^+_{g(t)-j})}}{\underline{\alpha}_{\sigma(\iota^+_{g(t)-(j+1)})}}\right ) \notag \\
    & e^{-\sum_{j=0}^q \omega_{1_{\sigma(\iota^+_{g(t)-j})}} \tau_D} e^{- \left ( \beta_{\sigma(\iota^+_{g(t)})} + \beta_{\sigma(\iota^+_{g(t)-q_{m(t)-1}})} + \cdots + \beta_{\sigma (\iota^+_{g(t)-q})}\right )\tau_F } \biggr] \notag \\
    & \leq \overline{\nu} e^{\bigl ( \omega_{1_{\sigma(\iota^+_{g(t)})}} +\omega_{2_{\sigma(\iota^+_{g(t)})}}\bigr ) \zeta_*} e^{\omega_{1_{\sigma(\iota^+_{g(t)})}} \tau_D \varkappa} e^{\beta_{\sigma(\iota^+_{g(t)})} \tau_F \eta} \notag \\
    & \sum_{\substack{k \in \mathbb{N}_0;\\ q \leq g(t)}} \biggl [ \frac{1}{\underline{\alpha}_{\sigma(\iota^+_{g(t)-q})}} \prod_{\substack{j \in \mathbb{N}_0; \\ \iota_{g(t)-q}< \iota_{g(t)-j} \leq \iota_{g(t)}}} \left ( \frac{\overline{\alpha}_{\sigma(\iota^+_{g(t)-j})}}{\underline{\alpha}_{\sigma(\iota^+_{g(t)-j})}}\right ) e^{-\sum_{j=0}^q \omega_{1_{\sigma(\iota_{g(t)-j})}} \tau_D} \notag \\
    & e^{- \left ( \beta_{\sigma(\iota^+_{g(t)})} + \beta_{\sigma(\iota^+_{g(t)-q_{m(t)-1}})} + \cdots + \beta_{\sigma (\iota^+_{g(t)-q})}\right )\tau_F } \biggr] \notag \\
    & \leq \frac{\overline{\nu}}{\underline{\alpha}} e^{\bigl ( \omega_{1_{\sigma(\iota^+_{g(t)})}} +\omega_{2_{\sigma(\iota^+_{g(t)})}}\bigr ) \zeta_*} e^{\omega_{1_{\sigma(\iota^+_{g(t)})}} \tau_D \varkappa} e^{\beta_{\sigma(\iota^+_{g(t)})} \tau_F \eta} \notag \\
    & \sum_{\substack{k \in \mathbb{N}_0;\\ q \coloneqq \sum_{s=0}^k q_{m(t)-s} \leq g(t)}} \biggl [ \prod_{\substack{j \in \mathbb{N}_0; \\ \iota_{g(t)-q}< \iota_{g(t)-j} \leq \iota_{g(t)}}} \left ( \frac{\overline{\alpha}_{\sigma(\iota^+_{g(t)-j})}}{\underline{\alpha}_{\sigma(\iota^+_{g(t)-j})}}\right ) e^{-\sum_{j=0}^q \omega_{1_{\sigma(\iota^+_{g(t)-j})}} \tau_D} \notag \\
    & e^{- \left ( \beta_{\sigma(\iota^+_{g(t)})} + \beta_{\sigma(\iota^+_{g(t)-q_{m(t)-1}})} + \cdots + \beta_{\sigma (\iota^+_{g(t)-q})}\right )\tau_F } \biggr], \label{eq:sum_fsdos_6}
\end{align}
where $\overline{\nu} \coloneqq \max_{\forall \sigma \in \{1, \cdots, n_s\}} \left (\nu_{1_\sigma}+ \nu_{3_\sigma} \right )$. If we can prove the sum term of \eqref{eq:sum_fsdos_6}’s RHS is convergent, then we can say that \eqref{eq:sum_fsdos_1} is bounded from above. We use a ratio test to prove the sum term of \eqref{eq:sum_fsdos_6} is convergent. Consider $qp \coloneqq \sum_{s=0}^p q_{m(t)-s}$ and $qp' \coloneqq \sum_{s=0}^{p+1} q_{m(t)-s}$. The $p$th sum term of that series is
\begin{align}
    a_p \coloneqq \prod_{\substack{j\in \mathbb{N}_0;\\ \iota_{g(t)-qp} < \iota_{g(t)-j} \leq \iota_{g(t)}}} & \left ( \frac{\overline{\alpha}_{\sigma(\iota^+_{g(t)-j})}}{\underline{\alpha}_{\sigma(\iota^+_{g(t)-j})}}\right ) e^{-\sum_{j=0}^{qp} \omega_{1_{\sigma(\iota_{g(t)-j})}} \tau_D} \notag \\
    & e^{- \left ( \beta_{\sigma(\iota^+_{g(t)})} + \beta_{\sigma(\iota^+_{g(t)-q_{m(t)-1}})} + \cdots + \beta_{\sigma (\iota^+_{g(t)-qp})}\right )\tau_F } .
\end{align}
Recall that to compute $a_{p+1}$ all that we need to do is substitute $p + 1$ for all the $p$’s in $a_p$
\begin{align}
    a_{p+1} \coloneqq \prod_{\substack{j\in \mathbb{N}_0;\\ \iota_{g(t)-qp'} < \iota_{g(t)-j} \leq \iota_{g(t)}}} & \left ( \frac{\overline{\alpha}_{\sigma(\iota^+_{g(t)-j})}}{\underline{\alpha}_{\sigma(\iota^+_{g(t)-j})}}\right ) e^{-\sum_{j=0}^{qp'} \omega_{1_{\sigma(\iota_{g(t)-j})}} \tau_D} \notag \\
    & e^{- \left ( \beta_{\sigma(\iota^+_{g(t)})} + \beta_{\sigma(\iota^+_{g(t)-q_{m(t)-1}})} + \cdots + \beta_{\sigma (\iota^+_{g(t)-qp'})}\right )\tau_F } .
\end{align}
Now
\begin{align}
    \frac{a_{p+1}}{a_p}= &\frac{\overline{\alpha}_{\sigma(\iota^+_{g(t)-qp})} \cdots \overline{\alpha}_{\sigma(\iota^+_{g(t)-(qp'-1)})}}{\underline{\alpha}_{\sigma(\iota^+_{g(t)-qp})} \cdots \underline{\alpha}_{\sigma(\iota^+_{g(t)-(qp'-1)})}} e^{-\left (\omega_{1_{\sigma(\iota^+_{g(t)-(qp+1)})}}+ \cdots+ \omega_{1_{\sigma(\iota^+_{g(t)-qp'})}} \right) \tau_D} \notag \\
    & e^{-\beta_{\sigma(\iota^+_{g(t)-qp'})} \tau_F} \notag \\
    =& \frac{\overline{\alpha}_{\sigma(\iota^+_{g(t)-qp})}}{\underline{\alpha}_{\sigma(\iota^+_{g(t)-qp})}} e^{-\omega_{1_{\sigma(\iota^+_{g(t)-(qp+1)})}}\tau_D} \cdots \frac{\overline{\alpha}_{\sigma(\iota^+_{g(t)-(qp'-1)})}}{\underline{\alpha}_{\sigma(\iota^+_{g(t)-(qp'-1)})}} e^{-\omega_{1_{\sigma(\iota^+_{g(t)-qp'})}}\tau_D} \notag \\
    & e^{-\beta_{\sigma(\iota^+_{g(t)-qp'})} \tau_F} .\label{eq:ratio_fsdos}
\end{align}
Now we can say from \eqref{eq:convergance_1}
\begin{align}
     \lim_{p \rightarrow \infty} & \Biggl | \frac{\overline{\alpha}_{\sigma(\iota^+_{g(t)-qp})}}{\underline{\alpha}_{\sigma(\iota^+_{g(t)-qp})}} e^{-\omega_{1_{\sigma(\iota^+_{g(t)-(qp+1)})}}\tau_D} \cdots \frac{\overline{\alpha}_{\sigma(\iota^+_{g(t)-(qp'-1)})}}{\underline{\alpha}_{\sigma(\iota^+_{g(t)-(qp'-1)})}} e^{-\omega_{1_{\sigma(\iota^+_{g(t)-qp'})}}\tau_D} \notag \\
    & e^{-\beta_{\sigma(\iota^+_{g(t)-qp'})} \tau_F} \Biggr| \notag \\
    \leq & \lim_{p \rightarrow \infty} \left | \frac{\overline{\alpha}_{\sigma(\iota^+_{g(t)-qp})}}{\underline{\alpha}_{\sigma(\iota^+_{g(t)-qp})}} e^{-\omega_{1_{\sigma(\iota^+_{g(t)-(qp+1)})}}\tau_D} \right | \cdots \left | \frac{\overline{\alpha}_{\sigma(\iota^+_{g(t)-(qp'-1)})}}{\underline{\alpha}_{\sigma(\iota^+_{g(t)-(qp'-1)})}} e^{-\omega_{1_{\sigma(\iota^+_{g(t)-qp'})}}\tau_D}\right | \notag \\
    & \left |e^{-\beta_{\sigma(\iota^+_{g(t)-qp'})} \tau_F} \right | <1,
\end{align}
because from \eqref{eq:FSDoS_condition}, we can say $\beta_{\sigma(\iota^+_{g(t)-qp'})}>0$ and from Assumption \ref{as2}, $\tau_F> \underline{\Delta}$. If $e^{-\beta_{\sigma(\iota^+_{g(t)-qp'})} \tau_F}<1$ then equation \eqref{eq:FSDoS_condition} and Assumption \ref{as2} are true. Therefore, the upper bound of the sum term of \eqref{eq:sum_fsdos_6} is convergent. So it is bounded from above. Hence, \eqref{eq:sum_fsdos_1} is also bounded from above. Consider the upper bound is $G_2$. \hfill $\blacksquare$ \\
In the same way as in Lemma \ref{lm:upper_bound_sum}, we can prove that
\begin{align}
    & \frac{1}{\underline{\alpha}_{\sigma(\iota^+_{g(t)})}}\mathlarger{\sum}_{\substack{{l \in \mathbb{N};} \\ {0 \leq \iota_{g(t)-l} < \iota_{g(t)}}}} \Biggl \{ \nu_{1_{\sigma(\iota^+_{g(t)-l})}} \mathlarger{\prod}_{\substack{{j \in \mathbb{N}_0;} \\ {\iota_{g(t)-l} < \iota_{g(t)-j} \leq \iota_{g(t)}}}} \left ( \frac{\overline{\alpha}_{\sigma(\iota^+_{g(t)-j})}}{\underline{\alpha}_{\sigma(\iota^+_{g(t)-(j+1)})}} \right ) \notag \\
    & e^{-(\sum_{i=1}^{n_s} \omega_{1_i} |(\mathcal{O}_i \cap \overline{\Upsilon})(\iota_{g(t)-(l-1)},t)|)} e^{ (\sum_{i=1}^{n_s} \omega_{2_i} |(\mathcal{O}_i \cap \overline{\Omega}) (\iota_{g(t)-l},t)|)} \Biggr \}
\end{align}
is bounded from above. Let take the upper bound is $G_3$.
\begin{lemma}\label{lm:convergent_fsdos}
    Under the Assumption \ref{as1}, \ref{as2}, and \ref{as3}, the first term of \eqref{eq:trajectory_fsdos_14}
    \begin{align}
        \prod_{\substack{f \in \mathbb{N}_0; \\ 0 \leq \iota_{g(t)-f} \leq t}} \left (\frac{ \overline{\alpha}_{\sigma(\iota^+_{g(t)-f})}}{ \underline{\alpha}_{\sigma(\iota^+_{g(t)-f})}} \right ) e^{-\sum_{i=1}^{n_s} \omega_{1_i}|(\mathcal{O}_i\cap\overline{\Upsilon})(\iota_0,t)|} e^{\sum_{i=1}^{n_s} \omega_{2_i}|(\mathcal{O}_i\cap \overline{\Omega})(\iota_0,t)|} \label{eq:prod_fsdos_1}
    \end{align}
    is convergent in nature.
\end{lemma}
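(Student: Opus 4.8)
The plan is to mirror the proof of Lemma~\ref{lm:upper_bound_sum}. The first step is to split the $\overline{\Upsilon}$-exponent along the partition of $\mathcal{O}_i(\iota_0,t)$ into $(\mathcal{O}_i\cap\overline{\Upsilon})(\iota_0,t)$ and $(\mathcal{O}_i\cap\overline{\Omega})(\iota_0,t)$ (these partition $[\iota_0,t]$ by Lemma~\ref{lm:actuator_delay}), which gives
\[
e^{-\sum_{i=1}^{n_s}\omega_{1_i}|(\mathcal{O}_i\cap\overline{\Upsilon})(\iota_0,t)|}=e^{-\sum_{i=1}^{n_s}\omega_{1_i}|\mathcal{O}_i(\iota_0,t)|}\,e^{\sum_{i=1}^{n_s}\omega_{1_i}|(\mathcal{O}_i\cap\overline{\Omega})(\iota_0,t)|}.
\]
Hence \eqref{eq:prod_fsdos_1} equals the MCDoS-type product $\prod_{f}\big(\overline{\alpha}_{\sigma(\iota^+_{g(t)-f})}/\underline{\alpha}_{\sigma(\iota^+_{g(t)-f})}\big)e^{-\sum_{i}\omega_{1_i}|\mathcal{O}_i(\iota_0,t)|}$, which is convergent by Lemma~\ref{lm:convergence} under \eqref{eq:convergance_1}--\eqref{eq:varkappa}, multiplied by the FSDoS factor $e^{\sum_{i}(\omega_{1_i}+\omega_{2_i})|(\mathcal{O}_i\cap\overline{\Omega})(\iota_0,t)|}$. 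Mere convergence of the first factor does not suffice: one must exhibit a strictly positive exponential decay rate for it and show this rate beats the growth rate of the FSDoS factor.

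To that end I would regroup the switch points along the stable/unstable decomposition of Lemma~\ref{lm:actuator_delay}, in the notation of the proof of Theorem~\ref{th:2} (each stable interval $Y_m$ carries $q_m$ switch points, each $W_m$ the actuation delay). Inside $Y_m$ only the $\omega_{1}$-decay is active, and between consecutive switches a mode persists for at least $\underline{\Delta}$ by \eqref{eq:zeno_1}; arguing exactly as in Lemma~\ref{lm:convergence}, condition \eqref{eq:convergance_1} — which bounds $\tau_D$ below by $\max_{\sigma}\frac{\max\{\lambda_{\max}(P_{p_\sigma}),\lambda_{\max}(P_{e_\sigma})\}}{\lambda_{\min}(\Gamma_{1_\sigma})}\ln\frac{\max\{\lambda_{\max}(P_{p_\sigma}),\lambda_{\max}(P_{e_\sigma})\}}{\min\{\lambda_{\min}(P_{p_\sigma}),\lambda_{\min}(P_{e_\sigma})\}}$ with a margin uniform in $\sigma$ — together with Assumption~\ref{as1} forces the $q_m$ ratio factors $\overline{\alpha}_{\sigma}/\underline{\alpha}_{\sigma}$ accrued over $Y_m$ to be over-matched by the decay accumulated over $\sum_i|(\mathcal{O}_i\cap Y_m)|=|Y_m|$, leaving a net decay at rate essentially $\omega_{1_\sigma}$ (up to a constant collecting the $\varkappa$-slack of \eqref{eq:varkappa}). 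On the adjacent $W_m$ only growth occurs, bounded by a factor of the form $e^{(\sum_i(\omega_{1_i}+\omega_{2_i}))|\overline{\Omega}(\tau,t)|}$, and the actuation-delay estimate $|\overline{\Omega}(\tau,t)|\le\zeta_*+(t-\tau)/T_*$ from the proof of Lemma~\ref{lm:upper_bound_sum} caps its growth rate at $(\omega_{1_\sigma}+\omega_{2_\sigma})/T_*$.

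Combining the two estimates, each block $W_m\cup Y_m$ contributes a factor bounded by a constant times $e^{-\beta_{\sigma}\tau_F}$, where $\beta_{\sigma}=\omega_{1_\sigma}-(\omega_{1_\sigma}+\omega_{2_\sigma})/T_*>0$ is exactly what \eqref{eq:FSDoS_condition} guarantees, times the per-switch decay $e^{-\omega_{1_\sigma}\tau_D}$; since by Assumption~\ref{as2} consecutive blocks are separated by at least $\tau_F$ and by \eqref{eq:zeno_1} switch points by at least $\underline{\Delta}$, the block-indexed partial products form a geometric-type sequence whose ratio is strictly below one. A ratio test on these partial products — identical in form to the one at the end of the proof of Lemma~\ref{lm:upper_bound_sum} — then yields that \eqref{eq:prod_fsdos_1} is convergent (its partial products tend to a finite limit, in fact $0$). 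The main obstacle is the bookkeeping in the regrouping step: one has to cancel each ratio factor against the decay accrued over exactly the matching mode sub-interval while juggling the two dwell-time assumptions (Assumption~\ref{as1} for observer switches, Assumption~\ref{as2} for FSDoS activations) at once — precisely why \eqref{eq:convergance_1} carries a $\max_{\sigma}$, which supplies a margin uniform in $\sigma$.
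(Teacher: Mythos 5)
Your proposal is correct and follows essentially the same route as the paper: it decomposes \eqref{eq:prod_fsdos_1} into the MCDoS decay factor and an FSDoS growth factor, bounds the latter via $|\overline{\Omega}(\tau,t)|\le\zeta_*+(t-\tau)/T_*$, uses the dwell-time lower bounds from Assumptions \ref{as1} and \ref{as2} to extract the per-switch factor $\frac{\overline{\alpha}_\sigma}{\underline{\alpha}_\sigma}e^{-\omega_{1_\sigma}\tau_D}$ and the per-FSDoS-block factor $e^{-\beta_\sigma\tau_F}$ with $\beta_\sigma>0$ guaranteed by \eqref{eq:FSDoS_condition}, and concludes by a ratio test — exactly the bounds \eqref{eq:sum_fsdos_2}--\eqref{eq:sum_fsdos_6} and the ratio-test argument the paper invokes.
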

$Proof$ $of$ $Lemma$ $\ref{lm:convergent_fsdos}:$
Using \eqref{eq:sum_fsdos_2}$-$\eqref{eq:sum_fsdos_6}, upper bound of \eqref{eq:prod_fsdos_1} is
\begin{align}
    & e^{\left (\omega_{1_{\sigma(\iota^+_{g(t)})}} +\omega_{2_{\sigma(\iota^+_{g(t)})}} \right ) \zeta_*} e^{\omega_{1_{\sigma(\iota^+_{g(t)})}} \tau_D \varkappa} e^{\beta_{\sigma(\iota^+_{g(t)})} \tau_F \eta} \prod_{\substack{f \in \mathbb{N}_0; \\ 0 \leq \iota_{g(t)-f} \leq t}} \left (\frac{ \overline{\alpha}_{\sigma(\iota^+_{g(t)-f})}}{ \underline{\alpha}_{\sigma(\iota^+_{g(t)-f})}} \right ) \notag \\
    & e^{-\sum_{j=0}^f \omega_{1_{\sigma(\iota^+_{g(t)-j})}} \tau_D} e^{- \sum_{\substack{k \in \mathbb{N}_0; \\ q \coloneqq \sum_{s=0}^k q_{m(t)-s}\leq f}} \beta_{\sigma(\iota^+_{g(t)-q})} \tau_F}. \label{eq:prod_fsdos_2}
\end{align}
If we can prove that \eqref{eq:prod_fsdos_2} is convergent, then from the direct comparison test, we can say \eqref{eq:prod_fsdos_1} is also convergent. We use a ratio test to prove that \eqref{eq:prod_fsdos_2} is convergent. $p$th term of \eqref{eq:prod_fsdos_2} by avoiding constants are 
\begin{align}
    b_p \coloneqq \prod_{\substack{f \in \mathbb{N}_0; \\ \iota_{g(t)-p} < \iota_{g(t)-f} \leq t}} \left (\frac{ \overline{\alpha}_{\sigma(\iota^+_{g(t)-f})}}{ \underline{\alpha}_{\sigma(\iota^+_{g(t)-f})}} \right ) e^{-\sum_{j=0}^f \omega_{1_{\sigma(\iota^+_{g(t)-j})}} \tau_D} e^{- \sum_{\substack{k \in \mathbb{N}_0; \\ q \leq f}} \beta_{\sigma(\iota^+_{g(t)-q})} \tau_F}.
\end{align}
Hence,
\begin{align}
    b_{p+1} \coloneqq \prod_{\substack{f \in \mathbb{N}_0; \\ \iota_{g(t)-(p+1)} < \iota_{g(t)-f} \leq t}} \left (\frac{ \overline{\alpha}_{\sigma(\iota^+_{g(t)-f})}}{ \underline{\alpha}_{\sigma(\iota^+_{g(t)-f})}} \right ) e^{-\sum_{j=0}^f \omega_{1_{\sigma(\iota^+_{g(t)-j})}} \tau_D} e^{- \sum_{\substack{k \in \mathbb{N}_0; \\ q \leq f}} \beta_{\sigma(\iota^+_{g(t)-q})} \tau_F}.
\end{align}
First, consider FSDoS does not exist in the interval $[\iota_{g(t)-p}, \iota_{g(t)-(p-1)}]$. Therefore,
\begin{align}
    \frac{b_{p+1}}{b_p}= \frac{\overline{\alpha}_{\sigma(\iota^+_{g(t)-p})}}{\underline{\alpha}_{\sigma(\iota^+_{g(t)-p})}} e^{-\omega_{1_{\sigma(\iota^+_{g(t)-p})}} \tau_D}.
\end{align}
Hence \eqref{eq:convergance_1} yields,
\begin{align}
    \lim_{p \rightarrow \infty} \left | \frac{\overline{\alpha}_{\sigma(\iota^+_{g(t)-p})}}{\underline{\alpha}_{\sigma(\iota^+_{g(t)-p})}} e^{-\omega_{1_{\sigma(\iota^+_{g(t)-p})}} \tau_D}\right | <1.
\end{align}
Now consider FSDoS exists in the interval $[\iota_{g(t)-p}, \iota_{g(t)-(p-1)}]$. Therefore,
\begin{align}
    \frac{b_{p+1}}{b_p}= \frac{\overline{\alpha}_{\sigma(\iota^+_{g(t)-p})}}{\underline{\alpha}_{\sigma(\iota^+_{g(t)-p})}} e^{-\omega_{1_{\sigma(\iota^+_{g(t)-p})}} \tau_D} e^{-\beta_{\sigma(\iota^+_{g(t)-p})} \tau_F}.
\end{align}
Hence,
\begin{align}
    & \lim_{p \rightarrow \infty} \left |\frac{\overline{\alpha}_{\sigma(\iota^+_{g(t)-p})}}{\underline{\alpha}_{\sigma(\iota^+_{g(t)-p})}} e^{-\omega_{1_{\sigma(\iota^+_{g(t)-p})}} \tau_D} e^{-\beta_{\sigma(\iota^+_{g(t)-p})} \tau_F} \right | \notag \\
    & \leq \lim_{p \rightarrow \infty} \left \{\left | \frac{\overline{\alpha}_{\sigma(\iota^+_{g(t)-p})}}{\underline{\alpha}_{\sigma(\iota^+_{g(t)-p})}} e^{-\omega_{1_{\sigma(\iota^+_{g(t)-p})}} \tau_D}\right | \left | e^{-\beta_{\sigma(\iota^+_{g(t)-p})} \tau_F} \right | \right \} <1
\end{align}
$\forall \beta_{\sigma} \in \mathbb{R}_{>0}, \tau_F \in \mathbb{R}_{\geq \underline{\Delta}},$ which concludes that \eqref{eq:prod_fsdos_1} is convergent in nature. \hfill $\blacksquare$\\
Hence from \eqref{eq:trajectory_fsdos_14}, we readily get
\begin{align}
    \|x(t)\| \leq & \sqrt{\frac{\overline{\alpha}_{\sigma(\iota^+_0)} \cdots \overline{\alpha}_{\sigma(\iota^+_{g(t)})}}{\underline{\alpha}_{\sigma(\iota^+_0)} \cdots \underline{\alpha}_{\sigma(\iota^+_{g(t)})}}} e^{-\frac{1}{2} \sum_{i=1}^{n_s} \omega_{1_i}| ( \mathcal{O}_i \cap \overline{\Upsilon} )(\iota_0,t)|} e^{\frac{1}{2} \sum_{i=1}^{n_s} \omega_{2_i} |(\mathcal{O}_i \cap \overline{\Omega})(\iota_0,t)|} \|x(\iota_0)\| \notag \\
    & + \sqrt{G_2 +G_3} \|w_t\|_\infty, \label{eq:trajectory_fsdos_final}
\end{align}
$\forall t \in [0,\varphi_{m(t)}+v_{m(t)}).$ Hence, \eqref{eq:trajectory_fsdos_final} is ISS. \hfill $\blacksquare$

 \bibliographystyle{elsarticle-num} 
 \bibliography{Ref.bib}





\end{document}